\newwrite\accuwrite\immediate\openout\accuwrite\jobname.acc
\def\dotheappendixmagic{\immediate\closeout\accuwrite\input\jobname.acc}
\newwrite\accuwrite\immediate\openout\accuwrite\jobname.acc
 \renewenvironment{accumulate}{}{}
\newenvironment{onlynoaccum}{}{}
\spnewtheorem{obs}[theorem]{Observation}{\bf}{\it}
\spnewtheorem{conj}[theorem]{Conjecture}{\bf}{\it}
\def\@fnsymbol#1{\ensuremath{\ifcase#1\or\star\or{\star\!\star}\or
   {\star\!\!\star\!\!\star}\or \dagger\or \ddagger\or
   \mathchar "278\or \mathchar "27B\or \|\or **\or \dagger\dagger
   \or \ddagger\ddagger \else\@ctrerr\fi}}
\def\ca#1{{\cal #1}}
\newcommand{\KK}{\ensuremath{{K}_7 - C_4}\xspace}
\newcommand{\DD}{\ensuremath{\mathcal{D}_3}\xspace}
\newcommand{\EEE}{\ensuremath{\mathcal{E}_5}\xspace}
\newcommand{\FF}{\ensuremath{\mathcal{F}_1}\xspace}
\newcommand{\KKK}{\ensuremath{K_{1,2,2,2}}\xspace}
\newcommand{\EE}{\ensuremath{\mathcal{E}_2}\xspace}
\newcommand{\CC}{\ensuremath{\mathcal{C}_4}\xspace}
\begin{document}
% \title{New Developments in Planar Emulators}
\title{How Not to Characterize Planar-emulable Graphs}
\author{Markus Chimani\inst{1}\thanks{M.~Chimani has been funded by a Carl-Zeiss-Foundation juniorprofessorship.}
	\and Martin Derka\inst{2}\thanks{M.~Derka has been supported by
	 Masaryk University internal grant for students.}
	\and Petr Hlin\v{e}n\'y\inst{2}\thanks{Supported by the Czech
	 science foundation; grants P202/11/0196 and GIG/11/E023.}
	\and Mat\v{e}j Klus\'a\v{c}ek\inst{2}${}^{\mbox{\scriptsize\thefootnote}}$}
\institute{Algorithm Engineering, Friedrich-Schiller-University Jena, Germany\\
\email{markus.chimani@uni-jena.de} \and
Faculty of Informatics, Masaryk University Brno, Czech Republic\\
\email{[hlineny, xderka, xklusac1]@fi.muni.cz}
% \\[2ex]--- WORKING PAPER, \today\ ---
%\\[1ex]\scriptsize\today
}
\maketitle

\begin{abstract}
We investigate the question of which graphs have {\em planar emulators}
(a locally-surjective homomorphism from some finite planar graph)---%
a problem raised already in Fellows' thesis (1985) and conceptually related to
the better known planar cover conjecture by Negami (1986).
For over two decades, the planar emulator problem lived poorly in a shadow of
Negami's conjecture---which is still open---as the two were considered
equivalent.
But, in the end of 2008, a surprising construction by Rieck and Yamashita 
falsified the natural ``planar emulator conjecture'', 
and thus opened a whole new research field.
We present further results and constructions which show
how far the planar-emulability concept is from planar-coverability,
and that the traditional idea of likening it to projective embeddability
is actually very out-of-place.
We also present several positive partial characterizations
of planar-emulable graphs.
\end{abstract}

\section{Introduction}
\label{sec:intro}

A graph $G$ has a {\em planar emulator (cover)} $H$ if $H$ is a
finite planar graph and there exists a homomorphism from $H$ onto $G$ that is locally
surjective (bijective, respectively).
In such a case we also say that $G$ is {\em planar-emulable (-coverable)}.
See Def.~\ref{def:emulator} for a precise
definition, and Fig.~\ref{fig:emulcoverexample} for a simple example. 
Informally, every vertex of $G$ is {\em represented} by one or more
vertices in $H$ such that the following holds:
Whenever two nodes $v$ and $u$ are adjacent in~$G$, any node representing $v$ in $H$
has \emph{at least one} (in case of an emulator) or \emph{exactly one} (in case
of a cover) adjacent node in~$H$ that represents~$u$.
Conversely, no node representing $v$ in $H$ has a neighbor representing $u$
if $v,u$ are nonadjacent in~$G$.

Coarsely speaking, the mutually similar concepts of planar covers and planar
emulators both ``preserve'' the local structure of a graph $G$ while ``gaining''
planarity for it. Of course, the central question is which nonplanar graphs do
have planar covers or emulators.

\begin{figure}
% TODO: litte picture\\
% who does the figures? if any questions just ask!\\
% i was thinking about a tiny example (possible merged with the 2-fold picture??):
% just very few nodes, the emulator should show its difference to covers; both the cover\&emulator need not to be minimal!!
$$
\raise 3mm\hbox{\includegraphics[height=18mm]{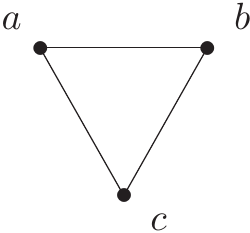}}
\qquad\qquad
\includegraphics[height=23mm]{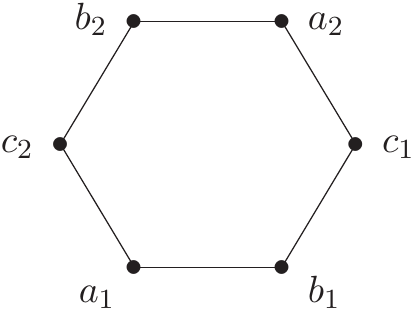}
\qquad\qquad
\lower 2mm\hbox{\includegraphics[height=30mm]{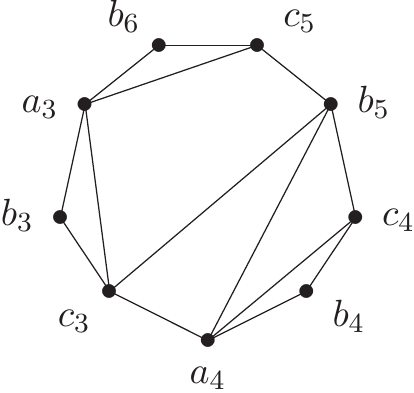}}
\vspace{-1ex}$$
% {\small\sl These pictures are taken from old PH's slides, but everybody is
% welcome to draw nicer pictures later on....}
\caption{%
Examples of a planar cover (center) and a planar emulator (right) of the triangle
$G=K_3$ (left).
We simply denote by $a_j$, $j=1,2,\dots$ the vertices representing $a$ of
$G$, and analogically with $b,c$.
}
% A graph $G$ (left), a possible planar cover (center), and a possible planar emulator (right). The mapping is represented by numbering the nodes of 
% $G$ and labeling the nodes in the cover/emulator correspondingly.}
\label{fig:emulcoverexample}
\end{figure}

The two concepts emerged independently from works of
Fellows~\cite{cit:fellows,cit:femul} (emulator) and 
Negami~\cite{cit:double,cit:conjecture,cit:negsinica} (cover).
On the one hand, the class of planar-coverable graphs is relatively well
understood. At least, we have the following:

\begin{conj}[Negami \cite{cit:conjecture}, 1988]
\label{conj:negami}
A graph has a (finite) planar cover if and only if it embeds in the projective plane.
\end{conj}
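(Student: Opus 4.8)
The statement is an equivalence whose two directions have entirely different characters, so I would treat them separately. The \emph{``if''} direction is a short topological argument. Suppose $G$ embeds in the projective plane $\mathbb{RP}^2$, and recall that the sphere $S^2$ is its orientable double cover, carrying a fixed-point-free $2$-to-$1$ covering map $\pi\colon S^2\to\mathbb{RP}^2$ that is a local homeomorphism. I would pull the embedded copy of $G$ back along $\pi$: every vertex and every edge has exactly two preimages and all local incidences are preserved, so $H:=\pi^{-1}(G)$ is a graph embedded in $S^2$ equipped with a natural map $H\to G$ that is two-to-one. Since embeddability in $S^2$ coincides with planarity, $H$ is planar; and since $\pi$ is a local homeomorphism, the induced map restricts to a bijection on the neighbourhood of each vertex, i.e.\ it is a (planar) cover. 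Hence every projective-planar graph is planar-coverable.

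For the converse the plan is to exploit finiteness. The essential preliminary is that the class of planar-coverable graphs is closed under taking minors: an edge deletion in $G$ is mirrored by deleting all its lifts in a cover $H$, while an edge contraction is mirrored by contracting the perfect matching formed by its lifts (each lift of one endpoint has exactly one neighbour among the lifts of the other), and both operations preserve planarity and the covering property. Granting this, the Robertson--Seymour theorem yields a \emph{finite} set of minor-minimal graphs that are not planar-coverable. The projective plane likewise has a finite obstruction set, namely its $35$ minor-minimal non-embeddable graphs. Since the ``if'' direction already shows that every projective-planar graph is planar-coverable, each minor-minimal coverless graph must be non-projective-planar and hence must contain one of these $35$ obstructions as a minor. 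The whole conjecture thus reduces to a \emph{finite} verification: it suffices to prove that each of the $35$ projective-plane obstructions fails to have a planar cover, for then every non-projective-planar graph contains a coverless minor and is, by minor-closedness, itself coverless.

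The hard part is exactly this verification, and it is where the problem remains open. One cannot refute the existence of a planar cover of a fixed graph by enumeration, since a cover may be arbitrarily large; instead one must reason structurally about how a hypothetical cover would have to wrap around each vertex, typically pitting the rigid local structure forced by the covering map against Euler's formula or a discharging argument to reach a contradiction. Most of the $35$ obstructions can be dispatched by connectivity and apex reductions---disconnected or poorly connected obstructions, and graphs that are only small perturbations of a few hard cores---and the cumulative effect of these reductions is to distil the entire conjecture down to a single stubborn graph, $\KKK$. The crux I would ultimately have to confront is therefore to show that $\KKK$ admits no planar cover: the one case that has so far withstood every structural attack.
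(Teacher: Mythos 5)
The statement you are asked to prove is an open conjecture: the paper offers no proof of it, only the historical reduction recorded as Theorem~\ref{thm:AFHN}, and your proposal does not close the gap either. Your ``if'' direction is sound and matches the paper's Lemma~\ref{lem:2fold}(b): lifting a projective embedding through the $2$-to-$1$ covering $S^2\to\mathbb{RP}^2$ yields a two-fold planar cover. Your reduction of the converse is also the standard one --- minor-closedness of planar-coverability (the cover analogue of Proposition~\ref{prop:clminor}, where contraction acts on the perfect matching of lifted edges), the $35$ projective forbidden minors of Archdeacon (\figurename~\ref{fig:32nonproj}), and the observation that it suffices to show each connected obstruction has no planar cover. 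The invocation of Robertson--Seymour is superfluous here: you never need a finite obstruction set for coverability itself, only the finite projective obstruction set, which is known explicitly.

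The genuine gap is your final step. You correctly identify that everything reduces to showing \KKK has no planar cover, and then you stop --- but that is precisely the content of Theorem~\ref{thm:AFHN}, i.e.\ the state of the art since the 1990s, and it remains unresolved. ``The crux I would ultimately have to confront'' is not a proof of the crux; no argument is supplied for why \KKK admits no planar cover, and none is known (indeed, the fact that \KKK \emph{does} admit a planar \emph{emulator}, Theorem~\ref{thm:rieck} and \figurename~\ref{fig:K1222emul}, shows that the discharging and structural techniques you allude to cannot distinguish covers from emulators without substantial extra input). So what you have written is an accurate survey of the known partial progress toward Conjecture~\ref{conj:negami}, not a proof of it; as a proof attempt it fails at the single step on which the whole conjecture hinges.
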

Yet, this natural (see below) and firmly believed conjecture is still open today
despite of more than 20 years of intensive research.
See \cite{cit:20years} for a recent survey.

On the other hand, 
it was no less natural to assume \cite{cit:fellows,cit:femul} that
the property of being planar-emulable coincides with 
planar-coverability. By definition, the latter immediately implies the former.
For the other direction, it was highly counterintuitive
to assume that, having more than one neighbors in $H$ representing the same
adjacent vertex of $G$,
could ever help to gain planarity of~$H$\,---such ``additional'' 
edges seem to go against Euler's bound on the number of edges of a planar graph.
Hence, it was widely believed:

\begin{conj}[Fellows \cite{cit:femul}, 1988, falsified 2008]
\label{conj:fellows}
A graph has a (finite) planar emulator if and only if it embeds in the projective plane.
\end{conj}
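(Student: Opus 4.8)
The plan is to attack the two directions of the biconditional separately, since they are of very different character. The reverse implication is the soft one: if a graph $G$ embeds in the projective plane, then $G$ has a finite planar cover, obtained by lifting the embedding along the two-sheeted covering of the projective plane by the sphere, so that the planar lift of $G$ is a double cover. A planar cover is in particular a planar emulator (every covering map is locally surjective because it is locally bijective), so projective embeddability immediately yields a planar emulator. Thus this direction costs essentially nothing beyond quoting the classical easy half of Negami's result together with the trivial observation, already noted in the excerpt, that coverability implies emulability.

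The whole weight of the statement therefore sits in the forward implication: that a planar-emulable graph must embed in the projective plane. Here the natural strategy mirrors exactly what one does for covers. First I would verify that the class of planar-emulable graphs is closed under taking minors, so that by the Robertson--Seymour theory it is characterised by a \emph{finite} list of forbidden minors. The projective plane itself has a known obstruction set, namely the $35$ minor-minimal non-projective graphs. To prove the conjecture it would then suffice to show that each of these $35$ graphs fails to have any finite planar emulator; minor-closedness together with the reverse direction would then force the emulator obstruction set to coincide with the projective one.

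The main obstacle---and, since the statement is flagged as \emph{falsified}, the precise point at which any such proof must collapse---is exactly this last step: ruling out planar emulators for the minor-minimal non-projective graphs. The only heuristic ever offered for it is an Euler-type count: an emulator may place several neighbours of a fixed lift over the same vertex of $G$, and these ``extra'' edges appear to overload the planar density bound $|E|\le 3|V|-6$. But this is only a heuristic, not a discharging or counting argument that actually goes through, because the duplicated neighbours also duplicate vertices, so the edge-to-vertex ratio need not be violated at all. I would expect to get stuck trying to make such a count rigorous precisely for graphs like \KKK or \KK, and the resolution is that no such argument can exist: Rieck and Yamashita exhibit explicit finite planar emulators for exactly these non-projective graphs. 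Hence the honest outcome of the plan is not a proof of the conjecture but a demonstration that the forbidden-minor step is false---which is the very starting point for the finer classification pursued in the remainder of the paper.
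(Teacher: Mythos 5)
You correctly treat this statement as a falsified conjecture rather than a theorem: the backward direction is exactly the paper's Lemma~\ref{lem:2fold} (a projective graph has a two-fold planar cover, and every planar cover is a planar emulator), and your diagnosis of where the forward direction must and does collapse---the impossibility of excluding planar emulators for the minor-minimal non-projective graphs, witnessed by the Rieck--Yamashita emulators for $K_{1,2,2,2}$ and $K_{4,5}-4K_2$ (Theorem~\ref{thm:rieck})---is precisely the paper's own account. No gap; this matches the paper's treatment, which goes on (Theorem~\ref{thm:main}) to enlarge the list of non-projective planar-emulable counterexamples.
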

Perhaps due to similarity to covers, no significant effort to
specifically study planar-emulable graphs occurred during the next 20 years after Fellows' manuscript~\cite{cit:femul}.

Today, however, we know of one important difference between the two
cases: Conjecture~\ref{conj:fellows} is false!
In 2008, Rieck and Yamashita~\cite{cit:rieck} proved
the truly unexpected breakthrough result that there are graphs which have 
planar emulators, but no planar covers and do not embed in the projective plane;
see Theorem~\ref{thm:rieck}.
This finding naturally ignited a new research direction, on which we report
herein. 
We show that the class of planar-emulable graphs is, in fact, 
\emph{much larger} than the class of planar-coverable ones; that the concept of
projective embeddability seems very out-of-place in the context of planar
emulators; and generally, how poorly planar emulators are yet understood.

\medskip
Apart from its pure graph theoretic appeal, research regarding planar emulators and
covers may in fact have algorithmic consequences as well: While Negami's main
interest \cite{cit:double} was of pure graph theoretic nature, Fellows 
\cite[and personal communication]{cit:fellows} considered
computing motivation for emulators.
% nonplanar networks in a 2-dimensional setting.
% wanted to simulate high
% connectivity between CPU cores in a 2-dimensional, and hence engineering-wise
% more tractable, setting\footnote{personal communication}. 
Additionally, we would like to sketch another potential algorithmic connection; 
there are problems that are NP-hard for general graphs, 
but polynomial-time solvable for planar graphs (e.g., maximum cut), or
where the polynomial complexity drops when considering planar graphs (e.g.,
maximum flow). 
Yet, the precise breaking point is usually not well understood.
Considering such problems for planar-emulable or planar-coverable graphs may
give more insight into the problems' intrinsic complexities. Before this can be
investigated, however, these classes first have to be reasonably well understood
themselves. 
Our paper aims at improving upon this latter aspect of planar emulators.

This paper is organized as follows:
Section~\ref{sec:emulators} discusses all the major prior findings 
w.r.t.\ covers and emulators, including the aforementioned result by Rieck and Yamashita.
Then, Theorem~\ref{thm:main} presents our main new improvement.
Section~\ref{sec:basic} reviews some necessary basic properties and tools,
most of which have been previously sketched in~\cite{cit:femul}.
In Section~\ref{sec:construct} we give previously unknown emulator
constructions, proving Theorem~\ref{thm:main} and also
showing how unrelated emulators are from covers. 
We would particularly like to mention a very small and nicely-structured emulator of
the notoriously difficult graph $K_{1,2,2,2}$ in
\figurename~\ref{fig:K1222emul}.
Finally, in Section~\ref{sec:forbminors} we study how far one can get in the
pursuit to characterize planar-emulable graphs with the structural tools previously
used in \cite{cit:counterex} for covers, and where the current limits are.

% In the following Section~\ref{sec:emulators}, 
% we will define the graph theoretic concepts more precisely 
% and discuss the main prior 
% findings w.r.t. covers and emulators, including the aforementioned result by Rieck and Yamashita.
% In Section~\ref{sec:basic}, we will establish some basic properties and tools,
% most of which have been previously sketched in~\cite{cit:femul}.\say{this may have to be reformulated according to the content of fellows' manusscript and our final content of this section.}
% In Section~\ref{sec:construct}, we present 
% new, previously unknown emulators, 
% showing how unrelated emulators are from projective graphs. 
% Thereby, we devise a common construction scheme for most of 
% our counterexamples, based on Archimedean solids.
% Based on our explicit manual constructions, we are furthermore able to devise
% an algorithmic enumeration scheme, described in Section~\ref{sec:algo},
% to prove further properties of emulators, towards a possible new conjecture
% for a nice description of all planar-emulable graphs.

%%%%%%%%%%%%%%%%%%%%%%%%%%%%%%%%%%%%%%%%%%%%%%%%%%%%%%%%%%%%%%%%%%%%%%%%
\section{On Planar Covers and Emulators}
\label{sec:emulators}

\begin{figure}[tb]\centering
$G=K_5$
\includegraphics[height=33mm]{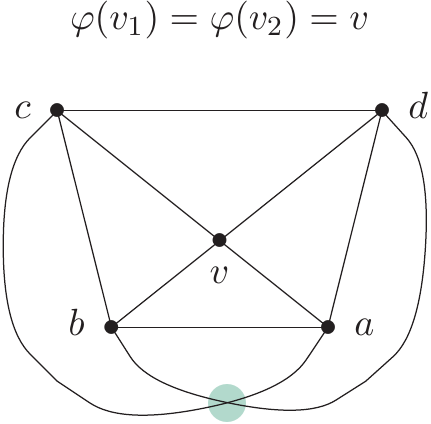}
\qquad\raise 12mm\hbox{\normalsize\boldmath$\longleftarrow\atop\varphi$}\qquad
\includegraphics[height=37mm]{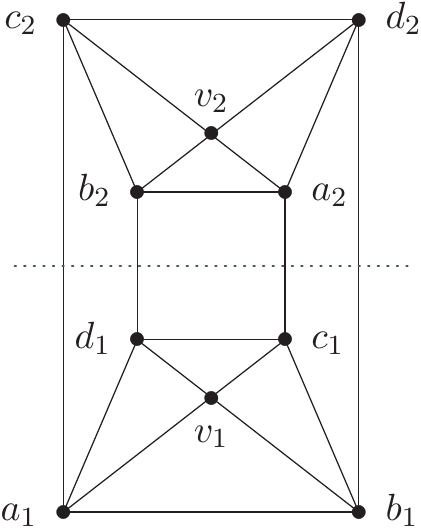}\qquad~
\caption{The graph $G=K_5$ (left) and its two-fold planar cover (right) via
	a homomorphism~$\varphi$.
	The cover is obtained for a ``crosscap-less'' drawing of $G$ and its
	 mirror image.}
\label{fig:coverK5}
\end{figure}

We restate the problem on a more formal level.
All considered graphs are simple, finite, and undirected.
A {\em projective plane} is the simplest nonorientable surface---a plane
with one crosscap (informally, a place in which a bunch of selected edges 
of an embedded  graph may ``cross'' each other).
A graph {\em homomorphism} of $H$ into $G$ is a mapping
$h:V(H)\to V(G)$ such that, for every edge $\{u,v\}\in E(H)$,
we have $\{h(u),h(v)\}\in E(G)$.

\begin{definition}\label{def:emulator}
A graph $G$ has a {\em planar emulator (cover)} $H$ if $H$ is a planar finite graph 
and there exists a graph homomorphism $\varphi:V(H)\to V(G)$ such that,
for every vertex $v\in V(H)$, the neighbors of $v$ in $H$ are mapped by
$\varphi$ surjectively (bijectively) onto the neighbors of $\varphi(v)$ in~$G$.
The homomorphism $\varphi$ is called an {\em emulator (cover) projection}.
\end{definition}
\noindent
One immediately obtains the following two claims:

\begin{lemma}\label{lem:2fold}
a) If $H$ is a planar cover of $G$, then $H$ is also a planar emulator of~$G$.
The converse is not true in general.
\\
b) If $G$ embeds in the projective plane, then $G$ has a two-fold planar cover 
(i.e., $|\varphi^{-1}(u)|=2$ for all $u\in V(G)$); cf.~\cite{cit:double}.
See also \figurename~\ref{fig:coverK5}.
\end{lemma}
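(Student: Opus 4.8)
The plan is to handle the two parts separately, both by unwinding Definition~\ref{def:emulator}. For part~(a), the forward implication is immediate: a cover projection $\varphi$ maps the neighborhood of each vertex of $H$ \emph{bijectively} onto the neighborhood of its image, and a bijection is in particular a surjection, so such $\varphi$ is already an emulator projection. For the claim that the converse fails, I would exhibit a single planar emulator that is not a cover; the emulator of $K_3$ drawn on the right of \figurename~\ref{fig:emulcoverexample} does the job, since there some vertex has two neighbors mapped to the same vertex of $G$, which violates local bijectivity. (More strongly, we shall later see via Theorem~\ref{thm:rieck} that there exist graphs admitting a planar emulator but \emph{no} planar cover at all, so the two notions genuinely differ even at the level of graphs, not merely of maps.)

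For part~(b) I would invoke the standard orientable double cover of the projective plane. Recall that the sphere $S^2$ is the orientable double cover of the projective plane, i.e.\ there is a continuous $2$-to-$1$ map $p$ from $S^2$ onto the projective plane that is a local homeomorphism. Fixing an embedding of $G$ in the projective plane, I set $H:=p^{-1}(G)$, the preimage of the embedded graph with its induced vertices and edges. Then $H$ is embedded in $S^2$ and hence planar, it is finite (each cell of $G$ has exactly two preimages), and $\varphi:=p|_{V(H)}$ is a graph homomorphism onto $G$ with $|\varphi^{-1}(u)|=2$ for every $u\in V(G)$.

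It remains to verify that $\varphi$ is a genuine cover projection, and this is the only real content. Because $p$ is a local homeomorphism, a sufficiently small disc around a vertex $u\in V(G)$ lifts to two disjoint discs, on each of which $p$ restricts to a homeomorphism; this homeomorphism carries the cyclic rotation of the edges incident to $u$ to the rotation at the corresponding lifted vertex, so the neighbors of each vertex of $H$ are mapped bijectively onto the neighbors of its image. Hence $\varphi$ is locally bijective, $H$ is a two-fold planar cover of $G$, and this is exactly the construction attributed to~\cite{cit:double}. The main point to be careful about is precisely this passage from the topological local homeomorphism to the combinatorial local bijection on neighborhoods, together with confirming that $H$ is simple---which holds because distinct vertices of $G$ have disjoint preimages and each lifted disc is embedded homeomorphically.
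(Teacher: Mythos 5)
Your proof is correct and follows exactly the argument the paper intends: part~(a) is the trivial ``bijective implies surjective'' observation plus a witness for the failure of the converse, and part~(b) is the lift of the embedded graph to the spherical (orientable) double cover of the projective plane, which is precisely the construction behind \figurename~\ref{fig:coverK5} (the ``crosscap-less drawing and its mirror image'') and the cited result of Negami. The paper states the lemma without a written proof, and your careful treatment of the local-homeomorphism-to-local-bijection step and of simplicity of the lift fills in exactly the details being taken for granted.
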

These two claims, together with some knowledge about universal coverings in topology, 
make Conjectures~\ref{conj:negami} and~\ref{conj:fellows} sound very plausible.
To precisely describe the motivation for our research direction in planar emulators, 
we briefly comment on the methods that have been used in the investigation
of planar-coverable graphs, too.

\begin{figure}[tb]
\centering
\includegraphics[width=.95\hsize]{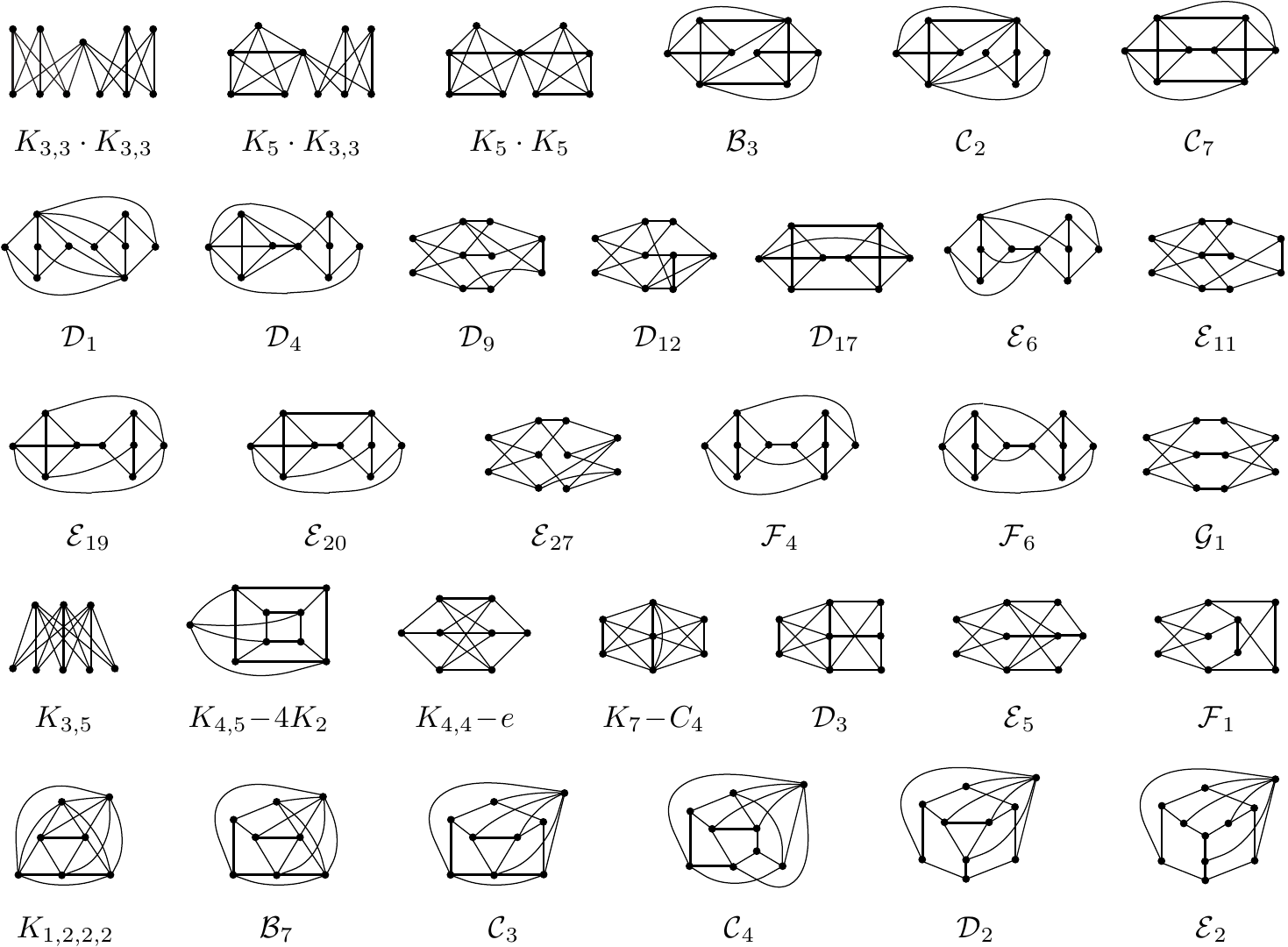}
\caption{The $32$ connected projective forbidden minors. (The three
disconnected ones, $K_5+K_5$, $K_5+K_{3,3}$, $K_{3,3}+K_{3,3}$,
are skipped since they are not important here.)}
\label{fig:32nonproj}
\end{figure}

Firstly, we note that the properties of planar-coverability and planar-emula\-bility
are closed under taking minors (Proposition~\ref{prop:clminor}), 
and all $35$ minor-minimal nonprojective graphs ({\em projective forbidden minors},
\figurename~\ref{fig:32nonproj}) are known~\cite{cit:archd-pp}.
If a connected graph $G$ is projective, then $G$ is planar-coverable (and 
hence also planar-emulable); otherwise, $G$ contains one of the mentioned 
projective forbidden minors. 
Hence to prove Conjecture~\ref{conj:negami}, only a seemingly simple task
remains: we have to show that the known $32$ connected projective forbidden
minors have no planar covers.
The following was established through a series of previous papers:

\begin{theorem}[Archdeacon, Fellows, Hlin\v{e}n\'y, and Negami, 1988--98]
\label{thm:AFHN}
If the (complete four-partite) graph $K_{1,2,2,2}$ has no planar cover,
then Conjecture~\ref{conj:negami} is true.
\end{theorem}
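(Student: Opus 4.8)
\medskip\noindent
The statement is a reduction, so the plan is to assemble three ingredients: the trivial direction of Conjecture~\ref{conj:negami}, the minor-monotonicity of planar-coverability, and the forbidden-minor description of projective graphs. First I would dispose of the easy half of the equivalence. The ``projective $\Rightarrow$ planar cover'' direction of Conjecture~\ref{conj:negami} is exactly Lemma~\ref{lem:2fold}(b), which builds a two-fold planar cover out of any projective embedding. Hence only the converse remains, and I would prove its contrapositive: a connected nonprojective graph has no planar cover. Throughout I would restrict to connected graphs, which is the proper setting of the conjecture---a disjoint union such as $K_5+K_5$ is planar-coverable component-by-component yet nonprojective, which is precisely why the three disconnected forbidden minors play no role here.

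Next I would invoke Proposition~\ref{prop:clminor}: planar-coverability is closed under taking minors, so if a minor $F$ of $G$ has no planar cover, then neither does $G$. By the forbidden-minor characterization of projective-planar graphs (\figurename~\ref{fig:32nonproj}, \cite{cit:archd-pp}), every connected nonprojective graph contains one of the $32$ connected projective forbidden minors. Combining these two facts collapses the entire remaining direction to a finite task: it suffices to show that each of the $32$ connected forbidden minors has no planar cover.

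The third step is the real content, and it is where the cited series of papers (1988--98) enters. I would argue that $31$ of the $32$ graphs, i.e.\ all except \KKK, admit no planar cover. The device that makes this finite analysis manageable is the controlled behaviour of planar-coverability under $\Delta Y$- and $Y\Delta$-transformations: the $32$ forbidden minors fall into a few such transformation families, so one settles the family generators directly---by Euler-type counting that excludes any cover projection onto them---and then propagates along each family. This resolves every case except \KKK, which sits as the single ``apex'' that neither the counting arguments nor the transformation propagation can reach. Adjoining the hypothesis that \KKK has no planar cover therefore completes the list of $32$, and by the reduction above yields Conjecture~\ref{conj:negami}.

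The main obstacle in proving the theorem itself lies in this third step: directly establishing for each of the $31$ graphs that no planar cover exists, and organizing them so that the transformation families exhaust all cases---the substantial labor spread across the cited works. The reason \KKK alone is left as a hypothesis is that it is the unique forbidden minor resisting both the direct counting and the $\Delta Y$/$Y\Delta$ propagation, so that settling it (the genuinely open part) is all that stands between this reduction and Conjecture~\ref{conj:negami}.
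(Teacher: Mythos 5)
Your reduction is exactly the paper's: Theorem~\ref{thm:AFHN} is stated there without proof, and the surrounding discussion supplies precisely your three ingredients---Lemma~\ref{lem:2fold}(b) for the easy direction, minor-closedness of planar-coverability together with the $32$ connected projective forbidden minors of \cite{cit:archd-pp} to reduce the converse to a finite check, and the cited 1988--98 papers for that case analysis. One small imprecision worth noting: it is not that $31$ of the $32$ graphs are settled unconditionally---the other members of the $Y\!\Delta$-family of $K_{1,2,2,2}$ (in particular $\ca B_7$, $\ca C_3$, $\ca D_2$) are themselves only known to be non-coverable \emph{conditionally}, by $Y\!\Delta$-propagation from the hypothesis on $K_{1,2,2,2}$, which is harmless here since the theorem being proved is conditional anyway.
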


One can naturally think about applying the same arguments to planar emulators,
i.e.\ to Conjecture~\ref{conj:fellows}.
The first partial results of Fellows~\cite{cit:femul}---%
see an overview in Section~\ref{sec:basic}---were, in fact, encouraging.
Yet, all the more sophisticated tools (of structural and discharging
flavor) used to show the non-existence of planar covers in Theorem~\ref{thm:AFHN}
% proving strategies necessary to show the non-existence of covers (e.g., discharging arguments) 
fail on a rather technical level when applied to emulators. 
As these problems seemed to be more of technical than conceptual nature, 
Fellows' conjecture was always believed to be true until the following:
% result: % in 2008:

\begin{theorem}[Rieck and Yamashita \cite{cit:rieck}, 2008]
\label{thm:rieck}\hfill
The graphs $K_{1,2,2,2}$ and \mbox{$K_{4,5}-4K_2$}
do have planar emulators (cf.\ Fig.~\ref{fig:K45.3D}).
Consequently, the class of planar-emulable graphs is strictly larger than
the class of planar-coverable graphs, and Conjecture~\ref{conj:fellows} is
false.
\end{theorem}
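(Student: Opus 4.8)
The plan is entirely constructive. To prove that $K_{1,2,2,2}$ and $K_{4,5}-4K_2$ are planar-emulable it suffices, by Definition~\ref{def:emulator}, to exhibit for each of them a finite \emph{planar} graph $H$ together with a homomorphism $\varphi\colon V(H)\to V(G)$ whose restriction to the neighbourhood of every vertex is surjective onto the neighbourhood of its image. No abstract existence argument is available: on the contrary, the naive Euler-bound intuition recalled in Section~\ref{sec:intro} predicts that such an $H$ cannot exist, so the entire weight of the theorem rests on producing the right drawings. I would therefore handle the two graphs separately and, for each, design $H$ by exploiting the symmetry of $G$, representing every vertex $u\in V(G)$ by a small controlled number of copies $u_1,u_2,\dots$ placed in a plane drawing so that all the required local neighbourhoods can be realised at once.

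The conceptual crux---and the step that defeats the Euler-type obstruction---is that local \emph{surjectivity}, as opposed to the bijectivity demanded of a cover, permits a single vertex $v\in V(H)$ to have several neighbours all mapping to the \emph{same} neighbour of $\varphi(v)$. I would use this freedom to split the ``offending'' high-demand vertices. Concretely, $K_{1,2,2,2}$ is the octahedron $K_{2,2,2}$ together with an apex $a$ adjacent to all six rim vertices, and the obstruction is exactly this apex: in a plane drawing of the octahedron every face is a triangle, so no single face reaches all six rim vertices. I would therefore represent $a$ by several copies, each seated in a distinct triangular face of a (possibly repeated) octahedral patch, arranged so that the copies jointly cover all six rim vertices while every rim copy still sees one copy of $a$ and one copy of each of its four cross-class neighbours. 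For $K_{4,5}-4K_2$ I would build $H$ analogously from a small, highly symmetric planar gadget glued to mirrored copies of itself along a common boundary, reading off $\varphi$ from the symmetry; the intended layout is the three-dimensional drawing sketched later in the paper.

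Once each candidate $H$ is fixed, the verification splits into three finite and essentially routine checks. First, exhibit an explicit planar embedding of $H$, i.e.\ a drawing with no edge crossings. Second, confirm that $\varphi$ is a homomorphism: every edge of $H$ must map to an edge of $G$, so that in particular no two copies whose $\varphi$-images coincide or are non-adjacent in $G$ are ever joined in $H$. Third---the heart of the matter---verify local surjectivity, namely that at each vertex $v\in V(H)$ the images of the neighbours of $v$ exhaust \emph{all} neighbours of $\varphi(v)$ in $G$. Since the intended constructions are assembled from a few symmetric patches, this last check reduces to inspecting one representative vertex per symmetry class.

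Finally I would assemble the stated consequences. By Lemma~\ref{lem:2fold}(a) every planar cover is a planar emulator, so the planar-coverable graphs form a subclass of the planar-emulable ones, and Lemma~\ref{lem:2fold}(b) shows that the ``if'' direction of Conjecture~\ref{conj:fellows} (projective $\Rightarrow$ emulable) holds; thus the conjecture can only fail in its converse direction, and that is precisely what the construction delivers. Both $K_{1,2,2,2}$ and $K_{4,5}-4K_2$ are nonprojective---$K_{1,2,2,2}$ is one of the connected forbidden minors of \figurename~\ref{fig:32nonproj}, and $K_{4,5}-4K_2$ likewise does not embed in the projective plane---yet we have just shown them planar-emulable, so each is a graph that is planar-emulable but not projective-embeddable, contradicting Conjecture~\ref{conj:fellows}. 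For the \emph{strict} containment of planar-coverable in planar-emulable I would invoke the fact, established in the body of work surrounding Theorem~\ref{thm:AFHN}, that $K_{4,5}-4K_2$ has no planar cover: it is then planar-emulable but not planar-coverable, witnessing proper containment. The genuine obstacle throughout is the first, creative step of discovering the emulator drawings at all; everything afterwards is mechanical if tedious verification, and the difficulty is exactly that for two decades the Euler-bound heuristic made such drawings seem impossible.
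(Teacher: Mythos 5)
There is a genuine gap, and it is exactly the one you concede in your last sentence: the entire mathematical content of Theorem~\ref{thm:rieck} is the existence of two explicit finite planar emulators, and your proposal never produces them. The paper treats this statement as an imported result of Rieck and Yamashita, whose published drawings \emph{are} the proof; a verification template (check planarity, check the homomorphism, check local surjectivity) applied to constructions that are only described as ``to be designed'' proves nothing. Moreover, the one concrete design idea you do offer for $K_{1,2,2,2}$ is incorrect as stated. You propose to seat several copies of the apex $a$ in distinct triangular faces of an octahedral patch, ``arranged so that the copies \emph{jointly} cover all six rim vertices.'' Definition~\ref{def:emulator} demands local surjectivity at \emph{every} vertex of $H$, including each copy $a_i$ of the apex: the neighbours of $a_i$ in $H$ must map \emph{onto} all six neighbours of $a$ in $G$. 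A copy of $a$ placed inside a triangular face that reaches only three rim classes violates the emulator property at that very copy; the burden cannot be shared among the copies of $a$ in the way you describe. This is precisely why the problem resisted solution for twenty years, and why the paper's own (smaller) emulator for $K_{1,2,2,2}$ is obtained by an entirely different mechanism: build a rich-face emulator of $K_4$, lift it to an emulator of $\mathcal{E}_2$ (\figurename~\ref{fig:e2_final}), and then apply the constructive $Y\!\Delta$-closure of Proposition~\ref{prop:clYDelta} to reach $K_{1,2,2,2}$ (\figurename~\ref{fig:K1222emul}).

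The final deductions in your proposal are sound and agree with the paper: both graphs are projective forbidden minors, hence nonprojective, so their planar-emulability falsifies Conjecture~\ref{conj:fellows}; and since $K_{4,5}-4K_2$ is known to have no planar cover, the containment of planar-coverable graphs in planar-emulable graphs is strict. But those are two lines of bookkeeping once the emulators exist; without the constructions themselves the theorem remains unproved.
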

We remark that this is not merely an existence result, but the actual
(and, surprisingly, not so large) emulators were published together with it.
Both $K_{1,2,2,2}$ and \mbox{$K_{4,5}-4K_2$} are among the
projective forbidden minors, and \mbox{$K_{4,5}-4K_2$} has already 
been proved not to have a planar cover.

% \say{MCh: the current picture is not good: colors are not readable in B/W. Also, the 3D picture of $G$ is superfluous. Do we need the pic at all?}

\begin{figure}[tb]
\centering
~\raise2mm\hbox{\includegraphics[width=.25\hsize]{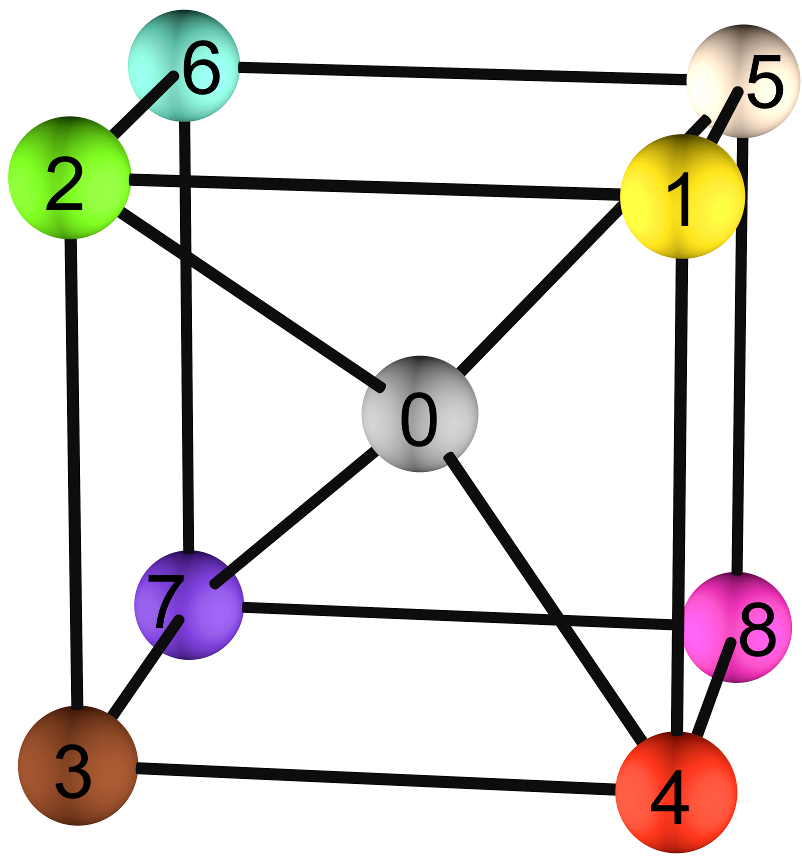}}
	\qquad\raise20mm\hbox{\Large$\leftarrow$}\qquad
	\includegraphics[width=.27\hsize]{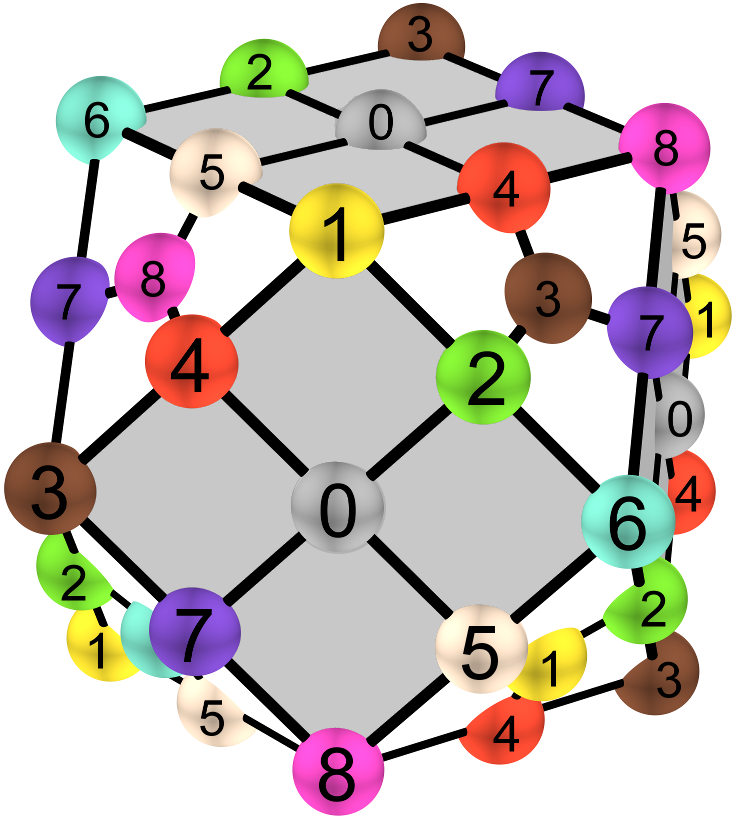}
\caption{A colour-coded 3D-rendering of a planar emulator patched on a polyhedral body (right) for the
	 graph \mbox{$K_{4,5}-4K_2$} (left), taken from
	{\scriptsize\tt http://vivaldi.ics.nara-wu.ac.jp /\~{}yamasita/emulator/}.}
\label{fig:K45.3D}
\end{figure}

One important new message of our paper is that Theorem~\ref{thm:rieck} is
not a rarity---quite the opposite, many other nonprojective graphs have
planar emulators.
In particular we prove that, 
among the projective forbidden minors that have been in
doubt since Fellows'\,\cite{cit:femul}, all except possibly
$K_{4,4}-e$ do have planar emulators:

\begin{theorem}
\label{thm:main}
All of the graphs (\figurename~\ref{fig:32nonproj})
$K_{4,5}-4K_2$, $K_{1,2,2,2}$, $\ca B_7$, $\ca C_3$,
$\ca C_4$, $\ca D_2$, $\ca E_2$, and also $K_7-C_4$, $\ca D_3$,
$\ca E_5$, $\ca F_1$ have planar emulators.
\end{theorem}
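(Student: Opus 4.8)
The plan is to treat the theorem as what it is---a catalogue of existence statements---and to establish each one by exhibiting an explicit planar emulator. For \KKK and $K_{4,5}-4K_2$ nothing new is strictly required, since Theorem~\ref{thm:rieck} already supplies emulators; nevertheless I would first replace the Rieck--Yamashita emulator of \KKK by the small, highly symmetric one depicted in \figurename~\ref{fig:K1222emul}, since a compact and well-understood emulator is the natural seed for the remaining constructions. Before building anything else I would ask whether Proposition~\ref{prop:clminor} lets me deduce some of the targets from others for free; but $\ca B_7,\ca C_3,\CC,\ca D_2,\EE$ and \KK,\DD,\EEE,\FF are all minor-minimal nonprojective, hence pairwise incomparable in the minor order, so no such shortcut exists \emph{among} them and a genuine construction is needed for each (modulo the reuse trick below).

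The construction template I would follow is uniform. For a target graph $G$ I build a finite planar graph $H$ together with a colouring $\varphi\colon V(H)\to V(G)$, thinking of the colours as the vertices of $G$, and I lay $H$ out on a symmetric body (a cyclic band or a polyhedral surface, exactly in the spirit of \figurename~\ref{fig:K45.3D}) so that a single small ``patch'' repeated around the body already forces most of the required adjacencies. Two conditions must then be verified: that $\varphi$ is a homomorphism, i.e.\ no edge of $H$ joins two vertices whose colours are non-adjacent in $G$; and that $\varphi$ is \emph{locally surjective}, i.e.\ for every $v\in V(H)$ each colour adjacent to $\varphi(v)$ in $G$ occurs on at least one neighbour of $v$ in $H$. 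To keep this verification finite and readable I would exploit the symmetry of the layout, checking only one vertex per orbit of the automorphism used to wrap the patch. Wherever two targets differ by only a few edges, I would obtain the second emulator from the first by \emph{local surgery}---rerouting or recolouring a bounded region of $H$---rather than starting over, which is where most of the bookkeeping is saved.

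The hard part is the permanent tension between planarity and local surjectivity. Intuitively (the Euler-bound heuristic quoted in the introduction) the ``extra'' neighbours that an emulator is allowed to spend on a single adjacent vertex of $G$ ought to overload a planar graph; the whole point of the construction is that the global wrap-around identification lets each local neighbourhood collect its full palette of required colours while the total edge count stays within the planar budget. Getting a patch whose repetitions simultaneously satisfy surjectivity at every vertex type, remain a genuine homomorphism, and close up into a finite planar graph is delicate, and it must be re-achieved for each of the structurally distinct graphs \KK,\DD,\EEE,\FF; these four I expect to be the genuinely new and most laborious cases, with $\ca B_7,\ca C_3,\CC,\ca D_2,\EE$ following either by the same template or by local surgery on an already-built emulator. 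The residual risk is purely a matter of careful checking: a single vertex whose neighbourhood misses one colour invalidates the whole emulator, so I would finish by re-verifying local surjectivity orbit-by-orbit for every constructed $H$.
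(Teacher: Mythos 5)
Your overall strategy---exhibit an explicit planar emulator for each graph, laid out on a symmetric or polyhedral body and verified orbit-by-orbit for local surjectivity---is indeed how the paper proceeds (an octahedron for \KK, a cuboctahedron for \CC, a truncated-cube picture for \EE), and you correctly identify \KK, \DD, \EEE, \FF as the laborious cases needing genuinely new constructions. But your plan for propagating one emulator to the others has a concrete flaw. The tool the paper uses is not minor-closure (which, as you note, is useless among minor-minimal graphs) and not ad hoc ``local surgery'': it is Proposition~\ref{prop:clYDelta}, the closure of planar-emulability under $Y\!\Delta$-transformations, whose proof is constructive. This forces the direction of derivation: one must first build an emulator for \EE, the most $\Delta Y$-expanded member of its family (the paper does this by reducing to a planar emulator of $K_4$ in which every edge lies on a ``rich'' face), and then $Y\!\Delta$-transform it to obtain emulators for $\ca D_2$, $\ca C_3$, $\ca B_7$ and \KKK mechanically. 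Your proposal runs this chain backwards: you take the small \KKK emulator of \figurename~\ref{fig:K1222emul} as ``the natural seed for the remaining constructions,'' but passing from \KKK to \EE or $\ca B_7$ is a $\Delta Y$-transformation, and the paper explicitly observes (Section~\ref{sec:conclus}, the $\ca D_4$ versus $\ca B_3$ example) that emulability is \emph{not} preserved under $\Delta Y$ in general. Nothing can be derived from the \KKK emulator; that emulator is itself a \emph{product} of the \EE construction.

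Two further points. First, \CC does not follow from any of the others by surgery or transformation: the paper needs a separate $338$-vertex construction (a cube-rolling gadget placed on the triangular facets of a cuboctahedron), so you should budget for a fifth from-scratch construction, not just the four in the \KK family. Second, the same directionality issue recurs there: \DD, \EEE, \FF are $Y\!\Delta$-transformable \emph{to} \KK, so an emulator of \KK gives them nothing for free; the paper builds each by taking half of the \KK emulator, surrounding it with correcting cells, duplicating, and gluing along hexagonal faces. Your instinct that these are the hard cases is right, but the plan should make explicit that no closure property covers them and that each demands its own fully verified construction.
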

Consequently, the class of planar-emulable graphs is much larger than
the class of planar-coverable ones.
We refer to Section~\ref{sec:construct} for details.

%%%%%%%%%%%%%%%%%%%%%%%%%%%%%%%%%%%%%%%%%%%%%%%%%%%%%%%%%%%%%%%%%%%%%%%%
\section{Basic Properties of Emulators}
\label{sec:basic}
\begin{onlyaccum}
\subsection{Additions to Section~\ref{sec:basic}}
\end{onlyaccum}

In this section, we review
the basic established properties of planar-emulable graphs.
These are actually all the properties of planar-coverable graphs
which are known to extend to planar emulators
(though, the extensions of some of the proofs are not so straightforward).

The claims presented here, except for Theorem~\ref{thm:K35noemul},
were proved or sketched already in the manuscript \cite{cit:femul} of
Fellows.
However, since \cite{cit:femul} has never been published, we consider it
appropriate to include their full proofs.
%% (in the appendix).

We begin with two crucial closure properties.

\begin{proposition}[Fellows \cite{cit:femul}]
\label{prop:clminor}
The property of being planar-emulable is closed under taking \mbox{\em
minors}; i.e., under taking subgraphs and edge contractions.
\end{proposition}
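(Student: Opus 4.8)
The plan is to observe that every minor of $G$ arises from a finite sequence of three elementary operations---deleting an edge, deleting a vertex, and contracting an edge---so it suffices to show that planar-emulability is preserved by each single such operation, with subgraphs handled by the two deletion steps. Throughout, let $H$ be a planar emulator of $G$ with emulator projection $\varphi$, as in Definition~\ref{def:emulator}.

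For the two deletion steps the construction is immediate. To delete an edge $e=\{a,b\}$ of $G$, I would let $H'$ be $H$ with all edges $\{u,v\}$ satisfying $\{\varphi(u),\varphi(v)\}=\{a,b\}$ removed; then $H'$ is planar as a subgraph of $H$, and $\varphi$ remains a homomorphism into $G-e$. Local surjectivity survives because the only fibres affected are those over $a$ and over $b$, and over, say, $a$ exactly the neighbours representing $b$ are lost---which is precisely the neighbour of $a$ that disappears in $G-e$. To delete a vertex $a$, I would delete the whole fibre $\varphi^{-1}(a)$ from $H$; planarity is again clear, and for any surviving vertex $v$ with $\varphi(v)=w$ the neighbours lost are exactly those representing $a$, matching $N_{G-a}(w)=N_G(w)\setminus\{a\}$.

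The substantial case is edge contraction, forming $G/e$ from $e=\{a,b\}$. Here I would consider the bipartite subgraph $B\subseteq H$ consisting of all edges of $H$ mapped by $\varphi$ onto $e$; by local surjectivity every vertex of $\varphi^{-1}(a)$ has a neighbour in $\varphi^{-1}(b)$ and vice versa, so $B$ has no isolated vertices and every component of $B$ meets both fibres. I would obtain $H'$ by contracting each connected component of $B$ to a single vertex, which keeps $H'$ planar since only connected subgraphs are contracted, and I would set $\varphi'$ to send every contracted component to the merged vertex $ab\in V(G/e)$ and to agree with $\varphi$ elsewhere.

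The heart of the argument---and the step I expect to be the main obstacle---is verifying that $\varphi'$ is a well-defined homomorphism, in particular that no loop at $ab$ is created. This reduces to the clean observation that the only edge of the simple graph $G$ lying inside $\{a,b\}$ is $e$ itself: hence any $H$-edge joining two vertices of $\varphi^{-1}(\{a,b\})$ already lies in $B$, so after contraction there is no edge between two distinct components of $B$. Granting this, local surjectivity is routine: a merged vertex $ab$ inherits, through the $a$-vertices and $b$-vertices its component necessarily contains, a neighbour representing each $c\in (N_G(a)\cup N_G(b))\setminus\{a,b\}=N_{G/e}(ab)$, while unaffected vertices retain their neighbourhoods with the fibres over $a$ and $b$ simply relabelled to $ab$.
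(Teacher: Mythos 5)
Your proposal is correct and follows essentially the same route as the paper's proof: reduce to the three elementary operations, delete the representing edges or the whole fibre for deletions, and for a contraction of $e$ contract each connected component of the subgraph of $H$ formed by the edges representing $e$. The only difference is that you spell out the verifications (no loops at the merged vertex, local surjectivity) that the paper leaves implicit, which is fine.
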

\begin{onlyaccum}
{\def\thetheorem{\ref{prop:clminor}}
\begin{proposition}
The property of being planar-emulable is closed under taking \mbox{minors}; 
i.e., under taking subgraphs and edge contractions.
\end{proposition}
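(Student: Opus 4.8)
The plan is to verify closure under each of the three elementary minor operations separately: deleting a vertex, deleting an edge, and contracting an edge. Since every minor of $G$ arises as a finite sequence of these, and each operation turns a planar-emulable graph into another planar-emulable graph, the general claim follows by iterating. Throughout, I would fix a planar emulator $H$ of $G$ with emulator projection $\varphi:V(H)\to V(G)$ and write $N(\cdot)$ for neighborhoods. The governing idea is to perform on $H$ the ``same'' operation as on $G$, but applied \emph{simultaneously to the whole preimage} under $\varphi$ of the affected vertex or edge; planarity is then automatic, since deletions and edge contractions all preserve it, and the only real work is to check that the restricted or induced map is still locally surjective.

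For the two deletion cases I would argue directly. To handle $G-v$, delete from $H$ the entire fiber $\varphi^{-1}(v)$ and keep $\varphi$ on what remains; for any surviving vertex $w$ the set $N_H(w)$ surjected onto $N_G(\varphi(w))$, and after removing the vertices over $v$ the remaining neighbors surject onto $N_G(\varphi(w))\setminus\{v\}=N_{G-v}(\varphi(w))$, as required. To handle $G-e$ with $e=\{a,b\}$, delete from $H$ exactly the edges mapped by $\varphi$ onto $e$; such edges are incident only to vertices over $a$ or $b$, and an analogous check shows that for a vertex over $a$ the surviving neighbors surject onto $N_G(a)\setminus\{b\}$, while a vertex whose image is neither $a$ nor $b$ loses no incident edge at all. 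Both results stay planar as subgraphs of $H$.

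The substantive case is edge contraction, $G/e$ with $e=\{a,b\}$ contracted to a new vertex $c$. Here I would contract in $H$ \emph{all} edges that $\varphi$ maps onto $e$ at once, and define $\varphi'$ to send every resulting merged vertex to $c$ and to agree with $\varphi$ elsewhere. The key preliminary observation is that, by local surjectivity, every vertex of $\varphi^{-1}(a)$ has a neighbor in $\varphi^{-1}(b)$ and vice versa, so the edges over $e$ form a bipartite subgraph $F$ between these two fibers with no isolated vertices; contracting $F$ therefore absorbs the whole of $\varphi^{-1}(a)\cup\varphi^{-1}(b)$ into vertices over $c$, which is exactly what the contraction in $G$ demands. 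One also checks that no bad loops arise: any surviving edge between two distinct components of $F$ would itself map to $e$ and hence already lie in $F$, a contradiction, while an edge inside a single fiber would force a loop $\{a,a\}$ or $\{b,b\}$ in the simple graph~$G$.

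It remains---and this is the part I expect to require the most care---to verify local surjectivity, the genuinely new point being the behavior at a merged vertex $z$, the image of a connected component $S$ of $F$, which necessarily meets both fibers. Given a target $x\in N_{G/e}(c)=(N_G(a)\cup N_G(b))\setminus\{a,b\}$, say $x\in N_G(a)$, I would pick $u\in S\cap\varphi^{-1}(a)$ and apply local surjectivity of $\varphi$ at $u$ to obtain a neighbor mapping to $x$; since $x\neq a,b$ this neighbor lies outside $\varphi^{-1}(\{a,b\})$ and hence survives as a genuine neighbor of $z$ in the contracted graph. Symmetrically, a surviving vertex $w$ whose image is adjacent to $a$ or $b$ in $G$ acquires a neighbor over $c$, so all neighborhoods surject correctly; after suppressing any parallel edges (which never affects the emulator property), $\varphi'$ is the desired emulator projection onto $G/e$.
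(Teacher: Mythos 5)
Your proposal is correct and follows essentially the same route as the paper's proof: delete the whole fiber $\varphi^{-1}(v)$ for a vertex deletion, delete all edges representing $e$ for an edge deletion, and contract each connected component of the edges representing $e$ (then suppress parallel edges) for a contraction. You merely spell out the local-surjectivity verification in more detail than the paper, which states the conclusion without elaboration.
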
}
\end{onlyaccum}

\begin{accumulate}
\begin{proof}
Let $G$ be a planar-emulable graph, and planar $H$ be its emulator via a
projection $\varphi$.
We prove this easy proposition by showing how $H$ is modified to accommodate
for the elementary reduction steps in~$G$;
vertex/edge deletion, and edge contraction.
Say, if a vertex $v\in V(G)$ is deleted, then also all vertices
$\varphi^{-1}(v)$ representing $v$ are deleted from~$H$.

An edge $f=xy\in E(H)$ {\em represents} the edge $e\in E(G)$ if
$e=\{\varphi(x),\varphi(y)\}$.
Whenever an edge $e\in E(G)$ is deleted, so are all the edges 
representing $e$ in~$H$.
Lastly, if an edge $e\in E(G)$ is contracted,
then every component induced by the edges representing $e$ in~$H$ is also
contracted into a single vertex
(note that such components may contain more than one edge representing $e$
in the case of an emulator), and possible parallel edges are simplified.
All these operations preserve planarity of $H$,
and the outcome is an emulator of the graph resulting from~$G$.
\qed\end{proof}
\end{accumulate}

\begin{proposition}[Fellows \cite{cit:femul}]
\label{prop:clYDelta}
The property of being planar-emulable is closed under applying
\mbox{\em $Y\!\Delta$-transformations}; i.e., the operations replacing
(successively) any degree-$3$ vertex with a triangle on its three neighbors.
\end{proposition}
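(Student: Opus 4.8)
The plan is to prove closure under a single $Y\Delta$-transformation; the general (successive) case then follows by induction. Let $v$ be the chosen degree-$3$ vertex of $G$, with neighbours $a,b,c$, and let $G'$ be obtained by deleting $v$ and adding the triangle $ab,bc,ca$. I may assume $a,b,c$ are pairwise nonadjacent in $G$ (otherwise a new edge only duplicates an existing one and is simplified, which makes the task easier). Let $H$ be a planar emulator of $G$ with projection $\varphi$, fixed together with a planar embedding. I would build $H'$ from $H$ by local surgery carried out at every vertex $w\in\varphi^{-1}(v)$: delete $w$ and fill the face it vacates with new edges among the former neighbours of $w$, each joining vertices of two \emph{different} colours from $\{a,b,c\}$. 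The motivating observation is that, since $a,b,c$ are pairwise nonadjacent, no vertex coloured $a$ has any neighbour coloured $b$ or $c$ anywhere in $H$; hence the \emph{only} way a vertex $x$ with $\varphi(x)=a$ can acquire the $b$- and $c$-coloured neighbours that $G'$ now demands is through these new edges, placed exactly where its $v$-coloured neighbours used to sit.

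The heart of the construction is the fill at a single $w$. Reading the neighbours of $w$ in rotational order yields a cyclic sequence coloured by $\{a,b,c\}$ in which all three colours occur, by emulator surjectivity at $w$. In the cleanest case $w$ has exactly three neighbours $x_a,x_b,x_c$ (a claw), and I simply insert the triangle $x_ax_b,x_bx_c,x_cx_a$ into the vacated face, exactly as in the classical claw-to-triangle argument used for covers. Having done this at every $w$, I would verify the emulator axioms for $\varphi$ restricted to $V(H')$: a surviving vertex coloured $a$ retains, through its edges leaving the former $v$-faces, a surjection onto $N_G(a)\setminus\{v\}$, and gains neighbours coloured $b$ and $c$ from the inserted edges, so its neighbourhood surjects onto $N_{G'}(a)=(N_G(a)\setminus\{v\})\cup\{b,c\}$; the cases of $b,c$ are symmetric, and all remaining vertices and all nonadjacencies are untouched because every inserted edge is bichromatic within $\{a,b,c\}$.

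The main obstacle is that an emulator, unlike a cover, may have vertices $w\in\varphi^{-1}(v)$ of degree larger than three, i.e.\ with a colour repeated among their neighbours, and then the naive ``join every neighbour to a neighbour of each other colour'' is \emph{not} planar: for the local pattern $a,a,b,c$ the two required chords are the two diagonals of a quadrilateral and are forced to cross. I would resolve this in two steps. First, I contract each maximal run of consecutive equal-coloured neighbours of $w$ into a single vertex; the vertices of such a run all lie on the boundary of the face freed by deleting $w$, so the contraction is realized by edges drawn inside that face and preserves planarity, and it preserves the emulator property since the merged vertex keeps its colour and takes the union of the neighbourhoods. After these contractions the vacated face is bounded by a \emph{properly} coloured cycle. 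Second, on this cycle I add a non-crossing set of bichromatic chords making every vertex adjacent to both other colours; this is always feasible because all three colours are present, so a vertex deficient in, say, colour $c$ has both cyclic neighbours of one colour and can be reached from a $c$-coloured vertex by chords bundled into non-crossing fans. Establishing this chord-placement in full generality, and keeping the simultaneous contractions at different $w$'s mutually consistent when two $v$-preimages share neighbours or a face, is the technical crux; the colour-and-surjectivity bookkeeping above then certifies that the resulting planar graph $H'$ is an emulator of $G'$.
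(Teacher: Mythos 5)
Your overall strategy---delete each $w\in\varphi^{-1}(v)$ and refill the vacated face with bichromatic edges among its former neighbours---is a legitimate alternative to what the paper does, but as written it has a genuine gap, and it sits exactly where you yourself point: the claim that on the properly coloured face boundary one can always place a \emph{non-crossing} set of bichromatic chords giving every deficient vertex its missing colour is asserted, not proved. The sentence about ``chords bundled into non-crossing fans'' only covers the situation where all deficiencies point at a single target vertex; once several vertices of different colours are deficient and the nearest admissible targets interleave around the cycle, you need an actual argument (e.g.\ a routing rule together with a proof that two such chords never cross), and none is given. The same holds for the second deferred issue: the run-contractions at different preimages of $v$ are not independent (merging two $a$-coloured vertices changes the rotation and the colour word at every other $w'\in\varphi^{-1}(v)$ adjacent to them), so ``simultaneous'' surgery has to be replaced by a sequential process with a termination and invariant argument. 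Since these two points are the entire difficulty of the statement---the degree-$3$ case being the trivial claw-to-triangle replacement---the proof is not complete.

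For comparison, the paper avoids the chord-placement problem altogether. It first proves a normalization lemma (Lemma~\ref{lem:cubic_vert}): among all planar emulators of $G$ one minimizing the maximum degree over $\varphi^{-1}(X)$, for $X$ an independent set of degree-$3$ vertices, has all those preimages of degree exactly $3$. The proof is an extremal argument on the cyclic colour word $w$ of the neighbours of a high-degree preimage $x$, with three exhaustive cases: $w$ contains $aa$ (merge the two representatives of $a$), $w$ contains $aba$ (\emph{split} $x$ into two lower-degree preimages of $v$---an operation your delete-and-fill scheme has no analogue of), or $w=(abc)^+$ (an explicit local replacement). Only the neighbours of $x$ can gain degree, and these are outside $\varphi^{-1}(X)$ by independence, so iterating contradicts minimality. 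After that, the $Y\!\Delta$-transformation is performed verbatim on each cubic preimage. If you want to salvage your route, the cleanest fix is to prove the chord claim by induction on the length of the properly coloured cycle (or to adopt the paper's reduction and dispense with chords entirely); in its current form the argument identifies the right obstacle but does not overcome it.
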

\begin{onlyaccum}
{\def\thetheorem{\ref{prop:clYDelta}}
\begin{proposition}
The property of being planar-emulable is closed under applying
\mbox{\em $Y\!\Delta$-transformations}; i.e., the operations replacing
(successively) any degree-$3$ vertex with a triangle on its three neighbors.
\end{proposition}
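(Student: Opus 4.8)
The plan is to start from a planar emulator $H$ of $G$ with projection $\varphi$, fix a plane embedding of $H$, and build an emulator $H'$ of the graph $G'$ obtained from $G$ by one $Y\Delta$-transformation at a degree-$3$ vertex $v$ with neighbours $a,b,c$. Since $G'$ arises from $G$ by deleting $v$ and adding the triangle $ab,bc,ca$, in $G'$ every vertex of colour $a$ must become adjacent to vertices of colours $b$ and $c$ (and symmetrically for $b,c$), whereas no vertex needs a neighbour of colour $v$ any longer. The natural construction is therefore to delete from $H$ all vertices of $\varphi^{-1}(v)$ and, in each region thus vacated, to add a set of non-crossing \emph{legal} chords, i.e.\ edges joining two former neighbours of a deleted $\tilde v$ whose colours are distinct (so that they map onto one of $ab,bc,ca$). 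Keeping $\varphi$ unchanged on the surviving vertices preserves the homomorphism property, and planarity is maintained because every new edge stays inside a vacated region.

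First I would verify local surjectivity. For vertices untouched by the surgery nothing changes. A former neighbour $\tilde a\in\varphi^{-1}(a)$ of some deleted $\tilde v$ loses its $v$-neighbours, which is harmless since $v\notin V(G')$, and it must instead gain a $b$- and a $c$-neighbour, to be supplied by the new chords. Crucially, \emph{every} vertex of colour $a$ is adjacent in $H$ to at least one vertex of $\varphi^{-1}(v)$ (because $a\sim v$ in $G$ and $\varphi$ is locally surjective), so every colour-$a$ vertex does lie on the boundary of at least one vacated region; hence the repair is at least possible in principle, locally. It remains to choose, in each vacated region, non-crossing legal chords that give every boundary vertex both of its two missing colours.

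The hard part will be exactly this last step, because a vertex of $\varphi^{-1}(v)$ may have several neighbours of the same colour. After deletion the neighbours of such a $\tilde v$ form the boundary of a single vacated region, carrying a cyclic colour sequence that contains all three colours but may, say, have colour $a$ several times while $b$ and $c$ each occur once. Then every $a$-neighbour must be joined to the \emph{unique} $b$-vertex and the \emph{unique} $c$-vertex, and the two resulting fans unavoidably cross (the obstructing configuration is $a\,a\,b\,c$); no admissible chord set exists. Thus the local triangulation cannot succeed on an arbitrary emulator, and the genuine content of the proposition lies in first \emph{normalizing} $H$.

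I would resolve this by reducing to the ``cover-like'' case in which every vertex of $\varphi^{-1}(v)$ has degree exactly $3$; each vacated region is then a single triangle $abc$, and the classical $Y\Delta$ argument used for planar covers applies verbatim. To lower the degree of a fibre vertex $\tilde v$ I would use the extra freedom that emulators enjoy and covers do not: adding any legal edge or any legal new vertex only improves local surjectivity, so one may insert a fresh degree-$3$ vertex $w\in\varphi^{-1}(v)$ joined to three consecutive neighbours of $\tilde v$ that already realise all three colours, and then delete the now-redundant edge from $\tilde v$ to one of them (its colour is still present at $\tilde v$, and the detached neighbour retains $w$ as its $v$-neighbour). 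Each such step strictly decreases the total excess degree $\sum_{\tilde v\in\varphi^{-1}(v)}(\deg\tilde v-3)$ while preserving planarity and the emulator property. The main remaining obstacle is to make this peeling work in full generality, since a consecutive triple realising all three colours need not exist (for instance a fibre neighbourhood of type $a\,a\,b\,b\,c\,c$); there the peeling must be combined with a more careful reorganisation or an induction on the fibre neighbourhoods, again exploiting that every same-coloured neighbour is itself forced to meet some vertex of $\varphi^{-1}(v)$. Once all fibre vertices have degree three, the rest of the argument is routine.
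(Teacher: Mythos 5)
Your overall strategy coincides with the paper's: reduce to the situation where every vertex of $\varphi^{-1}(v)$ has degree $3$ (this normalization is exactly Fellows' lemma, which the paper states separately for an independent set $X$ of degree-$3$ vertices and proves by minimizing the maximal fibre degree), and then perform the $Y\!\Delta$-transformation locally at each such fibre vertex. You also correctly diagnose why the naive ``delete the fibre and add chords'' surgery fails. The problem is that your normalization procedure does not actually close the argument, and you say so yourself: the peeling move requires three consecutive neighbours of $\tilde v$ realising all three colours, and cyclic patterns such as $a\,a\,b\,b\,c\,c$ admit no such triple. Ending with ``the peeling must be combined with a more careful reorganisation or an induction'' leaves unproved precisely the case where the real content of the proposition lies, so this is a genuine gap rather than a routine omission.

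The paper closes that gap with a complete case analysis on the cyclic colour word $w$ of the neighbourhood of a maximal-degree fibre vertex $x$: (i) if $w$ contains a subword $aa$, \emph{merge} the two consecutive same-coloured neighbours into a single vertex, which drops $\deg x$ by one --- this is the move missing from your toolkit, and it disposes of $a\,a\,b\,b\,c\,c$ immediately; (ii) if $w$ contains $aba$, \emph{split} $x$ into two vertices $x_1,x_2$ of smaller degree, distributing the neighbours so that each part still sees all three colours; (iii) if $w=(abc)^+$, a special split reduces all new degrees to $3$. Each operation preserves planarity and the emulator property, and can only raise degrees of neighbours of $x$, which lie outside $\varphi^{-1}(X)$ because $X$ is independent; minimality of the maximal fibre degree then gives a contradiction in finitely many steps. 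If you add the merge move (and a split for the $aba$ pattern) to your peeling, your decreasing-excess-degree argument can be completed along essentially the same lines.
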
}
\end{onlyaccum}

\begin{accumulate}
\begin{proof}
Let $G$ be a planar-emulable graph and $v\in V(G)$ a vertex of degree~$3$.
Denote by $G'$ the graph obtained from $G$ by applying the
$Y\!\Delta$-transformation of~$v$.
Suppose a planar graph $H$ that is an emulator of $G$ via a projection $\varphi$.

In the (optimistic) case that all the vertices of $H$ in $\varphi^{-1}(v)$
are also of degree~$3$,
we simply successively apply $Y\!\Delta$-transformations to all the vertices in
$\varphi^{-1}(v)$ (which form an independent set of $H$),
and the resulting graph $H'$ will be again planar and an emulator of $G'$.

It remains to justify our optimistic assumption about degree-$3$ vertices in
$\varphi^{-1}(v)$ of a suitable planar emulator $H$ of~$G$, which
follows from the following claim applied to $X=\{v\}$:
\end{proof}

\begin{lemma}[Fellows \cite{cit:femul}]%[1988, M. Fellows]
\label{lem:cubic_vert}
% Let $H$ be a finite planar emulator for $G$ 
Let $G$ be a planar-emulable graph
and $X \subseteq V(G)$ an independent set of vertices of degree 3. 
Then there exists a planar emulator $H$ of $G$ with a projection 
$\varphi : V(H) \rightarrow V(G)$ such that every vertex 
$u \in \varphi^{-1}(v)$ over all $v \in X$ is of degree 3. 
\end{lemma}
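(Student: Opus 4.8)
The plan is to start from an \emph{arbitrary} planar emulator $H$ of $G$ with projection $\varphi$ and a fixed planar embedding, and to surgically lower the degree of every fibre vertex over $X$ to exactly $3$ by purely local replacements. First I would reduce to a single target vertex: since $X$ is independent, no vertex of $\varphi^{-1}(v)$ is adjacent to any vertex of $\varphi^{-1}(v')$ for distinct $v,v'\in X$, and the fibres $\varphi^{-1}(v)$ are pairwise disjoint, so the neighbourhoods we rewire for different elements of $X$ never interfere; it therefore suffices to describe the surgery for one $v\in X$, say with $N_G(v)=\{a,b,c\}$, and then iterate. Note also that a fibre vertex over $v$ already has degree at least $3$ by surjectivity, so the only issue is vertices $u\in\varphi^{-1}(v)$ of degree $>3$.

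Fix such a $u$ and let $w_1,\dots,w_d$ (with $d>3$) be its neighbours in the cyclic (rotation) order of the embedding; each $\varphi(w_i)\in\{a,b,c\}$ and, by surjectivity, all three values occur. The constraints on any replacement are: (i) a degree-$3$ vertex over $v$ must have exactly one neighbour of each colour $a,b,c$; and (ii) two vertices over $v$ can never be adjacent, since such an edge would have to represent the nonexistent loop at $v$, so every edge leaving a new fibre vertex must land on a vertex over $a$, $b$, or $c$. I would therefore delete $u$ and refill the small disk it vacated with a gadget built only from \emph{new} degree-$3$ vertices over $v$, each joined to three of the already-present boundary vertices $w_1,\dots,w_d$, one of each colour. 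Reusing the $w_i$ in this way is harmless: it only \emph{adds} edges incident to them (each representing an allowed edge $\{a,v\}$, $\{b,v\}$, or $\{c,v\}$ of $G$), so it neither creates forbidden adjacencies nor destroys the surjectivity already present at the $w_i$ elsewhere; we only must guarantee that after deleting $u$ every $w_i$ regains at least one neighbour over $v$.

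Concretely I would fix one boundary vertex of each colour, say $p,q,r$ of colours $a,b,c$, place a central new fibre vertex $x_0$ joined to $p,q,r$, and then, for every remaining boundary stub $s$, add a new fibre vertex $x_s$ joined to $s$ and to the two of $\{p,q,r\}$ carrying the colours different from $\varphi(s)$. Each of $x_0$ and the $x_s$ then has degree exactly $3$ with one neighbour of each colour, condition (ii) is met, and every $w_i$ is covered; thus local surjectivity holds at all new vertices and is restored at all the $w_i$, so the result is again an emulator of $G$. The \textbf{main obstacle} is planarity: I must realise this gadget inside the vacated disk \emph{without crossings}, even though a naive choice of disjoint rainbow triples can be forced to interleave. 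I would handle this by processing the stubs in their cyclic order and \emph{nesting} the connections to the fixed anchors $p,q,r$, equivalently drawing the tripods as a non-crossing fan hung off the three spokes $x_0p$, $x_0q$, $x_0r$, and checking that this routing stays inside the disk. Once this single-vertex replacement is verified to preserve both planarity and the emulator property, applying it repeatedly to all high-degree fibre vertices over all $v\in X$ yields the desired planar emulator $H$ in which every vertex over $X$ has degree $3$.
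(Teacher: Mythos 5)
Your reduction to a single $v\in X$ and the bookkeeping about which adjacencies are permissible are fine, but the construction breaks at exactly the point you flag as the main obstacle: the gadget cannot, in general, be embedded in the vacated disk. Take $d=5$ with cyclic colour word $a\,b\,c\,c\,c$, say $\varphi(w_1)=a$, $\varphi(w_2)=b$, $\varphi(w_3)=\varphi(w_4)=\varphi(w_5)=c$. Since $w_1$ and $w_2$ are the \emph{only} boundary vertices of colours $a$ and $b$, and two vertices over $v$ may never be adjacent, every new degree-$3$ vertex over $v$ placed inside the disk must be joined to both $w_1$ and $w_2$; moreover $w_3,w_4,w_5$ each need their own such vertex, because a degree-$3$ vertex over $v$ has only one $c$-coloured neighbour. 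So any admissible gadget (yours, or any other confined to new fibre vertices over $v$) contains three internally disjoint paths $w_1$--$x_i$--$w_2$ whose middle vertices must additionally reach three distinct points $w_3,w_4,w_5$, all lying on one boundary arc between $w_1$ and $w_2$. Inside the disk the three paths are nested, and only the outermost middle vertex lies on a face meeting that arc; the other two are trapped. This is not a routing difficulty to be fixed by ``nesting'' --- it is a topological obstruction to the entire plan of deleting $u$ and refilling the disk.

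The paper's proof escapes precisely by not confining the surgery to the disk. It takes an emulator minimizing the maximum degree over $\varphi^{-1}(X)$ and analyses the circular colour word: a repeated letter (the case $aa$ there, which is exactly what $abccc$ exhibits as $cc$) is handled by \emph{merging} the two corresponding neighbours of $x$ into one vertex, lowering $\deg(x)$ by one; a subword $aba$ is handled by \emph{splitting} $x$ into two vertices of smaller degree that keep the original incidences; only the case $w=(abc)^{+}$ is resolved by a local redraw. These operations may raise degrees of neighbours of $x$, which is harmless since $X$ is independent, and the extremal choice of $H$ then gives the contradiction. To salvage your approach you would need to admit at least the merging step (or some other modification of $H$ outside the star of $u$); as written, the proposal does not go through.
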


\begin{proof}
Whenever $F$ is an emulator of our graph $G$ with a projection $\psi: V(F)\rightarrow V(G)$;
let $Dg(F)$ ($\geq3$) shortly denote the maximal $F$-degree of the vertices 
$u \in {\psi}^{-1}(v)$ over all $v \in X$. 
We choose $H$ as a planar emulator of $G$ with projection $\varphi$ such
that the value $Dg(H)$ is minimized.

Assume, for a contradiction, that $Dg(H)>3$, and choose any vertex
$x \in \varphi^{-1}(v)$ where $v \in X$ such that $x$ is of $H$-degree $Dg(H)=d>3$.
Let $a,b,c$ be the three neighbors of $v$ in~$G$.
We denote by $w$ the circular word of length $d$ over the alphabet
$\{a,b,c\}$ formed of the letters
$\varphi(y_1)\varphi(y_2)\dots\varphi(y_d)$, where $y_1,\dots,y_d$ are the
neighbors of $x$ in~$H$ in this cyclic order.
% Let $H$ be a finite planar emulator of $G$ with the minimal possible $dg(H)$ and vertex projection
% $\varphi : V(H) \rightarrow V(G)$. Assume that Lemma~\ref{lem:cubic_vert}
% is false, i.e. $dg(H) > 3$. Let $x$ denote an arbitrary vertex of $H$ with $x \in \varphi^{-1}(v)$ for
% all $v \in X$ and degree $dg(H)$. Consider a circular word $w$ of length $dg(H)$ 
% over 3 letter alphabet $\Sigma = \{a,b,c\}$ representing the vertices in $H$ incident to $x$. 
Then, one of the following three cases, up to symmetry, occurs in $w$:

\begin{figure}[t]
  \centering
  \def \svgwidth{\columnwidth}
  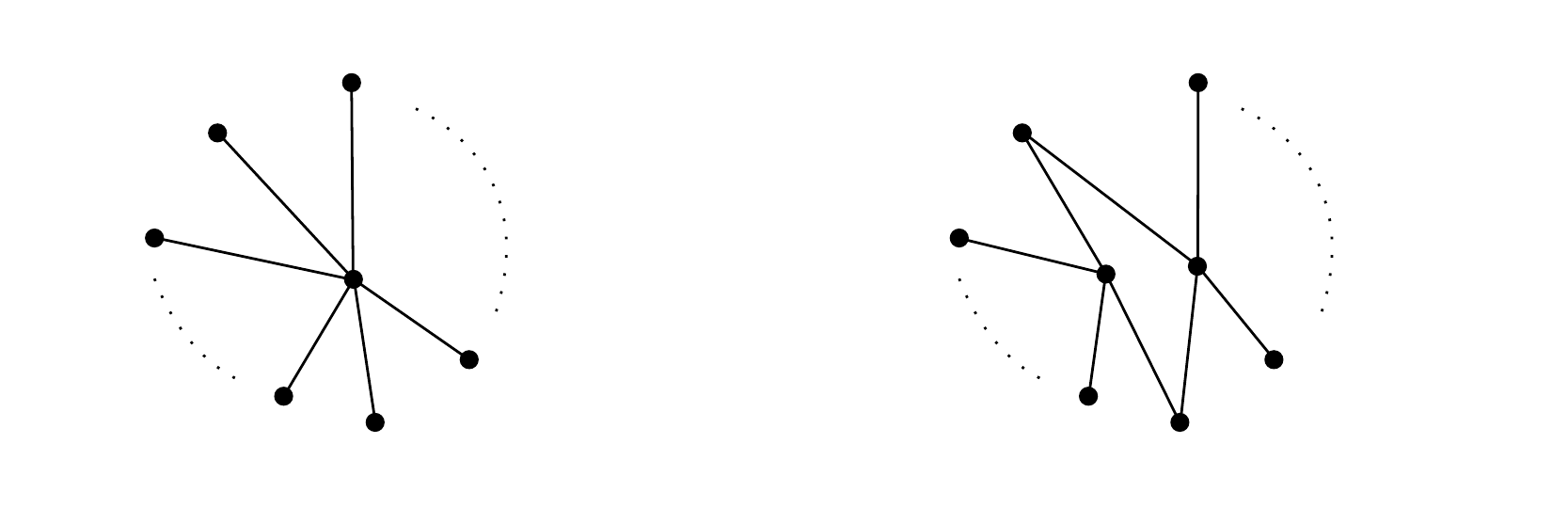
  \caption{Splitting vertex $x$ with a cubic image into vertices of lower degree.}
  \label{fig:cubic_vert_case2}
\end{figure}
\begin{figure}[tb]
  \centering
  \def \svgwidth{\columnwidth}
  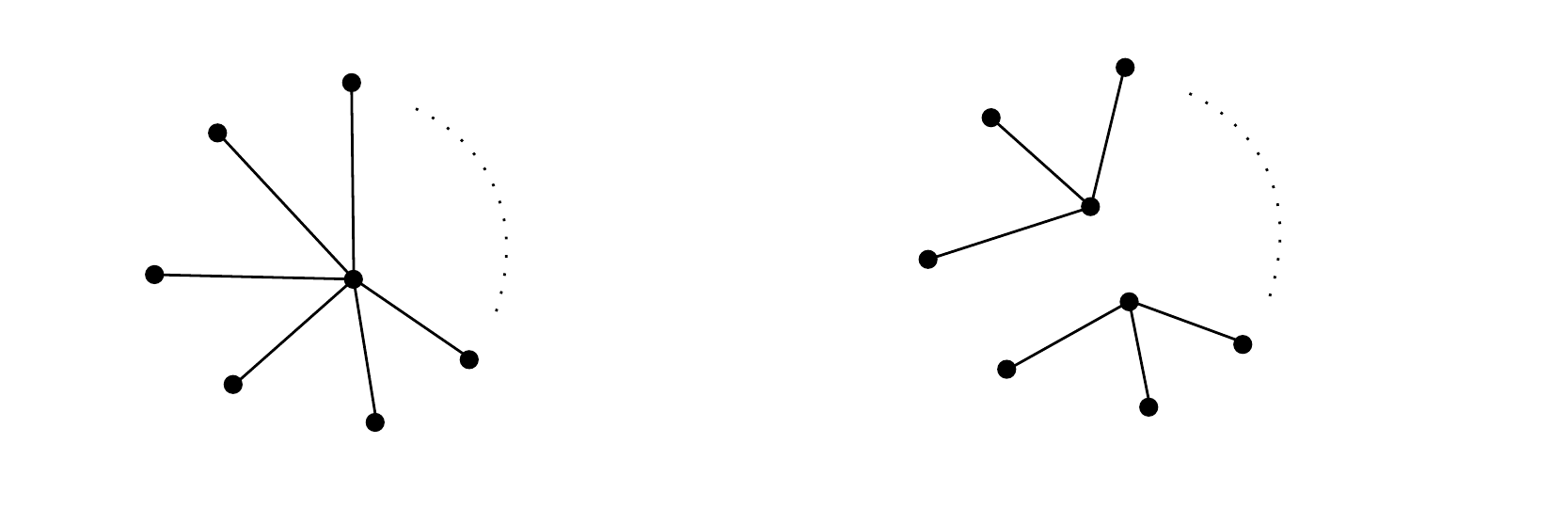
  \caption{Illustration of the last case of the proof of Lemma~\ref{lem:cubic_vert}.}
  \label{fig:cubic_vert_case3}
\end{figure}

\begin{itemize}
    \item %$w = l^*aal'^*$, $l,l' \in \Sigma$. 
	$w$ contains a subword $aa$:
	By merging the corresponding two vertices of $H$ representing $a$ into one, 
	the degree of $x$ drops to $d-1$.
    \item %$w = l_1l_2l_3w'$, $l_1 = l_3 = a$, $l_2 = b$, $w' \in \Sigma^*$. 
	$w$ contains a subword $aba$:
	Without loss of generality, it is $\varphi(y_1)=\varphi(y_3)=a$,
 $\varphi(y_2)=b$, and $\varphi(y_i)=c$ for some $4\leq i\leq d$
 (to be a valid emulator of $G$).
 We modify $H$ by splitting vertex $x$ into $x_1,x_2$ with
 $\varphi(x_1) = \varphi(x_2) = \varphi(x)$,
 so that $x_1$ is adjacent to $y_2,y_3,\dots,y_i$ and $x_2$ to
 $y_i,\dots,y_d,y_1,y_2$; see \figurename~\ref{fig:cubic_vert_case2}.
 Clearly, the degrees of $x_1,x_2$ are now smaller than~$d$.
    \item $w = (abc)^+$:
 Then $H$ may be modified as shown in Fig~\ref{fig:cubic_vert_case3},
 and the degrees of the newly created vertices drop down to~$3$.
\end{itemize}

In each of the cases it is easy to see that the obtained graph $H'$ is still 
a valid planar emulator of $G$,
and that only degrees of some neighbors of $x$ in $H$ 
could have gone up from $H$ to $H'$.
Hence, as $X$ is an independent set, we can repeat the above construction for
all the vertices $x\in\varphi^{-1}(X)$ (which form an independent set in
$H$, too) of degree $d$, and in finitely many steps obtain a contradiction
to minimality of $Dg(H)$.
\qed\end{proof}
\end{accumulate}

Next, we identify some easy forbidden minors for planar-emulable graphs
among the known list of projective forbidden minors (cf.\
Lemma~\ref{lem:2fold}\,b).
Again, these extend folklore knowledge about planar-coverable graphs.

\begin{accumulate}
We say that a graph $G$ contains {\em two disjoint k-graphs}
if there exist two vertex-disjoint subgraphs $J_1,J_2\subseteq G$
such that, for $i=1,2$, the graph $J_i$ is isomorphic to a subdivision of
$K_4$ or $K_{2,3}$, the subgraph $G-V(J_i)$ is connected and adjacent to
$J_i$, and contracting in $G$ all the vertices of $V(G)\setminus V(J_i)$ into one
results in a nonplanar graph (i.e.\ containing a $K_5$- or
$K_{3,3}$-subdivision).
% a graph containing a subgraph isomorphic to a subdivision of $K_5$ or $K_{3,3}$, respectively.
We remark that such $G$ is always nonprojective \cite{cit:proj103}.
See an example in \figurename~\ref{fig:2kgraphs}.

\begin{figure}[tb]
\centering
\includegraphics[width=.4\hsize]{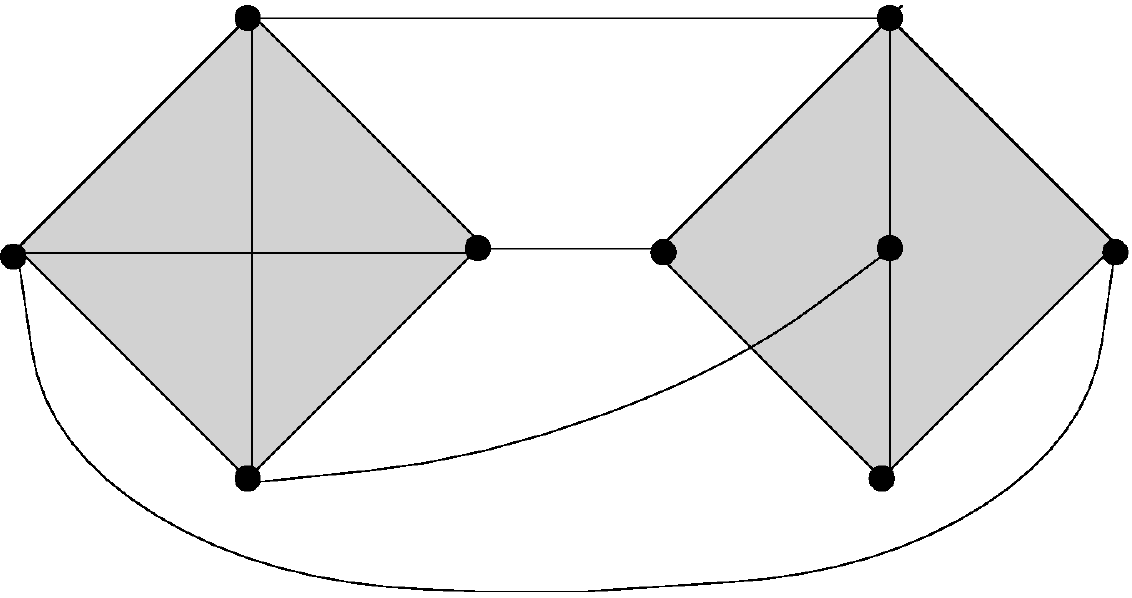}
\caption{An example of a graph having two disjoint k-graphs (shaded in gray).}
\label{fig:2kgraphs}
\end{figure}

{%
\begin{onlynoaccum}
\def\thetheorem{\ref{thm:2kgraphs}}
\end{onlynoaccum}
\begin{theorem}[Fellows \cite{cit:femul}]
\label{thm:2kgraphs}
A planar-emulable graph $G$ cannot contain two disjoint \mbox{k-graphs}.
Consequently, each of the $19$ graphs---projec\-tive forbidden minors---%
in the first three rows of \figurename~\ref{fig:32nonproj} 
has no planar emulator.
\end{theorem}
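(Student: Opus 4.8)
The plan is to prove the two disjoint k-graphs obstruction by a counting/Euler-bound argument localized to the two subgraphs $J_1$ and $J_2$. The key intuition is that each $J_i$, being a subdivision of $K_4$ or $K_{2,3}$, is itself nonplanar-forcing once we contract everything outside it: contracting $V(G)\setminus V(J_i)$ to a single vertex $z_i$ yields a nonplanar graph, so in any planar emulator $H$ the vertices representing $J_i$ together with the representatives of the ``outside'' must be drawn in a way that consumes a crosscap's worth of nonplanarity. Since $J_1$ and $J_2$ are vertex-disjoint and each separately forces a $K_5$- or $K_{3,3}$-subdivision after contraction, a planar emulator would have to realize two independent nonplanarities simultaneously, which I expect to contradict planarity of $H$.

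First I would set up the emulator $\varphi:V(H)\to V(G)$ and, using Proposition~\ref{prop:clminor}, reduce to a convenient minor-minimal configuration so that $G$ consists essentially of $J_1$, $J_2$, and the connecting part. The central step is to show that, for each $i$, the preimage $\varphi^{-1}(V(J_i))$ together with $\varphi^{-1}(V(G)\setminus V(J_i))$ must contain a planar embedding of the contracted nonplanar graph, which is impossible; more carefully, I would argue that the local surjectivity of $\varphi$ forces a subdivision of $K_4$ or $K_{2,3}$ to appear inside the part of $H$ mapping to $J_i$, with an attachment to the contracted outside vertex that recreates the forbidden $K_5$ or $K_{3,3}$. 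The two disjoint copies then live in edge-disjoint (indeed vertex-disjoint, via the disjointness of the $J_i$) regions of $H$, and I would combine them to exhibit two disjoint nonplanar pieces joined along the connected outside, yielding a nonplanar minor of the planar graph $H$.

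The main obstacle, I expect, is handling the ``outside'' part $G-V(J_i)$ correctly in the emulator setting: unlike covers, an emulator may represent a single outside vertex by many vertices of $H$, so the contraction argument that works cleanly in $G$ does not transfer verbatim to $H$. The delicate point is to show that the connectivity and adjacency hypotheses on $G-V(J_i)$ guarantee that, after contracting in $H$ all vertices mapping outside $J_i$ into the components that touch $J_i$, one still obtains the required nonplanar topological structure rather than having the emulator ``spread out'' the crosscap harmlessly. I would address this by invoking the connectivity of $G-V(J_i)$ to guarantee that the corresponding representatives in $H$ can be contracted component-wise (as in the proof of Proposition~\ref{prop:clminor}) into a minor of $H$ that maps onto the contracted $G$, which is nonplanar; since $H$ is planar and minors of planar graphs are planar, this is the contradiction.

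Finally, to obtain the stated consequence about the $19$ projective forbidden minors in the first three rows of \figurename~\ref{fig:32nonproj}, I would simply exhibit, for each such graph, two disjoint k-graphs, so that the main obstruction applies directly. This last part is a finite case check rather than a conceptual difficulty, and I would expect it to follow from the known structure of these projective forbidden minors (each of them being large and dense enough to host two vertex-disjoint $K_4$- or $K_{2,3}$-subdivisions with the required nonplanar contractions), referencing \cite{cit:proj103} for the nonprojectivity of graphs with two disjoint k-graphs.
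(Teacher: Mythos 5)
Your overall intuition---that one crosscap's worth of nonplanarity can be ``absorbed'' by an emulator but two disjoint ones cannot---is the right one, but the mechanism you propose for turning it into a contradiction does not work. The step that fails is the central one: you claim that local surjectivity forces a subdivision of $K_4$ or $K_{2,3}$ inside $\varphi^{-1}(V(J_i))$ whose attachment to the contracted outside ``recreates the forbidden $K_5$ or $K_{3,3}$,'' and later that contracting the outside representatives yields a minor of $H$ that maps onto the contracted nonplanar graph, contradicting planarity of $H$. This is a non sequitur: the minor you obtain merely \emph{emulates} the nonplanar contracted graph $G_0$; it is not isomorphic to it and need not contain it (or any $K_5$/$K_{3,3}$-subdivision) as a minor. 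A planar graph can perfectly well emulate, and even cover, a nonplanar graph---that is the entire subject of the paper (cf.\ Lemma~\ref{lem:2fold}b and Theorem~\ref{thm:rieck})---so no contradiction with planarity of $H$ arises from this step, whether applied to one $J_i$ or to both together.

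The paper's proof rests on two ingredients your plan does not contain. First, an ``innermost component'' argument: letting $H_i$ be the subgraph of $H$ induced by the edges representing $E(J_i)$, the graphs $H_1,H_2$ are vertex-disjoint, and one component $A_1$ of, say, $H_1$ can be chosen so that every other component of $H_1\cup H_2$ lies in its outer face; connectivity of $G-V(J_1)$, which contains $J_2$ whose representatives all lie outside $A_1$, then forces \emph{all} of $V(H)\setminus V(A_1)$ into the outer face of $A_1$. This is precisely where the existence of the \emph{second} disjoint k-graph is used---your sketch never exploits the disjointness in an essential way, and indeed a single k-graph is no obstruction (e.g.\ $K_5$ contains one and is planar-coverable). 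Second, contracting $V(H)\setminus V(A_1)$ to a single vertex $x$ produces a planar emulator $H_0$ of the nonplanar $G_0$ in which the contracted vertex $w$ has the \emph{singleton} fiber $\{x\}$; the contradiction then comes from Lemma~\ref{lem:enonplanar}, which states that in any planar emulator of a nonplanar connected graph every vertex has at least two representatives (proved separately for $K_5$ by an outerplanarity/degree count and for $K_{3,3}$ by a leaf-block argument). Without this singleton-fiber lemma, or some substitute for it, your argument has no source of contradiction. The closing reduction of the $19$ forbidden minors to the main claim is indeed a finite check, as you say.
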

}

\begin{proof}
Suppose, for a contradiction, that $G$ contains two disjoint k-graphs
$J_1,J_2\subseteq G$,
and that there exists a planar emulator $H$ with a projection $\varphi:V(H)\to V(G)$.
Let $H_i$, $i=1,2$, denote the subgraph of $H$ induced by the edges
representing $E(J_i)$ in the projection $\varphi$.
(An edge $f=xy\in E(H)$ {\em represents} $e$ if $e=\{\varphi(x),\varphi(y)\}$.)
Then $H_1$ and $H_2$ are vertex-disjoint, and up to symmetry between
$H_1,H_2$, there exists a component $A_1\subseteq H_1$ such that all other
components of $H_1,H_2$ lie in the outer face of $A_1$ in the plane drawing
of~$H$.

Since $G-V(J_1)$ is connected and adjacent to
$J_1$, it follows that all the vertices of 
$V(H)\setminus V(A_1)$ lie in the outer face of $A_1$.
So, by contracting $V(H)\setminus V(A_1)$ into one vertex $x$ we obtain a planar
graph $H_0$ which is an emulator of the nonplanar graph $G_0$ resulting from
$G$ by contracting all $V(G)\setminus V(J_1)$ into one vertex $w$.
Let $\varphi_0:V(H_0)\to V(G_0)$ be the derived emulator projection.
Then $\varphi_0^{-1}(w)=\{x\}$, which is a contradiction to further
Lemma~\ref{lem:enonplanar}.
\qed\end{proof}

\begin{lemma}[Fellows \cite{cit:femul}]
\label{lem:enonplanar}
In every planar emulator $H$ of a nonplanar connected graph $G$
with the projection $\varphi:V(H)\to V(G)$, the following holds:
$|\varphi^{-1}(v)|\geq2$ for each $v\in V(G)$.
\end{lemma}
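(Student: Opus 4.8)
The plan is to argue by contradiction: suppose some $v \in V(G)$ has a single preimage, $\varphi^{-1}(v) = \{x\}$. First I would record two structural consequences. Since $G$ is connected and $\varphi$ is locally surjective, every connected component of $H$ maps onto all of $G$ and hence contains a preimage of $v$; as $x$ is the only one, $H$ must be connected. The crucial observation concerns the neighbours of $v$: for every $u \in N_G(v)$ and every preimage $p \in \varphi^{-1}(u)$, local surjectivity at $p$ forces $p$ to have a neighbour mapped onto $v$, and the only such vertex is $x$. Thus $\varphi^{-1}(u) \subseteq N_H(x)$ for each $u \in N_G(v)$; in words, the single vertex $x$ dominates the entire fibre of every neighbour of $v$.

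Geometrically this means that in the planar embedding of $H$, deleting $x$ merges the faces around it into one face $F$ on whose boundary the whole set $\varphi^{-1}(N_G(v))$ lies. Note also that $\varphi$ restricted to $H - x$ is an emulator projection onto $G - v$ (removing $x$ deletes exactly the preimages of $v$ from every neighbourhood), so each preimage of a vertex $w \in V(G-v)$ has $H{-}x$-degree at least $\deg_{G-v}(w)$. I would first dispatch the clean case in which $v$ is dominating, i.e.\ $N_G[v] = V(G)$. Then $x$ is adjacent to every other vertex of $H$, i.e.\ $x$ is universal, and a planar graph with a universal vertex leaves an outerplanar graph after deleting that vertex; hence $H - x$ is outerplanar and has a vertex of degree at most $2$. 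But whenever every vertex of $G - v$ has degree at least $3$ there — as happens for instance with $K_5$ ($K_5 - v = K_4$) or $K_{1,2,2,2}$ ($K_{1,2,2,2} - v_0 = K_{2,2,2}$) — the emulator $H - x$ has minimum degree at least $3$, contradicting outerplanarity.

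For the general statement I would reduce to this dominating case. Using that planar-emulability is minor-closed (Proposition~\ref{prop:clminor}), the aim is to contract $G$ to a nonplanar minor $G^{*}$ in which the image of $v$ becomes universal, by collapsing each connected piece of $G - N_G[v]$ onto an adjacent neighbour of $v$; the accompanying contractions keep $H$ a planar emulator, and one must check they preserve both the uniqueness of the preimage of $v$ and the domination property, after which the outerplanarity/minimum-degree argument applies to $G^{*}$.

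The hard part will be this reduction, and it is where emulators genuinely differ from covers. Two things must be controlled: contracting the far part of $G$ may destroy nonplanarity, so the contractions have to be chosen to retain a $K_5$- or $K_{3,3}$-subdivision while collapsing all non-neighbours of $v$ onto $N_G(v)$; and the single preimage of $v$ must survive them. I expect the right bookkeeping device to be the cofacial cyclic arrangement of $\varphi^{-1}(N_G(v))$ along the boundary of $F$, since it records exactly how the neighbour-fibres interleave. It is precisely this potential interleaving that blocks the tempting alternative of directly merging the neighbour-fibres through $F$ to exhibit a planar embedding of $G$: such merging would create crossings, consistent with the fact that nonplanar graphs (e.g.\ $K_{1,2,2,2}$) do have planar emulators. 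The crux is therefore to show that the existence of a \emph{uniquely} covered vertex rules out this obstruction and funnels every case into the universal-vertex contradiction.
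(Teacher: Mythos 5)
Your opening observations are correct and coincide with the paper's first case: if $\varphi^{-1}(v)=\{x\}$ then every representative of every neighbour of $v$ must be adjacent to $x$; when $v$ dominates $G$ this makes $x$ universal in $H$, so $H-x$ is an outerplanar emulator of $G-v$, and the minimum-degree-$3$ argument disposes of $G=K_5$. The genuine gap is the reduction you defer to ``the hard part'': contracting $G$ so that $v$ becomes dominating while staying nonplanar is \emph{impossible} in general, and the obstruction is exactly $K_{3,3}$. Take $G=K_{3,3}$ with parts $\{a,b,c\}$ and $\{s,t,v\}$: to make $v$ dominating you must absorb both $s$ and $t$ into $N_G[v]$, which leaves a graph on at most four vertices, necessarily planar. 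So no choice of contractions can ``retain a $K_5$- or $K_{3,3}$-subdivision while collapsing all non-neighbours of $v$ onto $N_G(v)$'', and the plan of funnelling every case into the universal-vertex contradiction cannot be completed. A smaller issue in the same direction: even when $v$ can be made dominating, your contradiction needs $G^{*}-v$ to have minimum degree at least $3$, whereas nonplanarity of $G^{*}$ with $v$ dominating only guarantees that $G^{*}-v$ is non-outerplanar, which still permits degree-$2$ vertices.

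The paper reduces in the opposite direction: since the unique preimage of $w$ survives the minor operations of Proposition~\ref{prop:clminor} (contracting an edge at $w$ collapses the star centred at $x$ to a single vertex), one may assume $G$ is $K_5$ or $K_{3,3}$ itself. The $K_5$ case is your universal-vertex argument. The $K_{3,3}$ case requires a genuinely different argument that your sketch does not supply: $H-x$ is an emulator of $K_{2,3}$, one takes a leaf block $B$ of $H-x$, finds in $B$ a vertex $y$ representing $s$ or $t$ whose three neighbours represent $a,b,c$ and hence are all adjacent to $x$, and derives a contradiction with the $2$-connectivity of $B$ in the plane drawing of $H$. You would need to add an argument of this kind for the non-dominating situation; the cofacial-interleaving bookkeeping you propose does not by itself yield one.
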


\begin{proof}
Suppose, for a contradiction, that $\varphi^{-1}(w)=\{x\}$ for some 
$w\in V(G)$ and $x\in V(H)$.
Firstly, we prove the claim for $G=K_5$:
Then $H-x$ is an emulator of $K_4=K_5-w$, and $H-x$ is outerplanar,
i.e.\ all its vertices are incident with one face since they are all adjacent to
the same vertex $x$ in $H$.
However, all degrees in $H-x$ are at least $3$ while an outerplanar simple graph
must contain a vertex of degree $\leq2$, a contradiction.

Secondly, we consider $G=K_{3,3}$ which is a bit more complicated case.
Then $H-x$ is an emulator of $G-w=K_{2,3}$.
Obviously, $H-x$ may be assumed connected.
Let $B$ be a leaf block of $H-x$, i.e.\ a maximal $2$-connected subgraph of
$H-x$ such that $B$ shares only (at most) one vertex with the rest of $H-x$.
Let $\{a,b,c\}\subseteq V(K_{2,3})$ denote the unique independent set
of size three, and $\{s,t\}\subseteq V(K_{2,3})$ be the other two vertices.
Then every vertex of $H-x$ representing $s$ or $t$ is of degree $\geq3$,
and there exists such $y\in V(B)$ having all neighbors $z_1,z_2,z_3$ in $B$.
Since $z_1,z_2,z_3$ are mapped to $a,b,c$, they must all be adjacent to $x$
in plane $H$, which contradicts $2$-connectivity of~$B$.

Third, we consider any other nonplanar graph $G$, i.e.\ containing a
minor isomorphic to $K_5$ or $K_{3,3}$.
Notice in the proof of Proposition~\ref{prop:clminor} that even a minor $G'$ of 
$G$ will have an emulator $H'$ (a minor of $H$) with projection $\varphi'$ such that
$|\varphi'^{-1}(w')|=1$ for $w'$ corresponding to original~$w$.
Hence we are finished by one of the previous two cases.
\qed\end{proof}

\end{accumulate}

\begin{onlynoaccum}
\begin{theorem}[Fellows \cite{cit:femul}]
\label{thm:2kgraphs}
A planar-emulable graph cannot contain ``two disjoint k-graphs''
(see the appendix).
Consequently, each of the $19$ graphs---projec\-tive forbidden minors---%
in the first three rows of \figurename~\ref{fig:32nonproj} 
has no planar emulator.
\end{theorem}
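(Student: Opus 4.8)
The plan is to argue by contradiction and to reduce the existence of two disjoint k-graphs to a single configuration forbidden by Lemma~\ref{lem:enonplanar}. Assume $G$ has a planar emulator $H$ with projection $\varphi$, and that $G$ contains two disjoint k-graphs $J_1,J_2$. For $i=1,2$, let $H_i\subseteq H$ be the subgraph induced by all edges $xy$ \emph{representing} $E(J_i)$, i.e.\ with $\{\varphi(x),\varphi(y)\}\in E(J_i)$; as $V(J_1)$ and $V(J_2)$ are disjoint, $H_1$ and $H_2$ are vertex-disjoint. I would first record a preliminary observation from local surjectivity: every vertex $u$ of $H$ with $\varphi(u)\in V(J_i)$ already lies in $H_i$, since $\varphi(u)$ has neighbours inside $J_i$ and $u$ must have neighbours representing them.

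Next comes the geometric core. By a standard planarity fact about disjoint plane subgraphs, I would choose an innermost component $A_1$ of $H_1\cup H_2$ (after swapping names we may take $A_1\subseteq H_1$) so that every other component of $H_1\cup H_2$ lies in a single face $F_0$ of $A_1$, taken to be the outer face. The decisive step is to upgrade this to the statement that \emph{all} of $V(H)\setminus V(A_1)$ lies in $F_0$. Suppose not, and let some vertex sit in a bounded face $f$ of $A_1$, and let $D$ be the set of vertices drawn strictly inside $f$. The boundary of $f$ lies in $A_1$ and maps into $V(J_1)$, so the only neighbours of $D$ outside $D$ lie in $V(A_1)$; moreover, by the preliminary observation and the choice of $A_1$, every vertex of $D$ maps outside $V(J_1)$ (otherwise it would form another $H_1$-component inside $f$). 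Taking a component $D'$ of $H[D]$ and applying local surjectivity, $\varphi(D')$ is closed under adjacency within $G-V(J_1)$; since $G-V(J_1)$ is connected, $\varphi(D')=V(G)\setminus V(J_1)\supseteq V(J_2)$. Thus some vertex inside $f$ maps into $V(J_2)$, and by the preliminary observation it belongs to an $H_2$-component lying inside $f$---contradicting the choice of $A_1$.

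With all of $V(H)\setminus V(A_1)$ confined to the single face $F_0$, I would contract this set to one vertex $x$; containment in one face keeps the drawing planar, giving a plane graph $H_0$ on $V(A_1)\cup\{x\}$. Put $G_0:=G/(V(G)\setminus V(J_1))$ with contracted apex $w$; then $\varphi_0$, defined by $\varphi$ on $V(A_1)$ and $x\mapsto w$, is an emulator projection. The check is routine neighbourhood bookkeeping: a vertex $a\in V(A_1)$ with $\varphi(a)=v$ retains inside $A_1$ neighbours surjecting onto $N_{J_1}(v)$, and it is joined to $x$ exactly when $v$ had a $G$-neighbour outside $J_1$, i.e.\ exactly when $v$ is adjacent to $w$ in $G_0$; symmetrically the neighbours of $x$ surject onto $N_{G_0}(w)$. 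Because $G-V(J_1)$ is adjacent to $J_1$, the apex $w$ is joined into $J_1$, so (by the k-graph hypothesis) $G_0$ is connected and nonplanar, while $\varphi_0^{-1}(w)=\{x\}$ is a single vertex---contradicting Lemma~\ref{lem:enonplanar} and proving the first assertion. For the ``consequently'' part I would then inspect \figurename~\ref{fig:32nonproj} and verify that each of the $19$ graphs in its first three rows contains two disjoint k-graphs, whence none of them is planar-emulable.

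I expect the main obstacle to be the upgrade step in the second paragraph, pushing \emph{all} of $V(H)\setminus V(A_1)$ into the outer face. This is the only place where local-surjectivity lifting genuinely interacts with the plane embedding, and it is exactly where the \emph{second} k-graph earns its keep: $V(J_2)\subseteq V(G)\setminus V(J_1)$ supplies a target that must be represented inside the bounded face $f$, forcing the contradiction and thereby the single preimage $\varphi_0^{-1}(w)=\{x\}$ that triggers Lemma~\ref{lem:enonplanar}. By comparison, the innermost-component fact and the emulator bookkeeping for $H_0$ are routine.
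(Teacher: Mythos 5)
Your argument is correct and follows the paper's proof essentially step for step: the same subgraphs $H_1,H_2$ of edges representing $E(J_1),E(J_2)$, the same choice of an innermost component $A_1$ with everything else in its outer face, the same contraction of $V(H)\setminus V(A_1)$ to a single vertex $x$ giving an emulator of the nonplanar contraction $G_0$, and the same final appeal to Lemma~\ref{lem:enonplanar} via $\varphi_0^{-1}(w)=\{x\}$. The only difference is that you spell out in full the connectivity argument forcing \emph{all} of $V(H)\setminus V(A_1)$ into the outer face of $A_1$ (a component of $\varphi^{-1}(V(G)\setminus V(J_1))$ trapped in a bounded face would have to represent all of the connected graph $G-V(J_1)$, hence $V(J_2)$), a step the paper dispatches in a single sentence.
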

\end{onlynoaccum}

\begin{accumulate}
Finally, we include the following sporadic result
which seems to be just a {\em very fortunate} extension of the cover case,
heavily benefiting from Lemma~\ref{lem:cubic_vert}.
\end{accumulate}

\begin{theorem}[Fellows\,/\,Huneke~\cite{cit:hunekecov}]
\label{thm:K35noemul}
The graph $K_{3,5}$ has no planar emulator.
\end{theorem}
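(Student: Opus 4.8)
The plan is to exploit that the five degree-$3$ vertices of $K_{3,5}$ form an independent set, so that Lemma~\ref{lem:cubic_vert} forces them to be cubic in a suitable emulator; at such vertices the emulator then behaves exactly like a \emph{cover} (surjective becomes bijective), and the non-existence follows from a degree count in a derived bipartite planar graph. Throughout, write $A=\{a_1,a_2,a_3\}$ and $B=\{b_1,\dots,b_5\}$ for the two colour classes of $K_{3,5}$.

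First I would apply Lemma~\ref{lem:cubic_vert} to the independent set $X=B$ of degree-$3$ vertices, obtaining a planar emulator $H$ with projection $\varphi$ in which every $u\in\varphi^{-1}(B)$ has degree exactly $3$. Since $K_{3,5}$ is bipartite, every edge of $H$ joins $\varphi^{-1}(A)$ to $\varphi^{-1}(B)$, so $H$ is itself bipartite with these two sides; and because each cubic $u\in\varphi^{-1}(B)$ maps its three neighbours \emph{surjectively} onto $\{a_1,a_2,a_3\}$, it in fact maps them \emph{bijectively}. Dually, surjectivity at any $x\in\varphi^{-1}(A)$ onto $\{b_1,\dots,b_5\}$ forces $\deg_H(x)\ge 5$.

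Next I would collapse the three colour classes of $A$ down to two. Delete all vertices of $\varphi^{-1}(a_3)$ from $H$; every $u\in\varphi^{-1}(B)$ then loses exactly its unique $a_3$-neighbour and becomes a degree-$2$ vertex, which I suppress, replacing the path through $u$ by a single edge between its $a_1$-copy and its $a_2$-copy. The result $H'$ is a planar bipartite graph on $\varphi^{-1}(a_1)\sqcup\varphi^{-1}(a_2)$ whose edges correspond to the vertices of $\varphi^{-1}(B)$. Each $x\in\varphi^{-1}(a_1)$ retains, among its incident suppressed edges, representatives of all five labels $b_1,\dots,b_5$, so its degree in $H'$ is at least $5$ (counted with multiplicity), and symmetrically for $\varphi^{-1}(a_2)$. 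If $H'$ were \emph{simple}, this would be immediately contradictory: a simple bipartite planar graph satisfies $|E|\le 2|V|-4$, hence has average degree below $4$ and a vertex of degree at most $3$, incompatible with minimum degree $5$.

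The hard part is therefore the passage to a simple $H'$, i.e.\ eliminating parallel edges --- two vertices $u_1,u_2\in\varphi^{-1}(B)$ sharing the same $a_1$-copy $x$ and the same $a_2$-copy $x'$, which in $H$ form a $4$-cycle $x\,u_1\,x'\,u_2$. This is exactly where cubic-ness is ``very fortunate'': since $u_1,u_2$ have no neighbours beyond $x$, $x'$ and their $a_3$-copies, such a configuration is highly constrained. I would fix a planar emulator $H$ (with $\varphi^{-1}(B)$ cubic) minimising $|V(H)|$, and show by a local reduction, analogous to the case analysis in the proof of Lemma~\ref{lem:cubic_vert}, that any parallel pair can be merged or rerouted across the bounding $4$-cycle while preserving planarity and all surjectivity conditions, contradicting minimality. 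Once no parallel pairs remain, $H'$ is simple and the degree count above closes the argument. I note that this embedding-based route is genuinely necessary, since a pure Euler or discharging count on $H$ alone is consistent --- it only yields $|\varphi^{-1}(A)|\ge 12$ --- and never contradicts planarity by itself.
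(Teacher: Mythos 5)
Your opening moves coincide with the paper's: apply Lemma~\ref{lem:cubic_vert} to the five degree-$3$ vertices, observe that $H$ is bipartite, and note that every representative of the degree-$5$ side has $H$-degree at least~$5$. You are also right that the bare Euler count is consistent (it only forces $|\varphi^{-1}(A)|\ge 12$), so one extra structural idea is needed. But the extra idea you propose---pass to the bipartite planar multigraph $H'$ on $\varphi^{-1}(a_1)\sqcup\varphi^{-1}(a_2)$ and then make it simple---is exactly where your proof is missing. Redoing your Euler computation for multigraphs (faces of $H'$ have even length, so length $2$ or $\ge 4$) gives $|E(H')|\le 2|V(H')|-4+f_2$, where $f_2$ is the number of length-$2$ faces; combined with minimum degree $5$ this forces $f_2\ge |V(H')|/2+4$ in any hypothetical emulator. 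So parallel pairs are not an exceptional configuration to be cleaned up: their elimination is equivalent to the whole theorem, and no actual mechanism is given. The candidate local moves all fail. If $\varphi(u_1)=\varphi(u_2)$, the two vertices in general have distinct third neighbours $w_1\ne w_2$ in $\varphi^{-1}(a_3)$, so merging them creates a degree-$4$ vertex, destroying the cubic normalization (and re-running Lemma~\ref{lem:cubic_vert} may simply recreate the pair), while deleting one of them can break surjectivity at its $a_3$-neighbour. If $\varphi(u_1)\ne\varphi(u_2)$, then $x$ and $x'$ each need both labels among their neighbours, so neither copy can be deleted, and ``rerouting'' presupposes another representative of the right label on the same face, which nothing guarantees; moreover a rerouting does not decrease $|V(H)|$, so it cannot contradict the minimality you fixed.

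For comparison, the paper closes this gap with a different local observation inside a discharging argument: assign charge $3(4-\deg(x))$ to vertices and $3(4-\mathrm{len}(f))$ to faces (total $+24$ by Euler), let every cubic vertex send $1$ to each neighbour (so degree-$d$ vertices with $d\ge 6$ end up nonpositive), and let every degree-$5$ vertex forward its resulting $+2$ to an incident face of length $\ge 6$. The only thing to verify is that such a face exists, and that is a parity argument: if all five faces around $z\in\varphi^{-1}(a)$ had length $4$, the five second neighbours of $z$ would have to alternate between $\varphi^{-1}(b)$ and $\varphi^{-1}(c)$ in cyclic order, impossible since $5$ is odd. That single parity fact is the ingredient your proposal lacks; without it, or a genuinely worked-out replacement for the parallel-edge elimination, the argument does not go through.
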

\begin{onlyaccum}
{\def\thetheorem{\ref{thm:K35noemul}}
\begin{theorem}[Fellows \cite{cit:femul}]
The graph $K_{3,5}$ has no planar emulator.
\end{theorem}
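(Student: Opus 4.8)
The plan is to exploit the strong structural rigidity that Lemma~\ref{lem:cubic_vert} provides. Write the two colour classes of $K_{3,5}$ as $\{a,b,c\}$ and $X=\{v_1,\dots,v_5\}$; since $X$ is independent and every $v_i$ has degree~$3$, Lemma~\ref{lem:cubic_vert} yields a planar emulator $H$ with projection $\varphi$ in which \emph{every} vertex of $Q:=\varphi^{-1}(X)$ has degree exactly~$3$. First I would record the resulting local picture. As $K_{3,5}$ is bipartite, $H$ is bipartite with classes $Q$ and $P:=\varphi^{-1}(\{a,b,c\})$; each $u\in Q$ has its three neighbours mapped bijectively onto $\{a,b,c\}$ (one copy each of $a$, $b$, $c$, so its neighbourhood is ``rainbow''), and, by local surjectivity at the degree-$5$ vertices $a,b,c$, every $p\in P$ has neighbours covering all of $v_1,\dots,v_5$, whence $\deg_H(p)\ge 5$.

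Next I would pass to the $Y\!\Delta$-picture, which converts the cubic condition into plane \emph{triangles}. As in Proposition~\ref{prop:clYDelta}, replacing each cubic $u\in Q$ by the triangle on its three rainbow neighbours produces a plane graph $H'$ on the vertex set $P$ that is the edge-disjoint union of $q:=|Q|$ triangles, one per $v_i$-copy, each carrying one copy of $a$, one of $b$, one of $c$, and each bounding an (empty) triangular face of $H'$. Counting incidences between $P$-vertices and triangles gives $3q=\sum_{p\in P}(\#\text{triangles through }p)\ge 5|P|$. Writing $p:=|P|$, this yields $q\ge\tfrac53 p>p$. On the other hand, Euler's formula for $H'$ gives $F=2-p+3q$ faces and face-degree sum $2|E(H')|=6q$; the $q$ gadget triangles account for degree $3q$, so \emph{provided $H'$ has no face of degree~$2$}, the remaining $F-q$ faces each have degree $\ge3$ and one computes $6q\ge 3q+3(F-q)=9q-3p+6$, i.e.\ $p\ge q+2$. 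Together with $q\ge\tfrac53p$ this forces $p\le -3$, a contradiction. So the whole proof reduces to showing that $H'$ has no degree-$2$ face.

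The hard part is exactly this last point, and it is genuinely necessary rather than cosmetic: because $q>p$ is \emph{forced}, $H'$ must contain parallel edges, so the simple-graph bound $|E|\le 3|V|-6$ is unavailable and a naive degree-plus-planarity count is consistent (the surplus edges reappear precisely as degree-$2$ faces). A degree-$2$ face of $H'$ is a pair of parallel edges $p\,p'$ coming from two triangles $u_1,u_2$, which in $H$ is an \emph{empty quadrilateral face} $p\,u_1\,p'\,u_2$ whose two cubic apexes send their third (rainbow) legs to the outside. Thus the contradiction must come from ruling these out, and this is where one must use the \emph{five distinct labels} carried around each $P$-vertex together with the cyclic order of the plane embedding, rather than degrees alone. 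Concretely, I would take $H$ with $|V(H)|$ minimum among all cubic-$Q$ planar emulators and show that an empty quadrilateral $p\,u_1\,p'\,u_2$ is reducible: since its interior is empty, one can locally re-route or merge $u_1,u_2$ so as to destroy the parallel pair while keeping $\varphi$ locally surjective, contradicting minimality. Making this surgery respect the covering condition at all of $a$, $b$ and $c$ simultaneously is the delicate, technically loaded step — and it is precisely the point at which Fellows' discharging scheme for \emph{covers} only barely survives for emulators, being rescued here by the fact (Lemma~\ref{lem:cubic_vert}) that the $v_i$-copies may be assumed truly cubic.
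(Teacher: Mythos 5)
Your setup (invoking Lemma~\ref{lem:cubic_vert} to make every vertex over $X$ cubic, observing bipartiteness and $\deg_H(p)\ge 5$ for $p\in\varphi^{-1}(\{a,b,c\})$) matches the paper's, but from there the two arguments diverge, and yours has a gap at exactly the point you flag as ``the hard part.'' The claim you need --- that a (minimal) emulator has no empty quadrilateral $p\,u_1\,p'\,u_2$, i.e.\ no face of length~$4$ in $H$ --- is not merely delicate; it is unobtainable. Your own Euler count shows why: refining it to keep track of the number $f_2$ of degree-$2$ faces of $H'$ gives $6q\ge 3q+2f_2+3(F-q-f_2)$ with $F=2-p+3q$, hence $f_2\ge 3q-3p+6\ge 2p+6$ once $3q\ge 5p$ is plugged in. So \emph{every} emulator of $K_{3,5}$ with cubic $X$-part must contain at least $2|P|+6$ such quadrilateral faces; no surgery on a minimal example can eliminate them all. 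Moreover the individual configuration is not locally reducible: if $\varphi(u_1)\ne\varphi(u_2)$, then $p$ legitimately needs both a $\varphi(u_1)$-representative and a $\varphi(u_2)$-representative among its neighbours, and their sharing the second common neighbour $p'$ violates nothing in Definition~\ref{def:emulator}. So the proposed reduction has no target to reduce to, and the argument collapses.

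The paper's proof tolerates $4$-faces throughout and needs something much weaker: that each degree-$5$ vertex $z\in\varphi^{-1}(a)$ is incident with at least one face of length $\ge 6$. This is where the odd part size $5$ enters, via a parity argument rather than a reducibility argument: if all five faces around $z$ had length $4$, the five cubic neighbours of $z$ would force their ``third'' neighbours in the second neighbourhood of $z$ to alternate between $\varphi^{-1}(b)$ and $\varphi^{-1}(c)$ around an odd cycle of five positions, which is impossible. With that single local fact, a discharging scheme (charge $3(4-\deg)$ on vertices and $3(4-\mathrm{len})$ on faces, cubic vertices pushing $1$ to each neighbour, degree-$5$ vertices pushing their resulting $+2$ into an incident long face) makes every vertex and face nonpositive while Euler's formula says the total is $+24$. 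If you want to salvage your counting framework, the quantity you should be bounding is not $f_2$ but the number of long faces incident to each degree-$5$ vertex of $P$ --- which is precisely the discharging proof in disguise.
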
}
\end{onlyaccum}

\begin{accumulate}
% We remark, as a curiosity, that the following proof which is sometimes
% attributed to Fellows'\,\cite{cit:fellows}, is of an unclear origin (see
% also~\cite{cit:hunekecov}).
% Its extension from covers to emulators is actually rather tricky,
% heavily benefiting from Lemma~\ref{lem:cubic_vert}.
\begin{proof}
Suppose, for a contradiction, that the graph $K_{3,5}$ has a planar
emulator~$H$ with a projection $\varphi:V(H)\to V(K_{3,5})$, 
and denote by $X\subseteq V(K_{3,5})$ the subset of degree-$3$ vertices in $K_{3,5}$.
By Lemma~\ref{lem:cubic_vert}, we may assume that all the vertices in $H$
representing some vertex of $X$ are of degree~$3$ as well.
Furthermore, since a homomorphic image of an odd cycle contains an odd cycle
but $K_{3,5}$ is bipartite,
the emulator $H$ is also bipartite.
Hence the overall setting is (almost) as in the cover case and we may apply
arguments analogical to \cite{cit:hunekecov,cit:20years}.

We use the so called discharging method.
We assign {\em charge} of $3(4-deg(x))$ to every vertex $x$ of degree $deg(x)$,
and of $3(4-len(f))$ to every face $f$ of length $len(f)$ in~$H$.
Note that $len(f)\geq4$ is always even in $H$.
By Euler's formula, the total charge of $H$ is positive $12\cdot2>0$.
The aim of the discharging method is to redistribute this charge across $H$
in a way that the resulting amount is nonpositive,
which would give a contradiction to supposed planarity of~$H$.

Subsequently, every degree-$3$ vertex of $H$ (i.e., every vertex representing one of $X$)
sends its charge equally $1$ to each neighbor.
Then any vertex $y\in V(H)$ of degree $d\geq6$ or more ends up with total charge
of at most $3(4-d)+d=12-2d\leq0$.
On the other hand, every degree-$5$ vertex $z\in V(H)$ now has charge of $-3+5=2$.
That charge is subsequently sent from $z$ to any incident face
of length $\ell\geq6$ in $H$,
which then ends up with charge of at most $3(4-\ell)+\ell=12-2\ell\leq0$.
This gives the required contradiction provided we can show that not all
faces incident with $z$ are of length~$4$.

\begin{figure}[tb]
\centering
\includegraphics[width=.75\hsize]{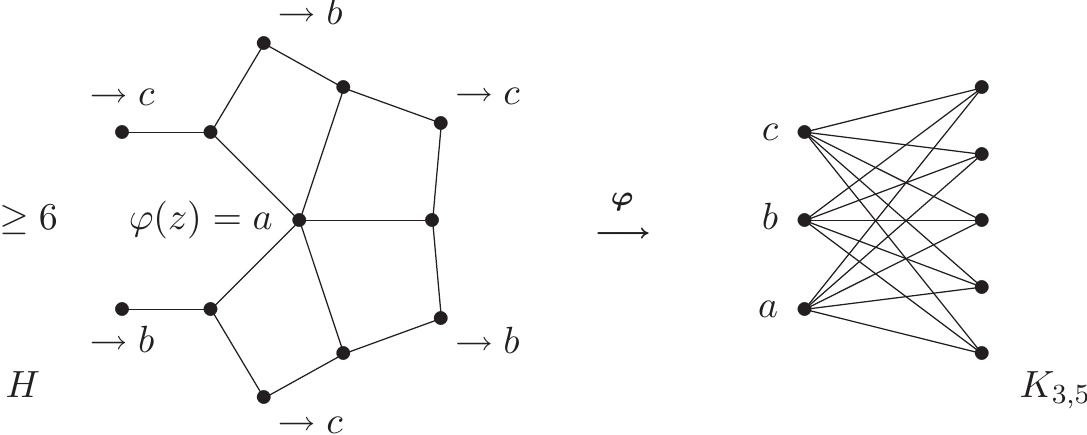}
\caption{An illustration of the proof of Theorem~\ref{thm:K35noemul};
	not all faces incident with the central vertex $z$ can be of length~$4$.}
\label{fig:K35cov}
\end{figure}

Now assume the latter,
and denote by $\{a,b,c\}=V(K_{3,5})\setminus X$ such that $\varphi(z)=a$.
See \figurename~\ref{fig:K35cov}.
The neighbors of $z$ in $H$ are all of degree $3$, and each one of them needs one
additional neighbor representing $b$ and one representing $c$.
So the vertices in the second neighborhood of $z$ in $H$ 
alternatingly represent $b,c,b,c,\dots$,
and we thus cannot have exactly five of them incident to faces of
length~$4$ around $z$.
This contradiction proves that some face incident to $z$ is of length
$\geq6$, as needed to finish the proof.
\qed\end{proof}

\end{accumulate}

Lastly, we remark that also the graphs $K_7$ and $K_{4,4}$ cannot have
planar emulators by Euler's formula, but these may not be minor-minimal such ones.
In particular, there is some (yet unknown) subgraph of the complete graph $K_7$ which is
a minor-minimal non-planar-emulable graph, as discussed in
Section~\ref{sec:conclus}.

%%%%%%%%%%%%%%%%%%%%%%%%%%%%%%%%%%%%%%%%%%%%%%%%%%%%%%%%%%%%%%%%%%%%%%%%

\section{Constructing New Planar Emulators}
\label{sec:construct}
\begin{onlyaccum}
\subsection{Additions to Section~\ref{sec:construct}}
\end{onlyaccum}

The central part of this paper deals with new constructions of planar
emulators which consequently give the proof of Theorem~\ref{thm:main}.
\begin{onlynoaccum}
In this section we sketch the interesting
(and in some sense central) emulators for
the graphs \EE and \KK (\figurename~\ref{fig:32nonproj}),
while a more detailed description together with emulators for the rest
of the graphs discussed in Theorem~\ref{thm:main} can be found in the Appendix. 
\end{onlynoaccum}
We remark that, to our best knowledge, no planar emulators of nonprojective 
graphs other than those mentioned in Theorem~\ref{thm:rieck} have been 
studied or published prior to our paper.
Moreover, using our systematic techniques we have succeeded in
finding a much smaller emulator for \KKK than the one presented by Rieck and
Yamashita~in~\cite{cit:rieck}.

% The existence of such emulators can be easily proved by showing proper
% constructions.  In this section we give a sketch of building emulators for
% We remark that since our paper no other emulators for forbidden minors for
% projective plane were known, except for those mentioned in
% Theorem~\ref{thm:rieck}.  

\subsubsection*{Planar emulator for \EE.}
In order to obtain an easily understandable description of an emulator for \EE,
we note the following:
A graph isomorphic to \EE (in \figurename~\ref{fig:32nonproj}) can be constructed
from the complete graph $K_4$ on $V(K_4)=\{1,2,3,4\}$
by subdividing each edge once, calling the new vertices \emph{bi-vertices},
and finally introducing a new vertex $0$ adjacent to all the bi-vertices.

\begin{figure}[tb]\centering
\includegraphics[height=3.6cm]{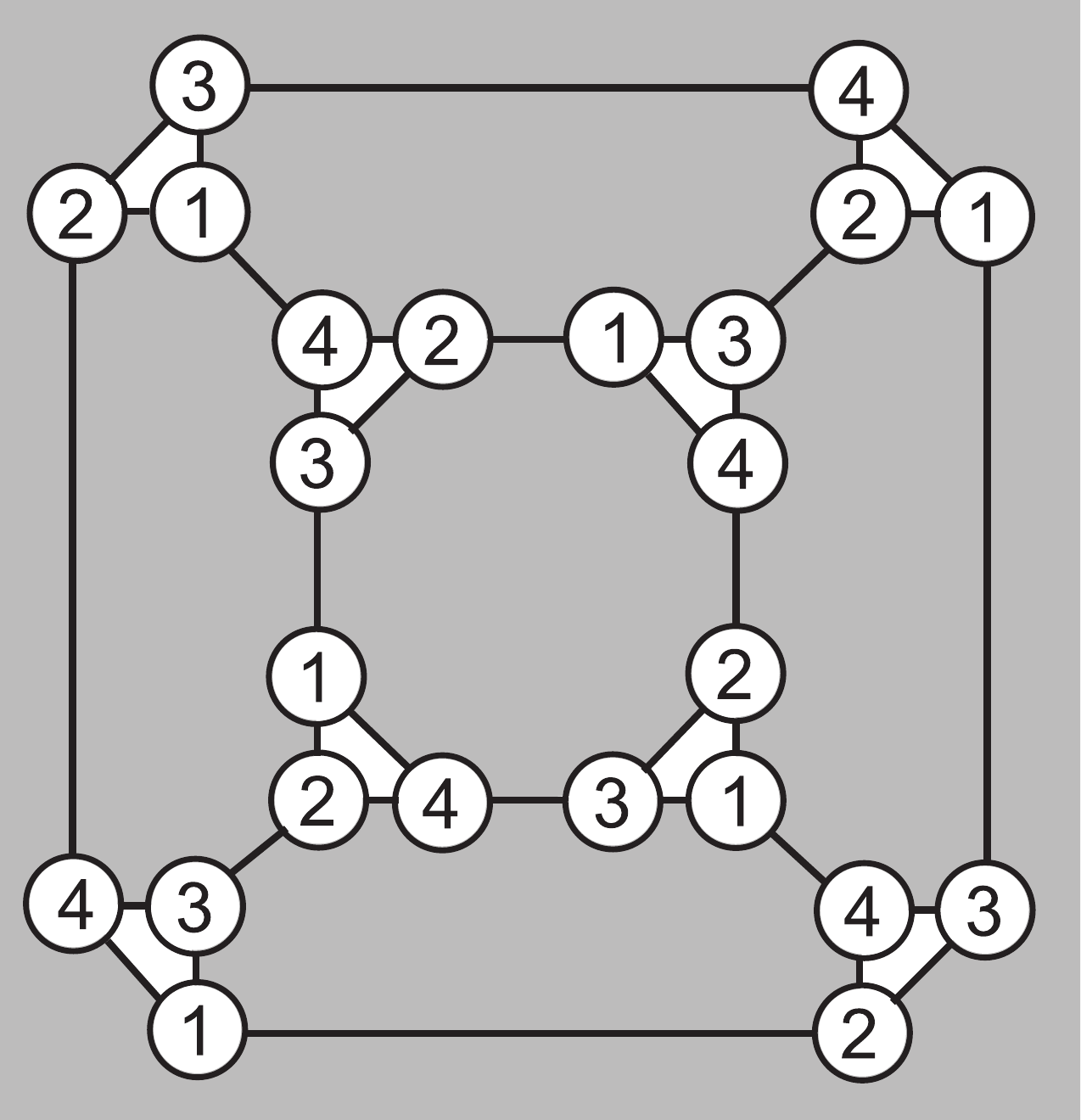} \qquad\qquad
\includegraphics[height=3.6cm]{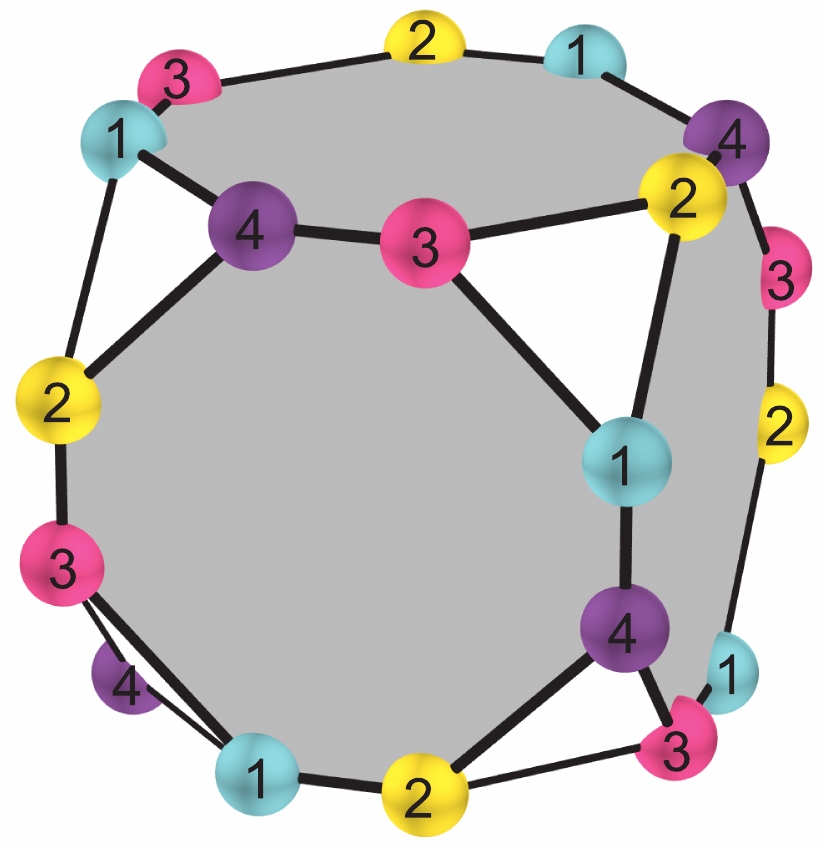}
\caption{A planar emulator (actually, a cover) for the complete graph 
	$K_4$ with the rich faces depicted in gray colour. 
	The same figure in a ``polyhedral'' manner on the right.}
\label{fig:e2helper}
\end{figure}
% \begin{figure}[htbp]\centering
% \caption{Cover for the \ensuremath{{K}_4} with rich faces (denoted by gray color) and its visualisation.\label{fig:e2helper}}
% \end{figure}

A similar sketch can be applied to a construction of a planar emulator for~\EE:
If one can find a planar emulator for $K_4$ with the additional property
that each edge is incident to at least one \emph{rich} face---i.e., a face
bordered by representatives of all edges of $K_4$,
then a planar emulator for \EE can be easily derived from this.
More precisely, if $H_0$ is such a special emulator of $K_4$,
see an example in \figurename~\ref{fig:e2helper},
then the following construction is applied.
Each edge of $H_0$ is subdivided with a new vertex representing the
corresponding bi-vertex of \EE,
and a new vertex representing the vertex $0$ of \EE is added to every rich
face of $H_0$ such that it is adjacent to all the 
subdividing vertices within this face.
The resulting plane graph $H$ clearly is an emulator for \EE
(and this construction is reversible).

\begin{figure}[tb]\centering
\includegraphics[height=7cm]{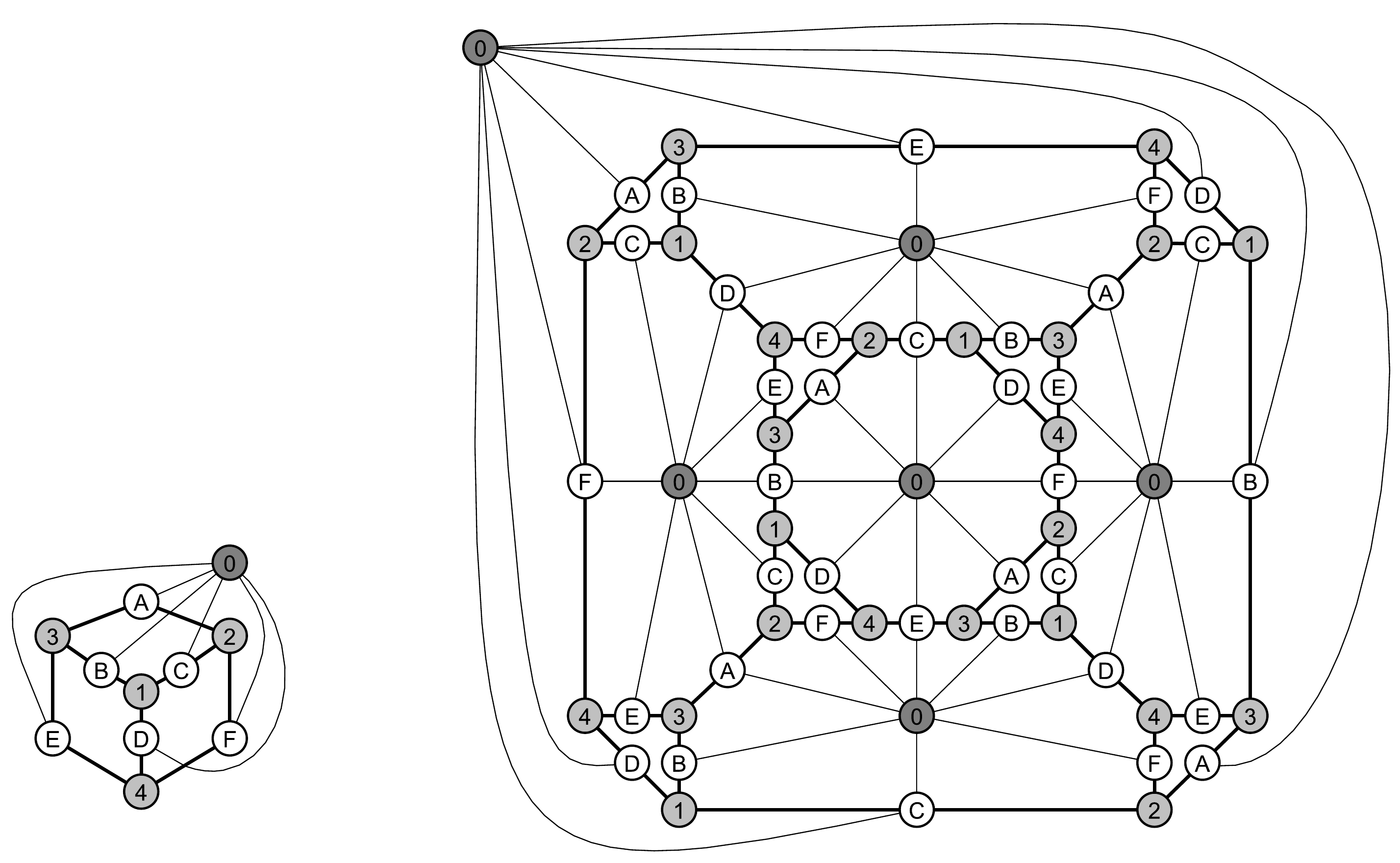}
\caption{A planar emulator for \EE.
	The bi-vertices of the construction are in white and labeled with
	letters, while the numbered core vertices 
	 (cf.~\figurename~\ref{fig:e2helper}) are in gray.}
\label{fig:e2_final}
\end{figure}

\begin{figure}[htbp]\centering
\includegraphics[height=7cm]{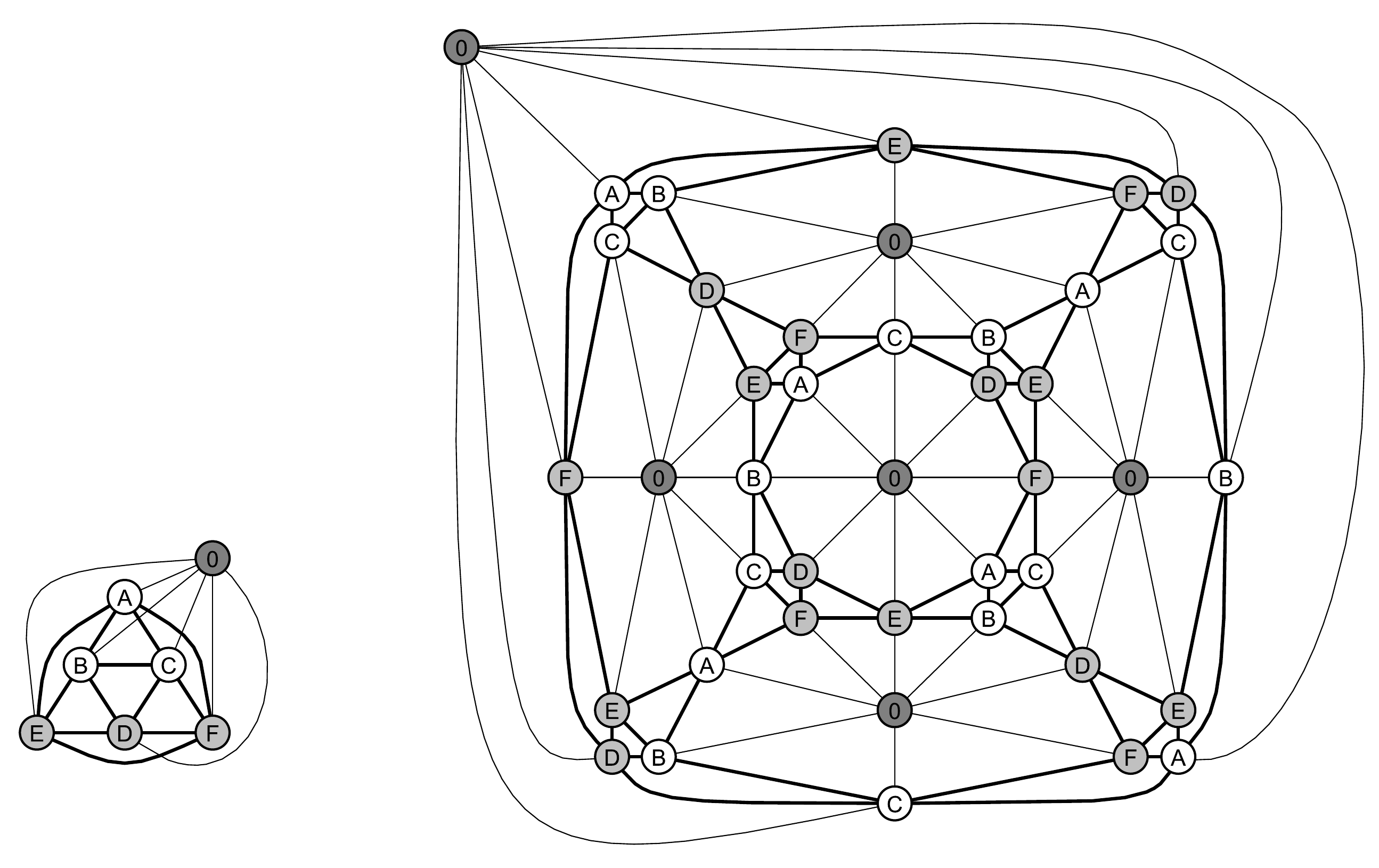}
\caption{A planar emulator for \KKK; 
	obtained by taking {\ensuremath{Y\!\Delta}}-transformations on
	the core vertices labeled $1,2,3,4$ of the \EE emulator from
	\figurename~\ref{fig:e2_final}.}
\label{fig:K1222emul}\label{fig:k1222_final}
\end{figure}

% In order to obtain an emulator for \EE, we should notice that \EE itself can
% be obtained from \ensuremath{{K}_4} by subdividing each edge by a new
% \textit{bi-vertex}, indtroducing new vertex 0 and connecting it to all new
% \textit{bi-vertices}.  So, if we find an emulator for \ensuremath{{K}_4}
% with an additional property that every edge is incident to at least one
% \emph{rich} face (a face is \emph{rich} if it is bordered by at least one
% representative of each
% edge), we can deduce an emulator for \EE itself: Subdivide each edge, and place a representative for the node~$0$ into each rich
% face. Then, we can planarly connect each bi-vertex to at least one node
% representing~$0$. Vice versa, each node representing~$0$ can be planarly
% connected with all bi-vertices of~\EE (the property of rich faces). The resulting graph clearly is an
% emulator of~\EE.

Perhaps the simplest possible such an emulator for $K_4$ with rich faces is
depicted in \figurename~\ref{fig:e2helper} (left).
This leads to the nicely structured planar emulator for the graph \EE in
\figurename~\ref{fig:e2_final}.
It is also worth to note that the same core ideas which helped us to find
this emulator for \EE, were actually used in \cite{cit:c4e2} to prove the
{\em nonexistence} of a planar cover for \EE.
This indicates how different the coverability and emulability concepts are
from each other, too.

% It remains to identify the emulator for \ensuremath{{K}_4} with rich faces.
% Using triangles as ours building blocks we can achive this goal as
% Figure~\ref{fig:e2helper} shows.  Once we have found this construction, we
% can easily identify an \EE's emulator, see Figure~\ref{fig:e2_final}.

\begin{onlyaccum}
\subsubsection*{Planar emulator for \EE; additional notes.}
\end{onlyaccum}
\begin{accumulate}
\medskip
There is another interesting point to mention about our emulator for \EE
---the plane graph can be quite beautifully pictured as a polyhedron
(compare to \figurename~\ref{fig:e2helper} right).
Consider a cube; it has 8 corners, 12 ridges (we avoid the term \emph{edge}
here), and 6 facets. Interpreting the corners, ridges, and facets of any
convex polyhedron as the vertices, edges, and faces gives a planar graph;
geometrically, we obtain a plane drawing of the cube graph 
by choosing a perspective projection from a point close to one of the cube facets.

Then we may truncate (``cut'') each of the eight corners of the cube
(geometrically, to obtain a \emph{truncated hexahedron}, an {\bf Archimedean solid}),
and represent each of the eight $6$-cycle (but triangle-shaped) faces of the
emulator from \figurename~\ref{fig:e2_final} at each of the truncated
corners.
% place one of our triangle subgraphs of an emulator at each of the cube's corners such that the
% triangle's nodes are close to the cube's corner, one residing on each of the
% three edges incident to the corner. (Geometrically, this means we
% "cut'' the corners of the cube, and therefore obtain a
% \emph{truncated hexahedron}, an Archimedean solid.)
We place, among those $6$-faces, pairs of the same type at the
opposite corners of the cube. 
Then we add the respective missing edges along the cube ridges, 
and finally we place the remaining vertices representing $0$ into the
six octagonal facets of the body, which correspond to the rich gray faces
from \figurename~\ref{fig:e2helper}.

% matching with the triangle on the adjacent corner. We observe that each
% of the cube's facets is adjacent to all types of edges of the core graph. Hence,
% the graph embedded onto the cube (and hence also when embedding it into the
% plane) is a planar embedding (cover, in fact) of the core graph, with the
% desired property regarding rich faces.
\end{accumulate}

% \vspace*{-3.5ex}

\subsubsection*{More emulators derived from the \EE case.}
\begin{onlyaccum}
\subsubsection*{More emulators derived from the \EE case.}
\end{onlyaccum}

By Proposition~\ref{prop:clYDelta}, the property of having a planar
emulator is closed under taking {\ensuremath{Y\!\Delta}}-transformations. 
Moreover, the proof is constructive, and we may use it to mechanically
produce new emulators from existing ones (this principle goes even slightly beyond
straightforward ${Y\!\Delta}$-transformations, see Section~\ref{sec:forbminors}).
Therefore we can easily obtain an alternative emulator for \KKK
(cf.~Theorem~\ref{thm:rieck}) which is significantly smaller and simpler
than the original one in~\cite{cit:rieck}.
The emulator is presented in \figurename~\ref{fig:k1222_final}.

\begin{figure}[htb]
\centering
\includegraphics[height=7cm]{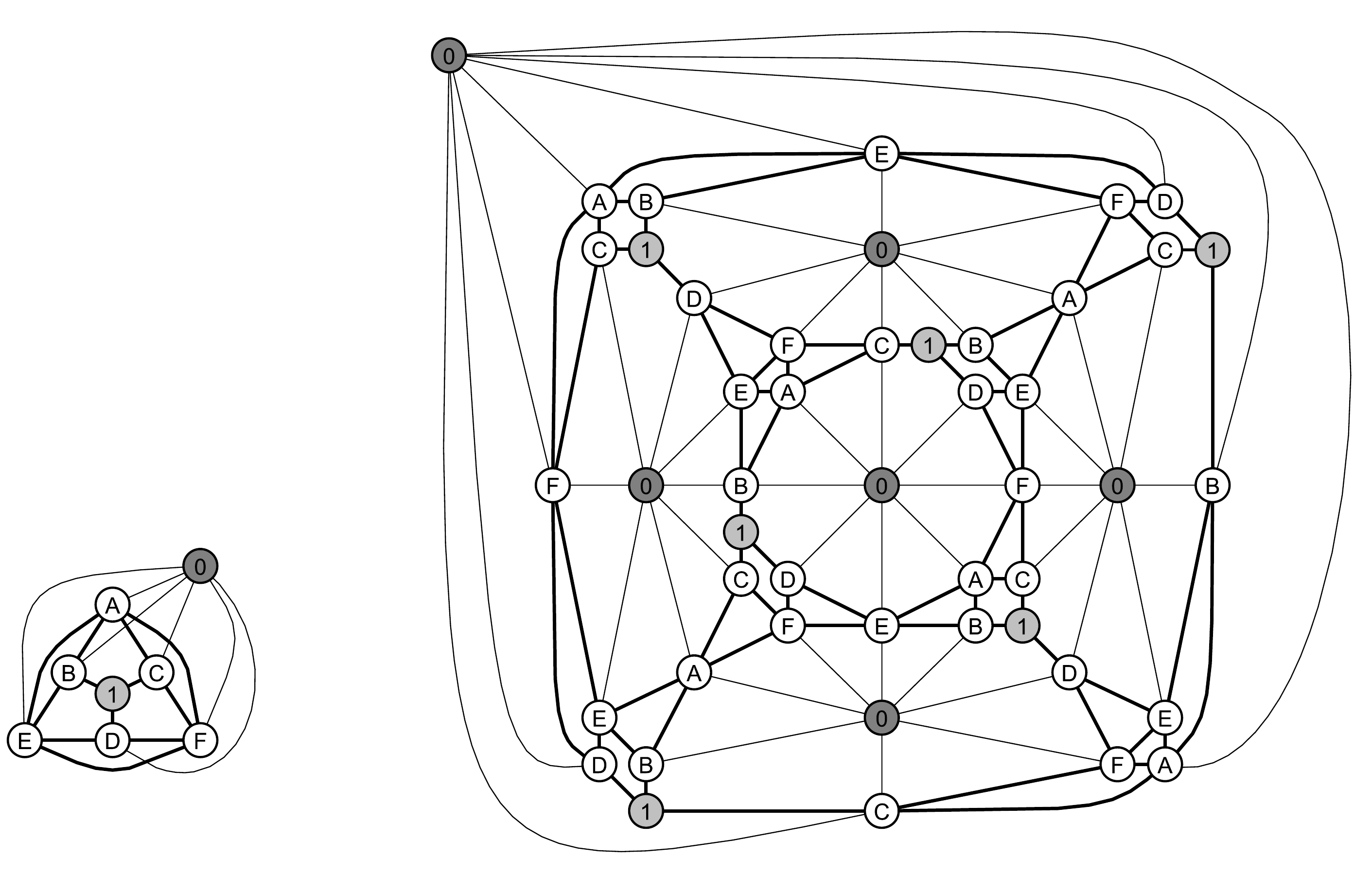}
\caption{Emulator for \ensuremath{\mathcal{B}_7} }
\label{fig:B7_final}
\end{figure}
\begin{figure}[htb]
\centering
\includegraphics[height=7cm]{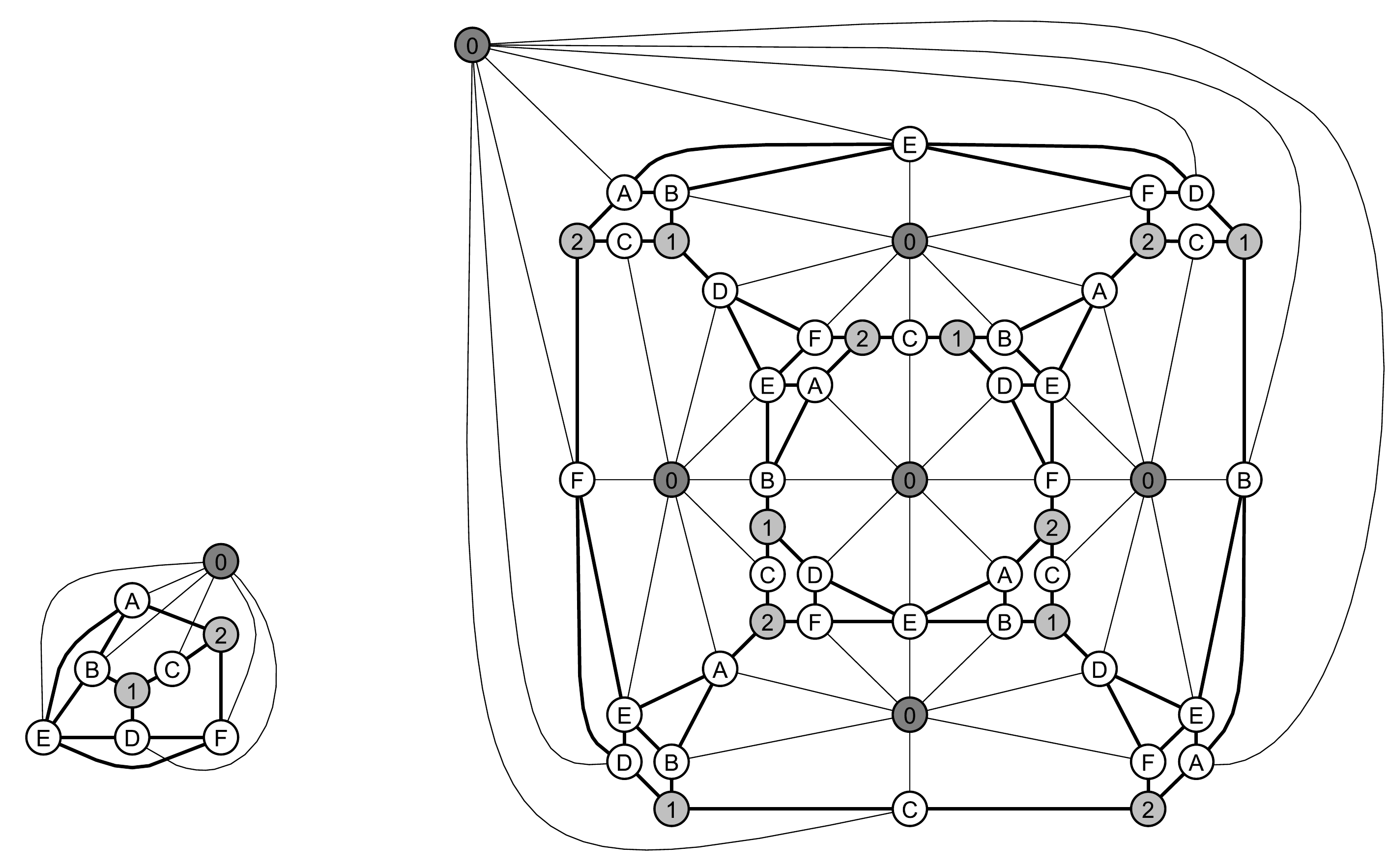}
\caption{Emulator for \ensuremath{\mathcal{C}_3}}
\label{fig:C3_final}
\end{figure}
\begin{figure}[htb]
\centering
\includegraphics[height=7cm]{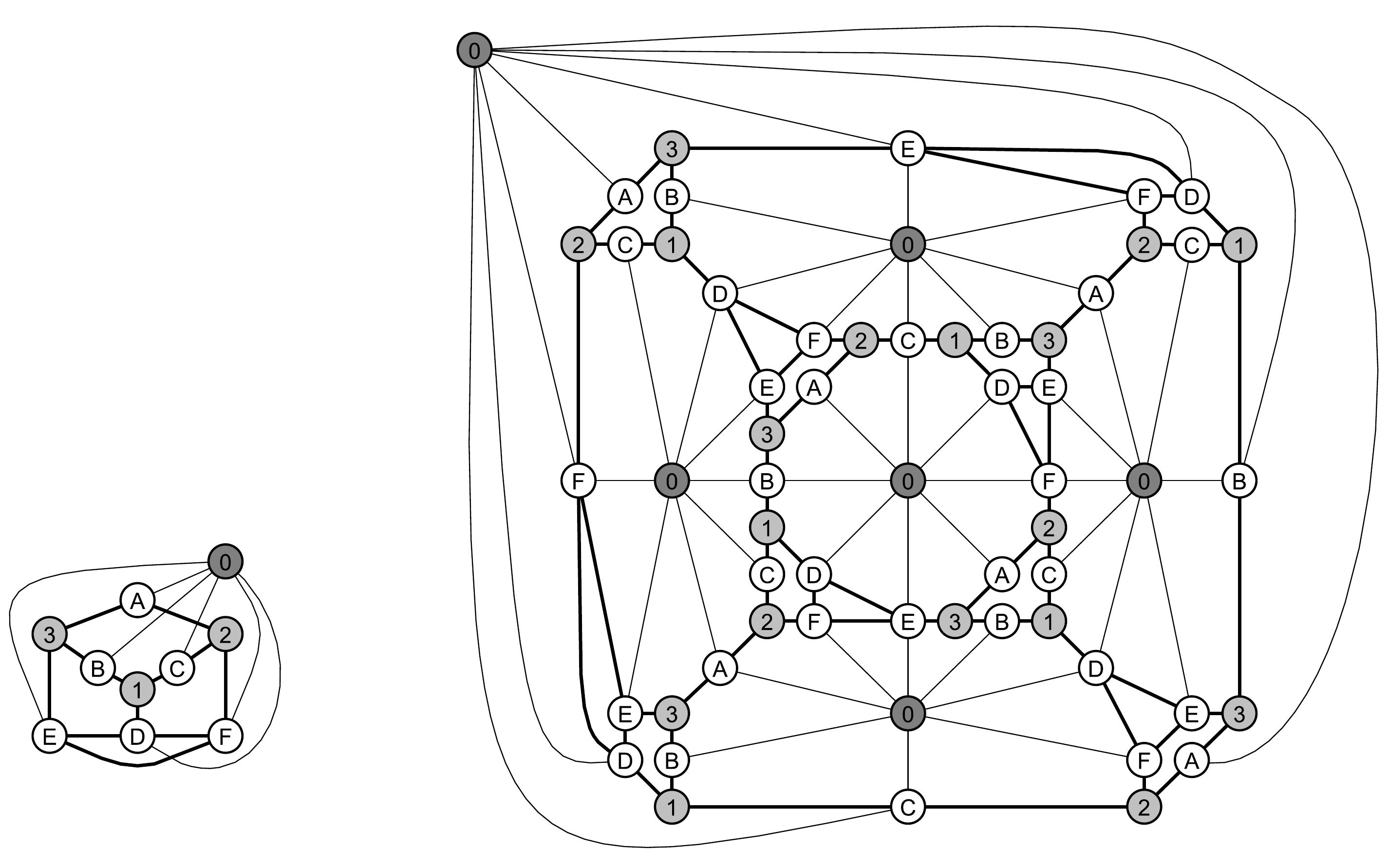}
\caption{Emulator for \ensuremath{\mathcal{D}_2}}
\label{fig:D2_final}
\end{figure}

Furthermore, in the same mechanical way, we can obtain planar emulators for
other members of the ``\KKK-family''; namely for $\ca B_7$, $\ca C_3$, $\ca D_2$
in \figurename~\ref{fig:32nonproj}.
\begin{accumulate}
Several more interesting planar emulators can be straightforwardly 
obtained from that of \EE by means of ${Y\!\Delta}$-transformations.
See these emulators in
Figures~\ref{fig:B7_final},\ref{fig:C3_final},\ref{fig:D2_final}.

\end{accumulate}
\begin{onlynoaccum}
On the other hand, finding a planar emulator
for the last member, $\ca C_4$, seems to be a more complicated case---the
smallest one currently has $338$ vertices and we
postpone its description to the appendix.
\end{onlynoaccum}

\begin{accumulate}
\subsubsection{Planar emulator for \CC.}
Consider the graph \CC drawn and labeled as in Figure~\ref{fig:c4},
and observe that it is constructed of the cube graph with
all nodes except for two (say $0$ and $7$) in the opposing corners of the cube
adjacent to an additional vertex $x$.

\begin{figure}[htdp]
\centering\bigskip
\includegraphics[height=3cm]{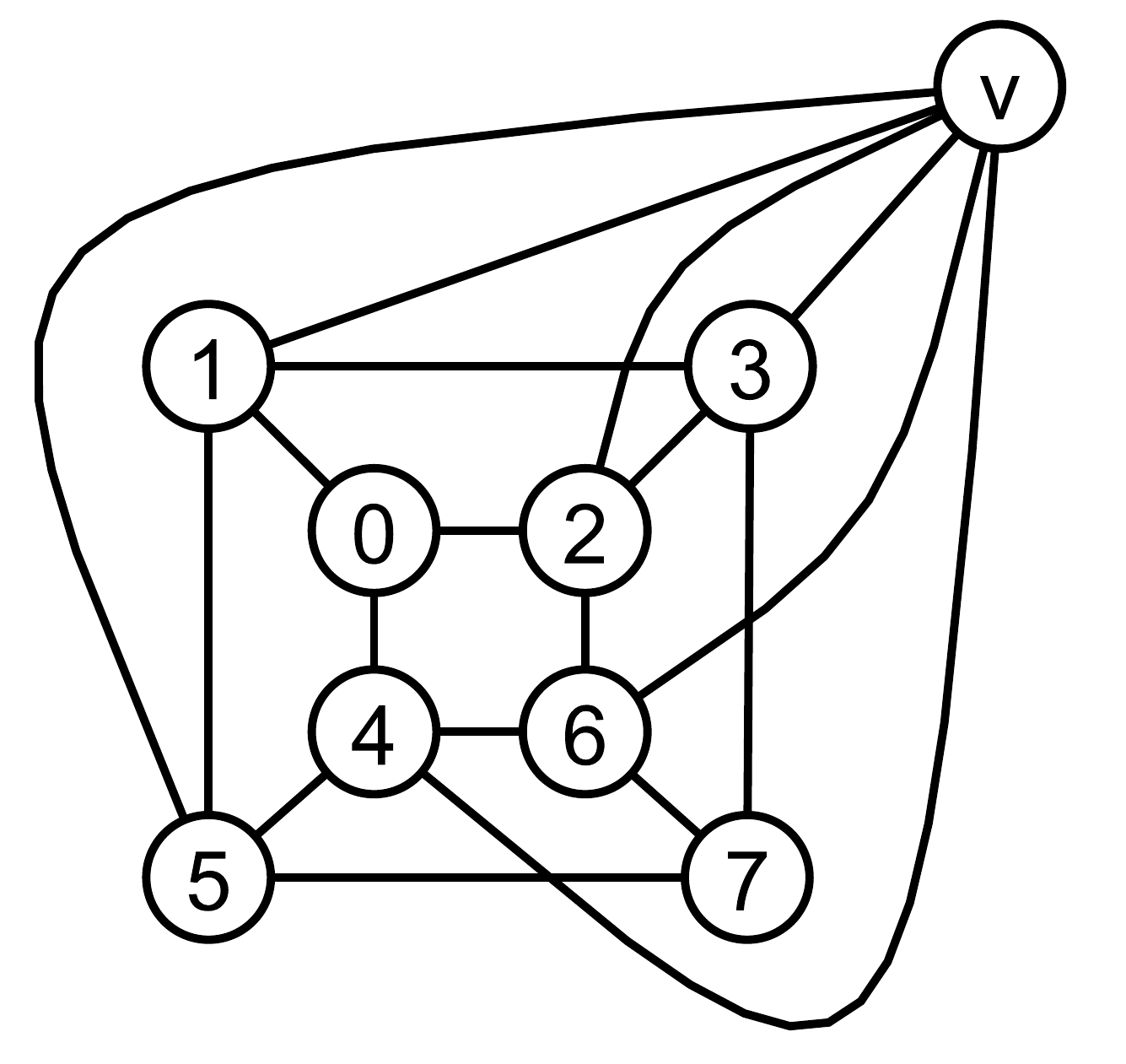}
\caption{The graph \CC.\label{fig:c4}}
\end{figure}

Figure~\ref{fig:c4gadget} shows the gadget we will be using: We can think of it as the
trace that arises when rolling the underlying cube over its ridges.
We start (north-west of the gadget) with the cube lying on the facet 
$\{0,1,2,3\}$, and roll it along its $\{1,3\}$ ridge, such that it lies down
with the facet $\{1,5,7,3\}$. 
Overall, we roll the cube seven times around this axis,
i.e., each possible side is downwards exactly twice; we end up at the 
north-east of the gadget. There, we change the roll-axis,
and roll over the ridge $\{6,2\}$. Again we roll seven times and arrive at the 
south of the gadget. There, we change the roll axis again, and, after seven rolls, 
arrive back at our start position. 

\begin{figure}[tbp]
\centering
\includegraphics[height=4cm]{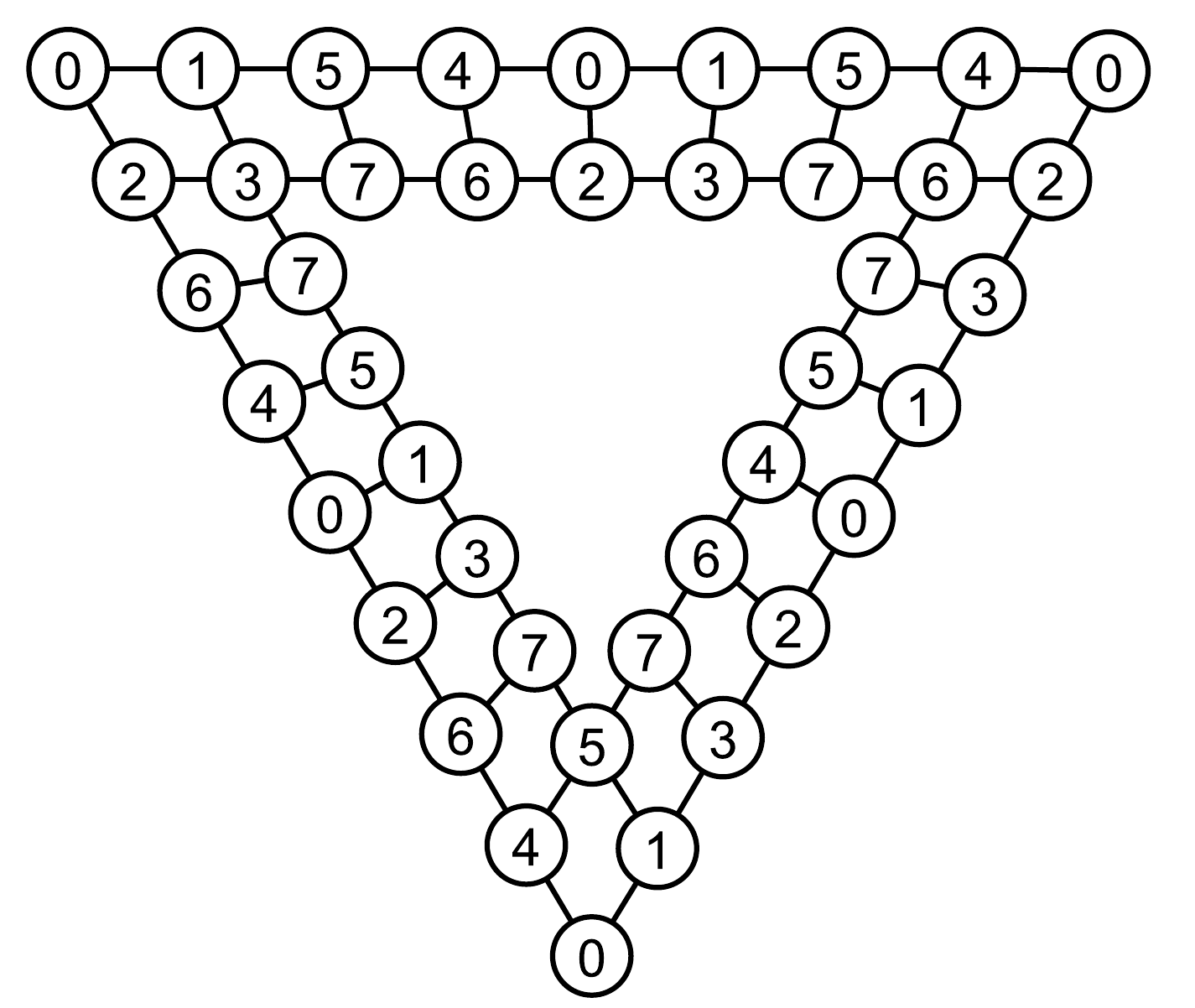}
\caption{Gadget used to build an emulator for \CC.\label{fig:c4gadget}}
\end{figure}

\begin{figure}[htb]
\centering
\includegraphics[height=9cm]{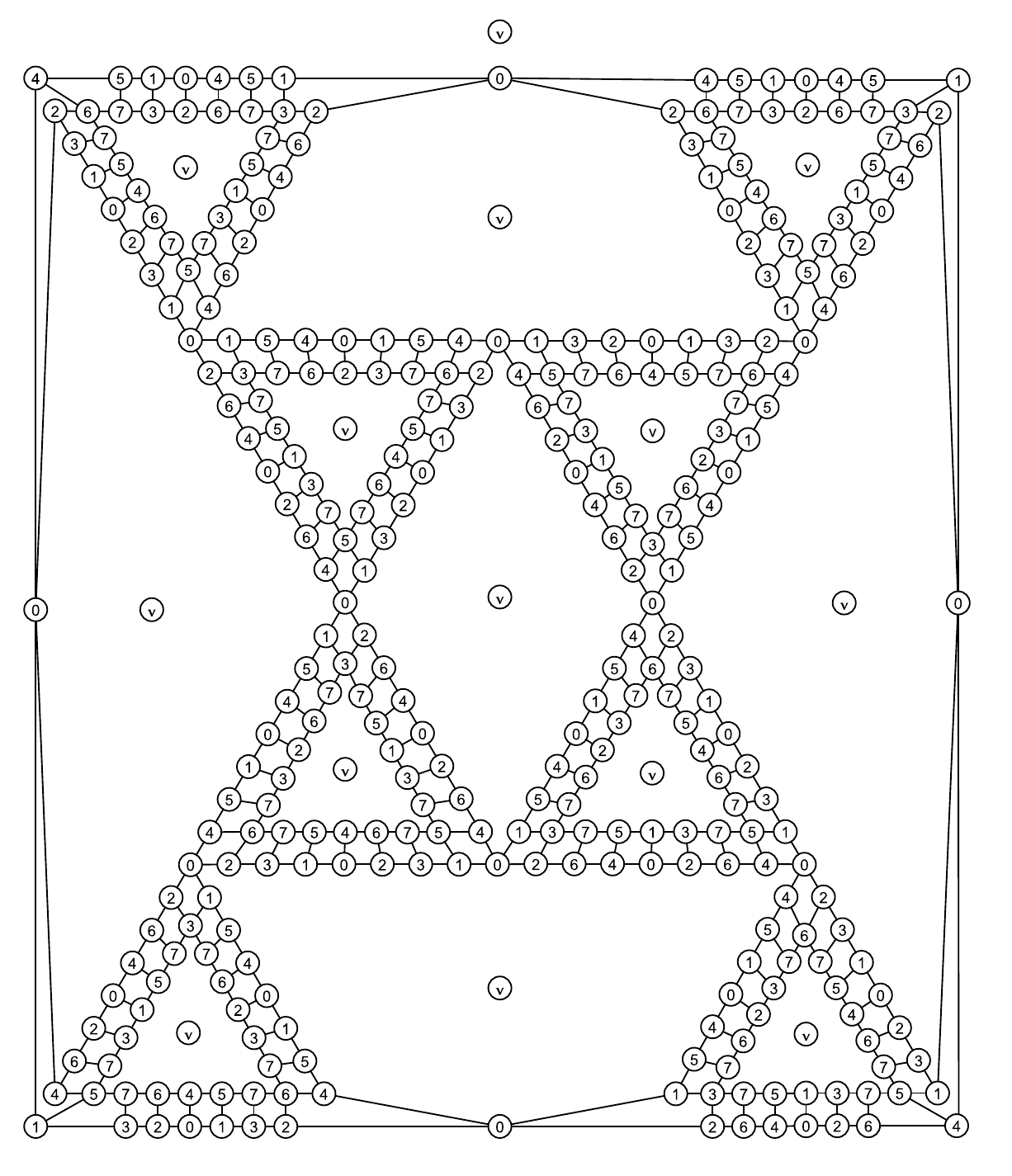}
\caption{The full planar emulator for \CC.\label{fig:c4emul}}
\end{figure}

The arising, triangular-shaped gadget
allows an intuitive notion of \emph{outside} (nodes on the outer face) and 
\emph{inside} (all other nodes; they lie on the largest inner face). It
has several important properties: 
\begin{itemize}
\item The node $0$ only appears on the outside of the gadget. Each such node 
misses exactly one of the neighbors required for~\CC.
\item All nodes $1$--$6$
have degree three, and are adjacent to all the necessary neighbors (w.r.t.\ \CC),
except for $x$.
\item The node $7$ only appears on the inside of the gadget, and is adjacent to
all its necessary neighbors (w.r.t.\ \CC). 
\item Connecting all nodes on the inside (outside, respectively) of the gadget 
with an additional vertex $x'$ representing $x$ suffices for $x'$ to satisfy
its emulator property for \CC.
\item On the outside of the gadget, the node $5$ ($6$, $3$) appears
only on the north (south-west, south-east, respectively) side.
\end{itemize}
We complete the gadget by inserting a node representing $x$ into the inside of 
the gadget.

Now, to obtain an emulator for \CC, we construct a graph embedded on a 
\emph{cuboctahedron} (the Archimedean solid with 8 triangular and 6 square 
facets): clearly, we can draw its wire-frame structure planarly.
Note that each of the polyhedron's ridges is neighbored by one triangular and 
one square facet.
We label all corners of this polyhedron with $0$, and insert a (properly rotated, 
see below) copy of our gadget into each of the polyhedron's triangular facets. 
We can uniquely label the ridges of the polyhedron with $5$, $6$, or $3$, depending
on which of these nodes appear on the gadget's side along that ridge. All nodes $0$ have their required neighbors, and we can place a node representing $x$ into each square face and
connect it with the nodes along its ridges. This establishes an emulator for $\CC$,
cf.\ Figure~\ref{fig:c4emul}.

\end{accumulate}

%%%%%%%%%%%%%%%%%%%%%%%%%%%%%%%%%%%%%%%%%%%%
\subsubsection*{Planar emulator for \KK.}

Already the survey \cite{cit:20years}---when commenting on the surprising
Rieck--Yamashita construction---stressed the importance of deciding
whether the graph \KK is planar-emulable.
Its importance is tied with the structural search for all potential
nonprojective planar-emulable graphs;
see \cite{cit:counterex,cit:martinbc} and Section~\ref{sec:forbminors} for
a detailed explanation.
Briefly saying, \KK (and its ``family'' of \DD, \EEE, \FF;
\figurename~\ref{fig:32nonproj}) are the only projective forbidden minors 
which have planar emulators and are not ``internally $4$-connected''.
In fact, for several reasons we believed that \KK cannot have a planar emulator,
and so it came as another surprise when we have just recently discovered
one.

\begin{figure}[tb] \centering
\vspace{-3mm}
\includegraphics[height=2.7cm]{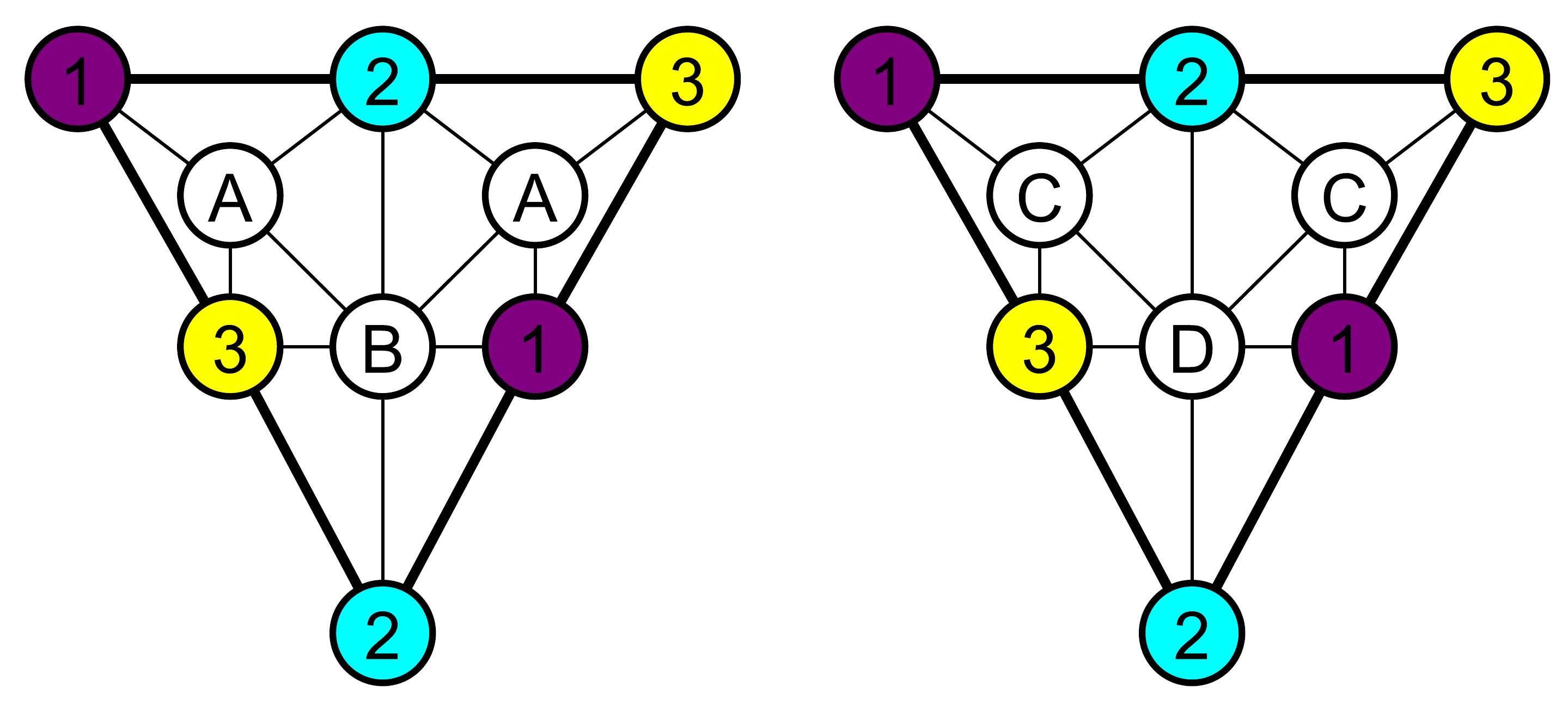}
 \qquad\qquad
\includegraphics[height=3cm]{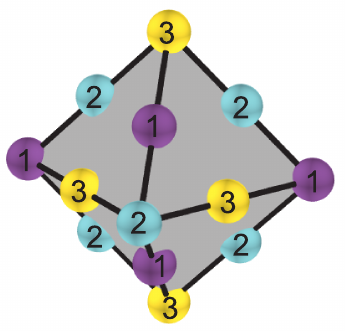}
\caption{Basic building blocks for our \KK planar emulator:
	On the left, only vertex 2 misses an A-neighbor and 1,3 miss a  B-neighbor.
	Analogically on the right. The right-most picture shows the skeleton of the emulator in 
	a ``polyhedral'' manner.}
\label{fig:K7-C4_cells}
\end{figure}

\begin{figure}[tb]\centering
\includegraphics[height=7cm]{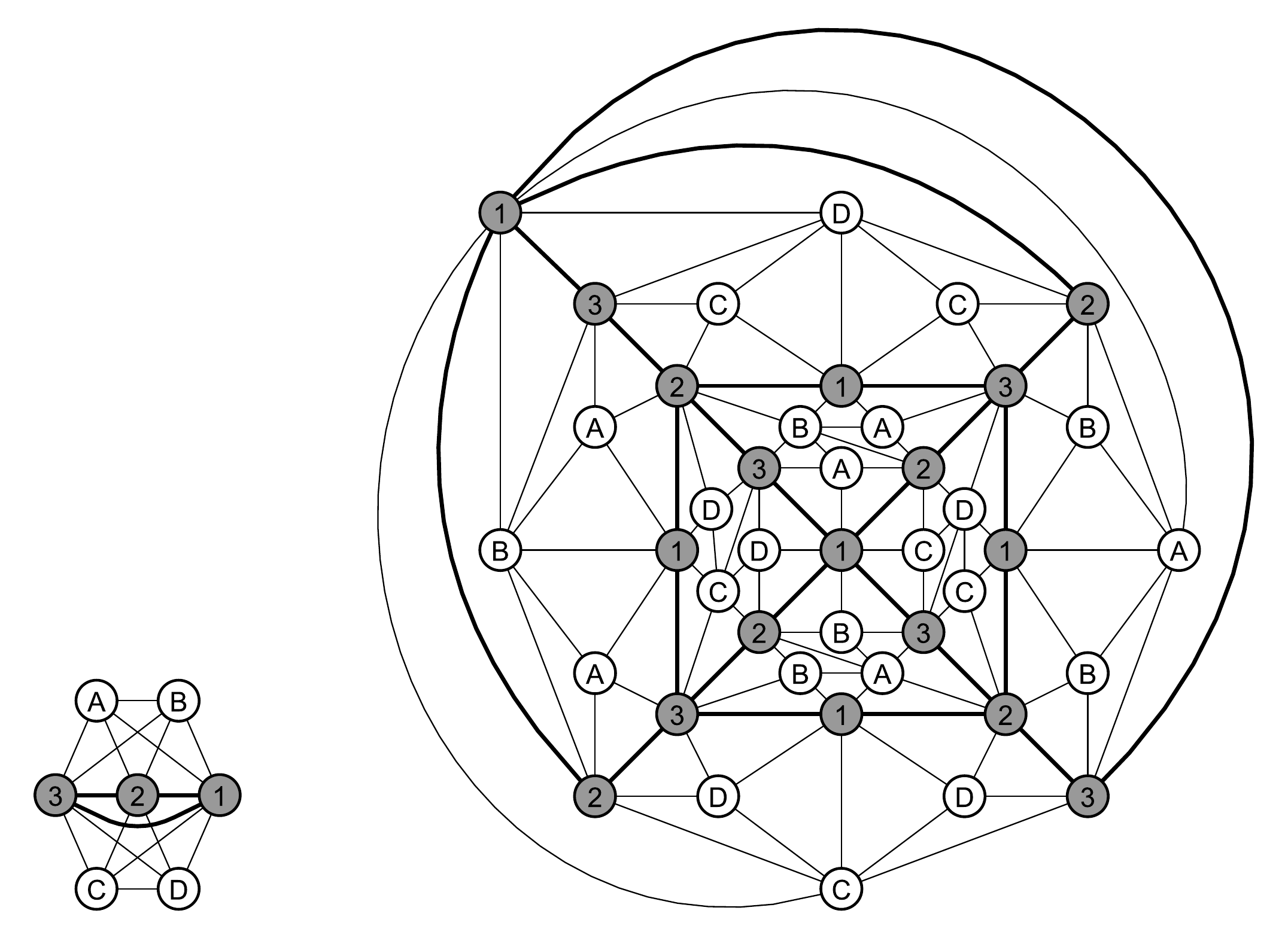}
\caption{A planar emulator for \KK, constructed from the blocks in
	\figurename~\ref{fig:K7-C4_cells}.
	The skeleton representing the central vertices is drawn in bold.}
\label{fig:K7-C4_final}
\end{figure}

In order to describe our planar emulator construction for  \KK,
it is useful to divide the vertex set of \KK into three groups:
the triple of \emph{central vertices} (named
$1,2,3$ in \figurename~\ref{fig:K7-C4_final} left) adjacent to all 
other vertices, and the two vertex pairs (named $A,B$ and $C,D$)
each of which has connections only to its mate and to the central triple.
This view allows us to identify a \emph{skeleton} of the potential emulator
as the subgraph induced on the vertices representing the central
triple~$1,2,3$ and place the remaining vertices representing $A,B$ and
$C,D$ into the skeleton faces, provided certain additional requirements are
met.

This simple idea leads to the introduction of basic building blocks (see
Figure~\ref{fig:K7-C4_cells}), each of which 
``almost'' emulates the subgraph induced on 1,2,3,A,B and 1,2,3,C,D, respectively.
The crucial property of the blocks is that the vertices labeled A,B or C,D have
all the required neighbors in place.
Finally, four copies of each of the blocks can be arranged in the shape of
an \emph{octahedron} such that all missing requirements in the blocks are
satisfied.
The resulting planar emulator is in Figure~\ref{fig:K7-C4_final}.

\begin{onlynoaccum}
Similar, though much more involved, procedures lead to constructions
of planar emulators for the graphs \DD, \EEE, \FF
(which are $Y\!\Delta$-transformable to \KK).
Those emulators have 126, 138, and 142 vertices, respectively,
and we refer interested readers to the appendix.
\end{onlynoaccum}

\begin{accumulate}
\subsubsection*{Presenting the planar emulator for \KK; additional notes.}

Emulator for \KK has a similar property as the one for \EE. We can
embed it into a polyhedron---an octahedron in this case.  We may then
 take 8 cells from Figure~\ref{fig:K7-C4_cells} and call
them \textit{AB/CD cells}.  Three out of six outer vertices of each cell
have both inner vertices as their neighbors (they are \textit{AB or
CD-satisfied}) and the remaining three have one (they are \textit{AB or
CD-half-satisfied}).  We take four cells of each kind and join the outer
vertices such that every \textit{AB/CD cell} will represent one facet of an
octahedron and no two cells of the same kind will be adjacent.  Notice that
the \textit{central vertices} in the middle of each ridge of the octahedron
are incident to two facets and \textit{central vertices} on the corners of
the octahedron are incident to four facets.  We now rotate the \textit{AB/CD
cells} such that vertices on corners are twice \textit{AB-half-satisfied}
and twice \textit{CD-half-satisfied}, every time by a different vertex, and
vertices on the ridges are \textit{AB-satisfied} by one of two incident
facets and \textit{CD-satisfied} by the other.  Such a construction is an
emulator for \KK and can surely be drawn planarly (see
Figure~\ref{fig:K7-C4_final}).

\subsubsection{Planar emulator for \DD.}

Once we can emulate \KK, the natural question to ask is if this construction
can be extended to \DD, a graph which is created by applying a single
{\ensuremath{\Delta Y}} transformation on \KK (replacing one triangular face
of \KK by a vertex of degree 3) - see Figure~\ref{fig:D3}.  Clearly, the
same construction does not work, because of a special property of
vertex 1, which will be discussed later.  Again, we call again three vertices
labeled 1,2,3 the \textit{central vertices} and two other components
\textit{ABC and DE}.  We will consider an extra edge between
vertices 1 and 3, because it does not influence the property of having an
emulator in this case (see section 5).

\begin{figure}[htbp]\centering
\includegraphics[height=3cm]{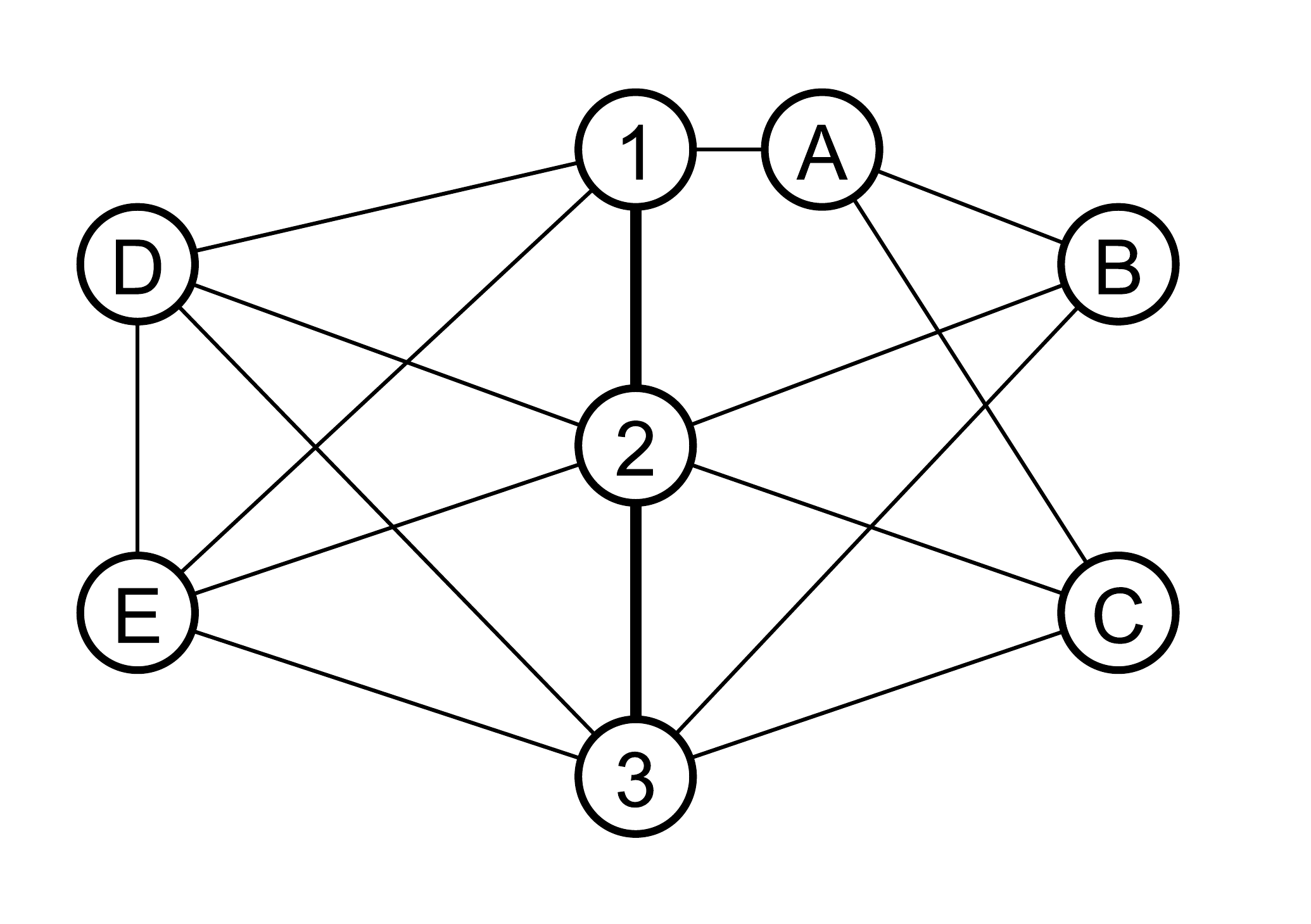}
\caption{\DD}
\label{fig:D3}
\end{figure}

\begin{figure}[htbp]\centering
\includegraphics[height=4.5cm]{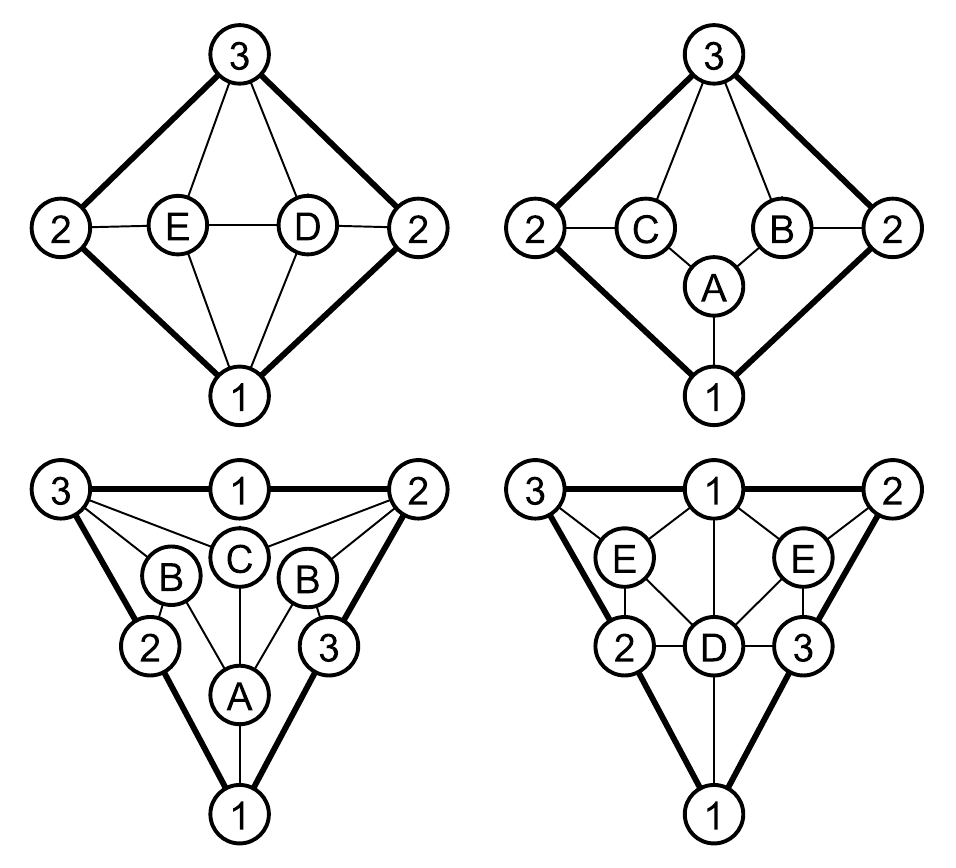}
\caption{Building blocks for \DD emulator.}
\label{fig:D3_cells}
\end{figure}

\begin{figure}[htbp]
\centering
\includegraphics[height=8cm]{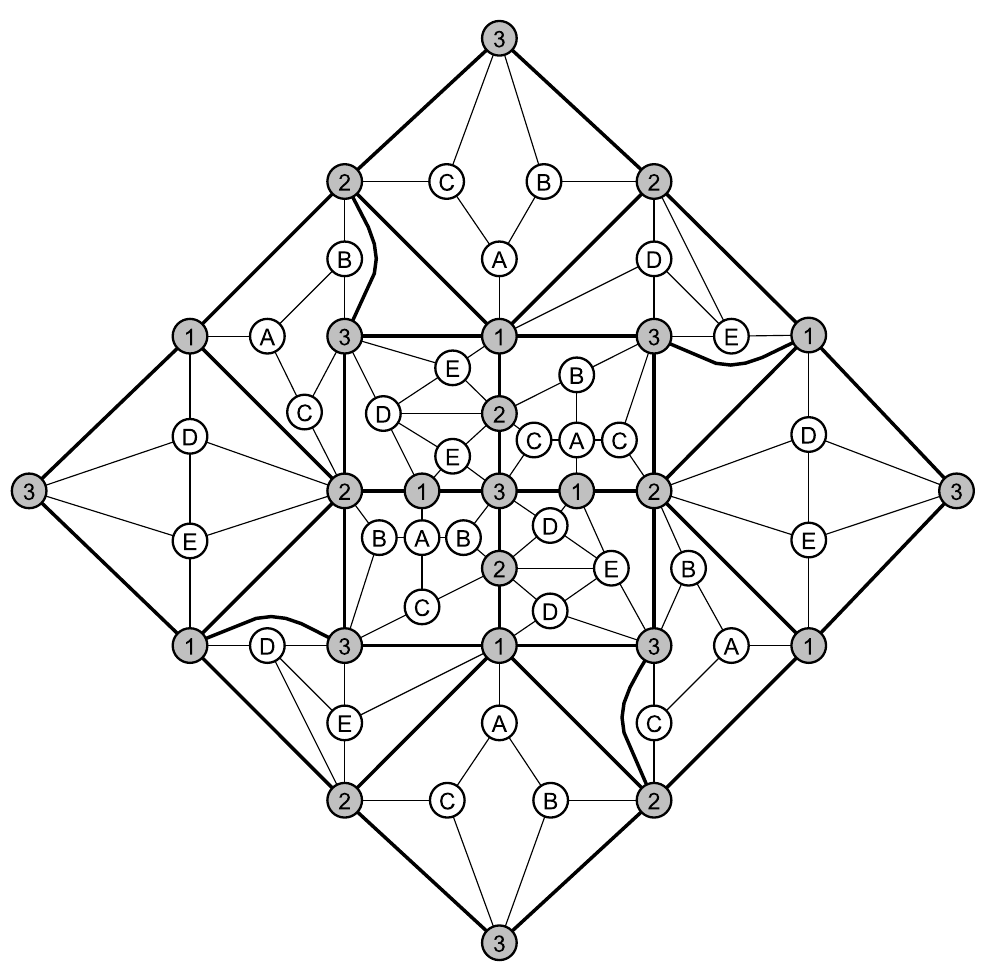}
\caption{The construction built with one half of the emulator for \KK and 8 small cells for the outer vertices to have the maximal number of different
neighbors.}
\label{fig:D3_help}
\end{figure}

\begin{figure}[htbp]
\centering\smallskip
\includegraphics[height=2.8cm]{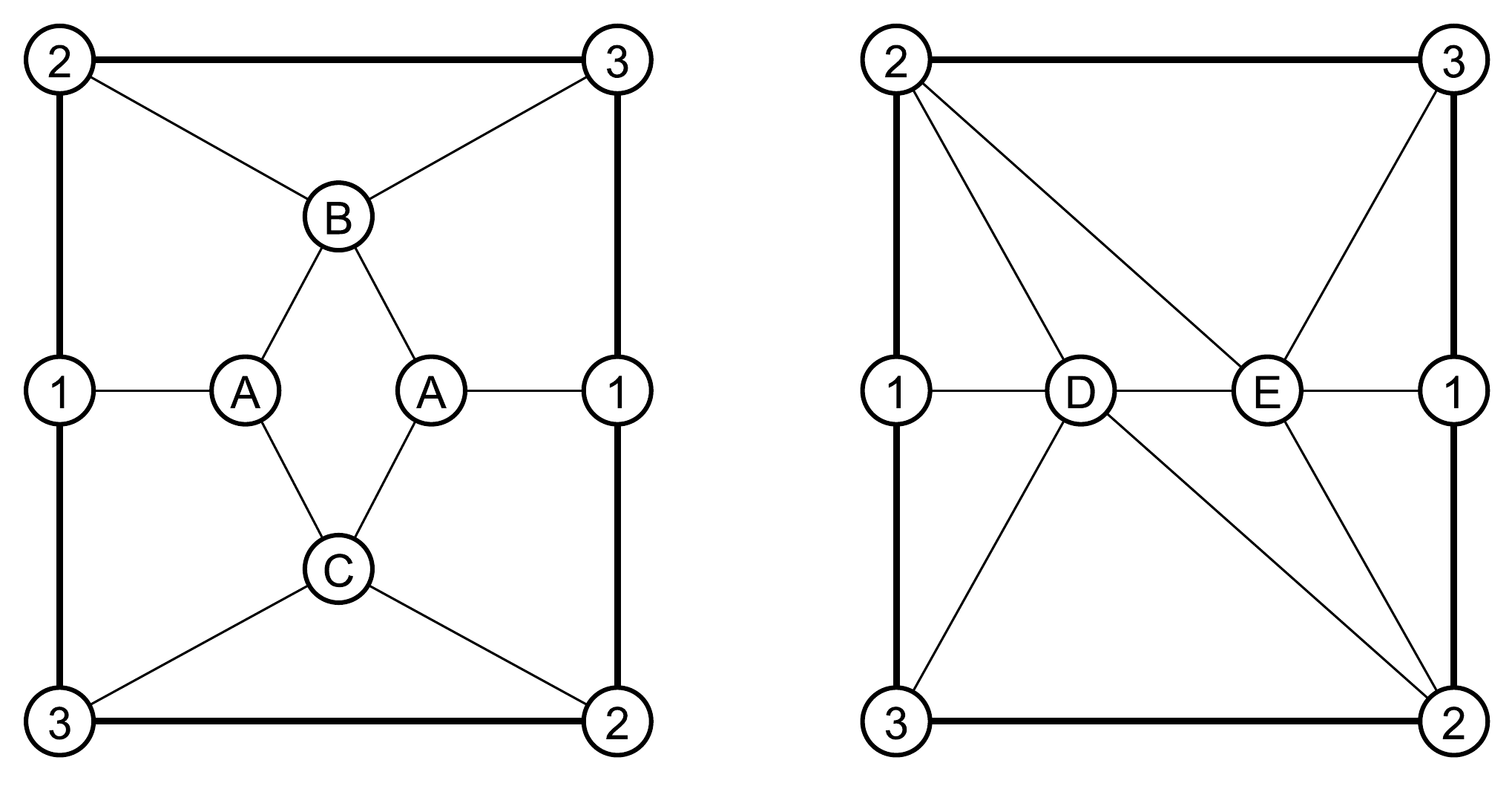}
\caption{The hexagonal cell for connecting two identical components from  Figure~\ref{fig:D3_help} into an
\DD emulator.}
\label{fig:D3_hex}
\end{figure}

\begin{figure}[htbp]
%\begin{flushleft}
\includegraphics[width=12cm]{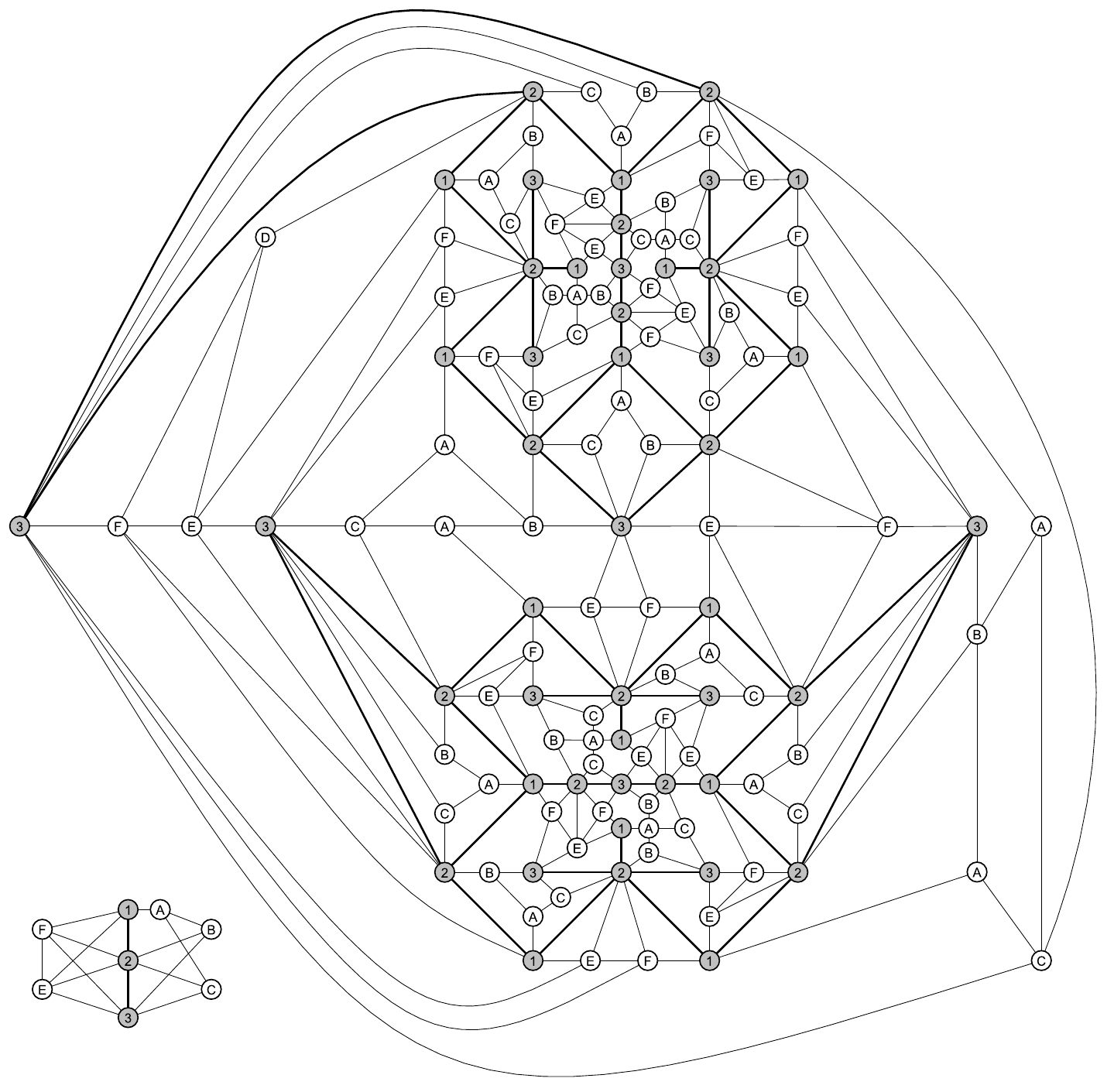}
\caption{The finite planar emulator for \DD.}
\label{fig:D3_final}
%\end{flushleft}
\end{figure}

While building \KK we used two triangular gadgets \textit{AB cell} and
\textit{CD cell}.  If we upgrade one of them in order to get a cell suitable
for the \textit{ABC component} of \DD (a cell with \textit{central vertices} as
outer vertices, satisfying all the inner and the maximum of outer vertices)
and try to establish the emulator in the exactly same way as for \KK (using
an octahedron), we arrive at a single, but fatal obstruction - the vertex 1
cannot be \textit{half-satisfied} by the \textit{ABC cell}, simply because it only has one neighbor among A, B and C.  Therefore no vertex 1 on
the corner of an octahedron can meet its requirements.  Nevertheless, we can
take a \KK emulator as a core and "fix" the properties of such vertices.

Let's have two building blocks as in the \KK case and define two other
supporting cells, \textit{ABC-small-cell} and \textit{DE-small-cell} (see
Figure~\ref{fig:D3_cells}).  These will help us overcome the above
mentioned drawback.  We take one half of the emulator for \KK and upgrade the
two \textit{AB cells} to \textit{ABC cells}.  Now we surround the graph with
four \textit{ABC-small-cells} and four \textit{DE-small-cells}, such that
the \textit{central vertices} have all desired neighbors among $\{1,2,3\}$
and the outer vertices of the new expanded graph have better properties
concerning the number and kind of neighbors (see Figure~\ref{fig:D3_help}). 
We can observe that all vertices labeled with 3 are \textit{ABC-satisfied}
or \textit{DE-satisfied}, but do not have one of the \textit{central
vertices} as a neighbor, vertices labeled with 1 are \textit{ABC-satisfied}
and \textit{DE-half-satisfied} or \textit{DE-satisfied} and vertices with
label 2 are \textit{DE-satisfied} and \textit{ABC-half-satisfied} or
\textit{ABC-satisfied}.  Additionally, if there is an edge between marginal
vertex 1 and 2, then they miss some neighbors from the component of the same
kind.  This fact enables us to copy the whole graph in
Figure~\ref{fig:D3_help} and join it in a smart way with the original graph
to obtain an emulator.  If we identify pairs of vertices with label 3 from
those two copies (such that vertices that have only vertices 1 as a neighbor
will be identified with the vertex which have the opposite problem), we get
4 empty hexagons, whose borders are made of two vertices with label 3, which
are \textit{completely satisfied}, two vertices with label 1, which are
\textit{ABC-satisfied} and \textit{DE-half-satisfied} and two vertices with
label 2, which are \textit{ABC-satisfied} (it holds for two of those
hexagons, for the other two the \textit{ABC and DE-satisfactory} properties
are switched).  Thus these hexagons can be easily filled with the simple
pattern for the border vertices to meet the required conditions (see
Figure~\ref{fig:D3_hex}).

The final emulator for \DD is presented in Figure~\ref{fig:D3_final} and the
above described approach is clearly visible (two identical components derived from an emulator for
\KK connected together). 
% The final picture also doesn't include the 13-edge.

\begin{figure}[tb]
%\begin{flushleft}
\includegraphics[width=12cm]{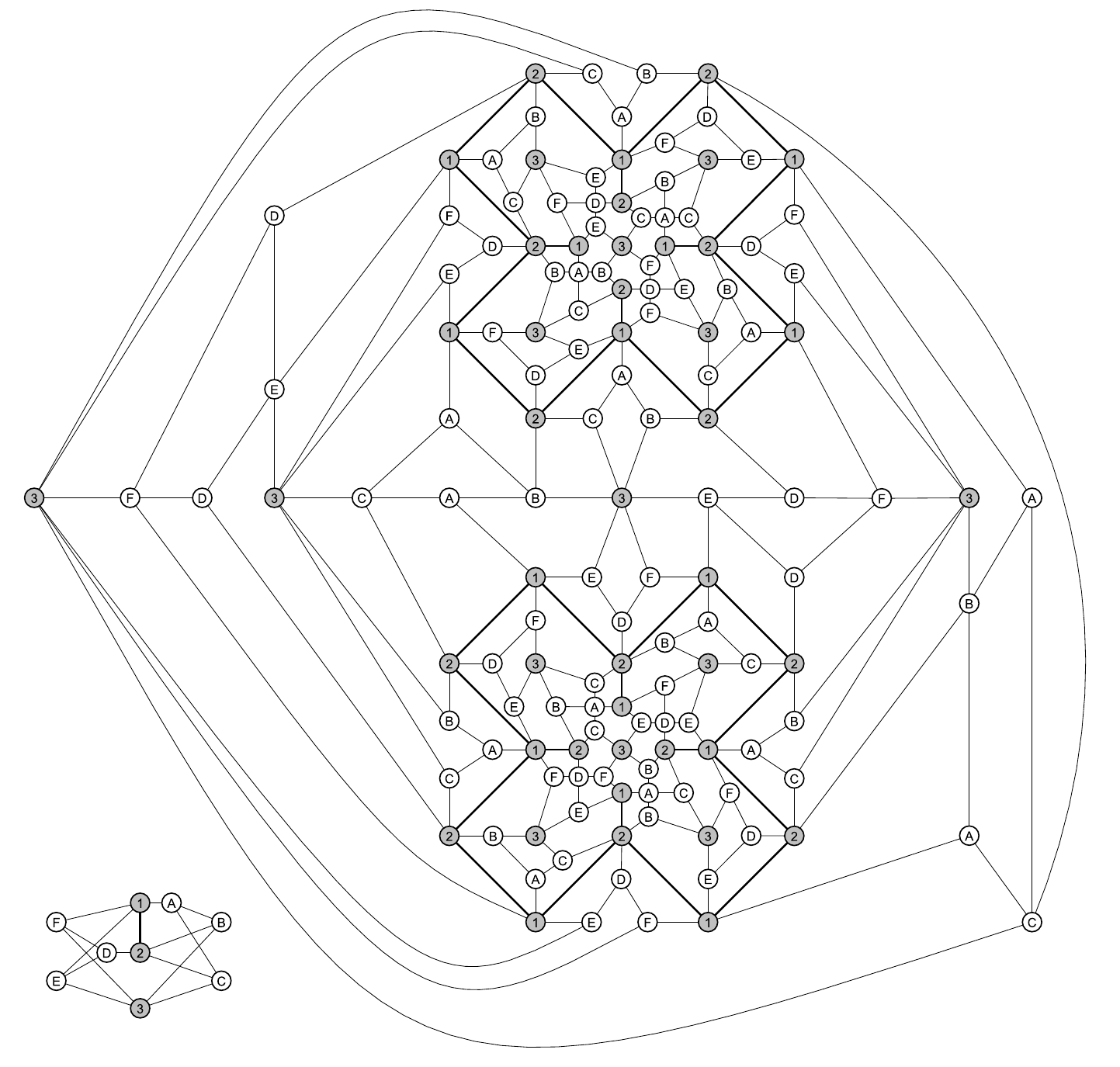}
\caption{The finite planar emulator for \FF}
\label{fig:F1_final}
%\end{flushleft}
\end{figure}

\subsubsection{Planar emulator for \FF.}

The construction of an emulator for \FF follows the same pattern as building
an emulator for \DD.  In fact, the emulator for \FF was found first by the
above mentioned construction and the emulator for \DD resulted from a
simplification of an emulator for \FF, (\DD results from \FF by taking a
{\ensuremath{Y \Delta}} transformation, which is trivial to perform in the
emulator).  Therefore we present only the final emulator picture (see
Figure~\ref{fig:F1_final}).

\subsubsection{Planar emulator for \EEE.}

\begin{figure}[bt]
\centering
\includegraphics[height=5cm]{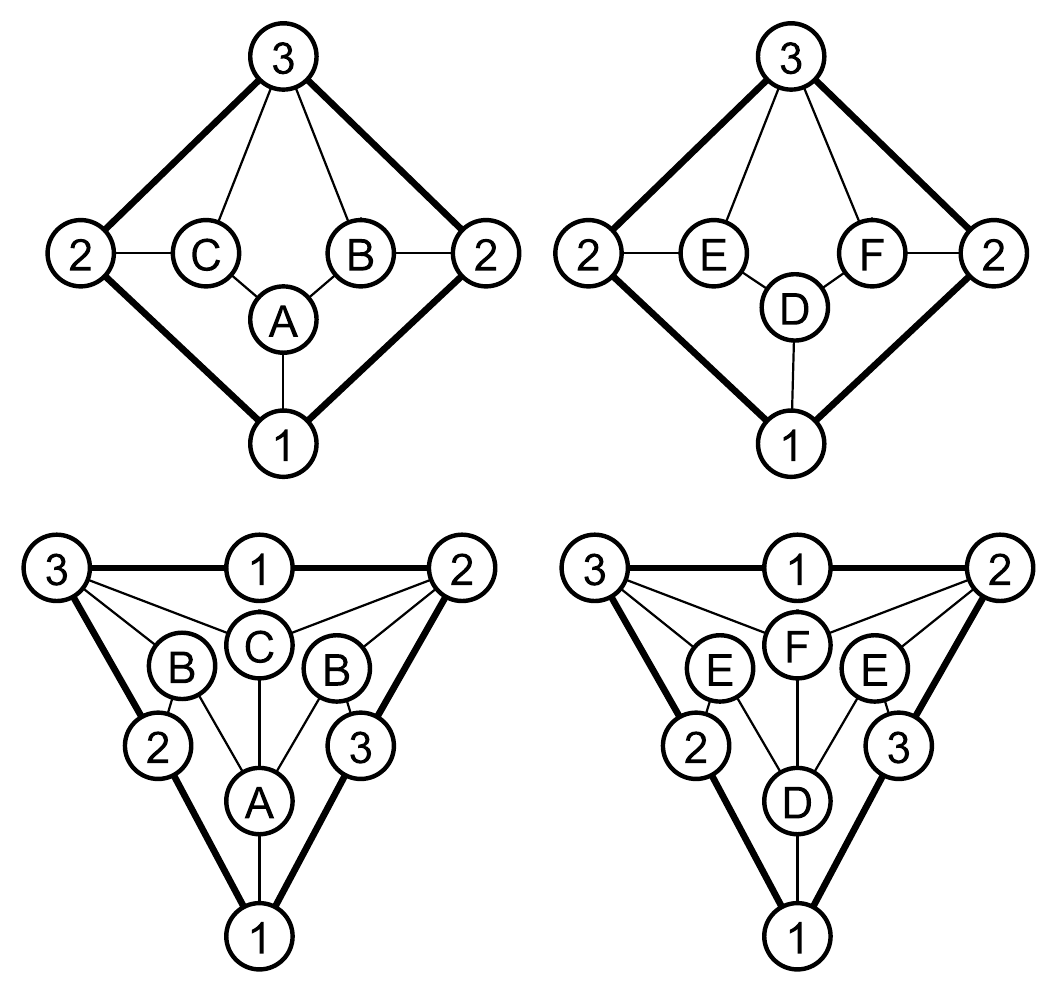}
\caption{Building cells for \EEE emulator}
\label{fig:E5_cells}
%\end{flushleft}
\end{figure}

\begin{figure}[tbp]
\centering
\includegraphics[height=6.5cm]{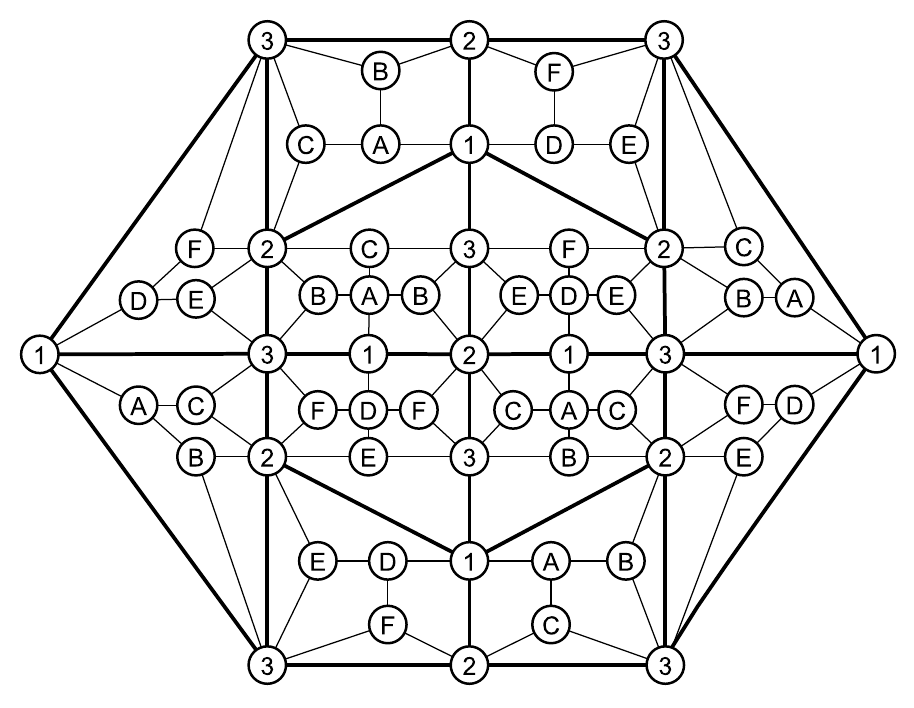}
\caption{The construction for \EEE built upon a ``half'' of a 
	\KK emulator and 8 small cells for the outer vertices to have the best possible properties}
\label{fig:E5_help}
%\end{flushleft}
\end{figure}

In order to obtain an emulator for \EEE, we again take one half of the emulator for \KK
(as in the \DD case) and replace \textit{AB/CD cells} by
\textit{ABC/DEF cells}.  Let's call this construction a \textit{core}.  As
in \DD case, we consider an additional edge 13, which is not present in
\EEE but makes the pictures easier to understand.  Similarly, we use some
smaller additional cells to improve the properties of the outer vertices of
the \textit{core} (see Figure~\ref{fig:E5_cells}).  Since \EEE is slightly
different from \FF (both come from \KK, but two {\ensuremath{\Delta Y}}
transformations took place in different triangle faces, so there exists a
vertex (labeled 2) in \EEE which is adjacent to two new vertices of
degree 3, but is not present in \FF), the use of the supporting
small cells is quite different as well.  We surround the \textit{core} as showed in Figure~\ref{fig:E5_help}.  In this way we arrive to better
properties of the outer vertices of the new graph.  Outer vertices labeled 1 are \textit{completely satisfied}, vertices 2 are
\textit{half-ABC-satisfied} and \textit{half-DEF-satisfied} and vertices 3
are \textit{ABC-satisfied} and \textit{DEF-half-satisfied} or vice versa.

\begin{figure}[tbp]
%\begin{flushleft}
\includegraphics[width=12cm]{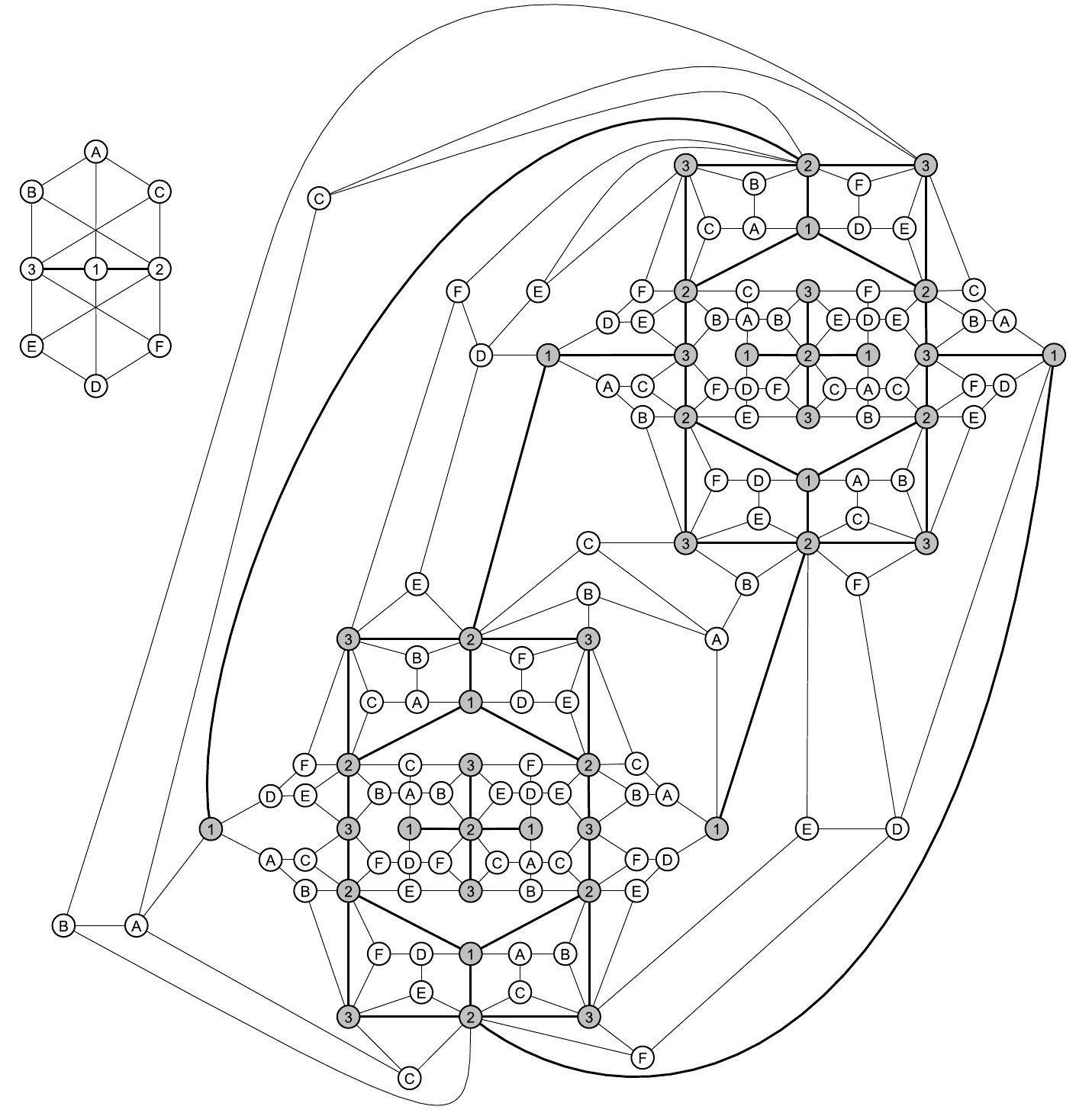}
\caption{The finite planar emulator for \EEE}
\label{fig:E5_final}
%\end{flushleft}
\end{figure}

Now we use a similar tool as in the previous cases of \DD and \FF -- we
duplicate the graph and connect the two copies in a clever way so that
vertices 1 get the desired neighbor 2 as well and four new hexagons are
created.  The vertices surrounding each hexagonal face are missing some
neighbors of the same component.  We fill each face with a \textit{ABC or
DEF cell} to satisfy all the remaining vertices.  Now we have a complete
emulator for \EEE (see Figure~\ref{fig:E5_final}).

% \subsubsection{Emulators for {\ensuremath{\mathcal{D}_2}}, \ensuremath{\mathcal{C}_3} and \ensuremath{\mathcal{B}_7}}
% These three emulators can be obtained from \EE by taking {\ensuremath{Y \Delta}} transformations. According to Preposition~\ref{prop:clYDelta}, the property of having a finite planar
% emulator is closed under taking such transformations and we can easily get finite planar emulators for all \ensuremath{\mathcal{D}_2}}, \ensuremath{\mathcal{C}_3}} and \ensuremath{\mathcal{B}_7}} (

\end{accumulate}

%%%%%%%%%%%%%%%%%%%%%%%%%%%%%%%%%%%%%%%%%%%%%%%%%%%%%%%%%%%%%%%%%%%%%%%%
\section{Structural Search: How far can we go?}
\label{sec:algo}
\label{sec:forbminors}
% \begin{onlyaccum}
% \subsection{Additions to Section~\ref{sec:forbminors}}
% \end{onlyaccum}

Until now, we have presented several newly discovered planar
emulators of nonprojective graphs.
Unfortunately, despite the systematic construction methods introduced in
Section~\ref{sec:construct}, we have got nowhere closer to
a real understanding of the class of planar-emulable graphs.
It is almost the other way round---the new planar emulators evince 
more and more clearly how complicated the problem is.
Hence, we also need to consider a different approach.

The structural search method, on which we briefly report in this section,
is directly inspired by previous \cite{cit:counterex};
we refer to~\cite{cit:martinbc,cit:florida}
for closer details which cannot fit into this paper.
% See also the appendix.

The general idea can be outlined as follows:
If $H$ is a mysterious nonprojective planar-emulable graph,
then $H$ must contain one of the projective forbidden minors, say $F$,
while $F$ cannot be among those forbidden minors not having planar emulators
(Theorems~\ref{thm:2kgraphs},~\ref{thm:K35noemul}).
Now there are basically three mutually exclusive possibilities:
\begin{enumerate}[i.]\vspace{-3pt}
\item
$H$ is a planar expansion of a smaller graph.
% (then we focus on that smaller one).
A graph $H$ is a \emph{planar expansion} of $G$ if it can be obtained by
repeatedly substituting a vertex of degree $\leq3$ in $G$ 
by a planar subgraph with the attachement vertices on the outer face. 
\item
$H$ contains a nonflat $3$-separation.
A separation in a graph is called {\em flat} if one of the sides
has a plane drawing with all the boundary vertices on the outer face.
\item
$H$ is {\em internally $4$-connected}, i.e.,
it is $3$-connected and each $3$-separation in $H$ has one side inducing the
subgraph $K_{1,3}$
(informally, $H$ is $4$-connected up to possible degree-$3$ vertices
with stable neighborhood).
\end{enumerate}

We denote by
$\langle K_7-C_4 \rangle=\{\KK,\DD, \EEE, \FF\}$
% and $\langle K_{1,2,2,2} \rangle= \{\KKK, \ca B_7,\ca C_3,\ca C_4,
% 	\ca D_2,\ca E_2\}$
% the {\em families of $K_7-C_4$ and $K_{1,2,2,2}$}, respectively.
the {\em family of $K_7-C_4$}.
The underlying idea is that all the graphs in a family
are $Y\!\Delta$-transformable to the family's base graph.
Particulary the family of $K_7-C_4$ comprises all the projective
forbidden minors in question which are not internally $4$-connected.
See in \figurename~\ref{fig:32nonproj}.

In the case (i.) above, we simply pay attention to the smaller graph $G$.
In the case (ii.), one can argue that either the projective forbidden minor
$F$ (in~$H$) itself contains a nonflat $3$-separation (so
$F\in\langle K_7-C_4 \rangle$),
or $F$ is internally $4$-connected and $H$ then is not planar-emulable 
(a contradiction).
The former is left for further investigation.
Finally, in the case (iii.) we may apply a so-called {\em splitter theorem}
for internally $4$-connected graphs \cite{cit:splitter},
provided that $F$ is also internally $4$-connected.
This leads to a straightforward computerized search which has a
high chance to finish in finitely many steps,
producing all such desired internally $4$-connected graphs~$H$.

Actually, when the aforementioned procedure was applied to the planar cover
case in \cite{cit:counterex}, the search was so efficient that the outcome
could have been described by hand; giving all $16$ specific graphs that
potentially might be counterexamples to Conjectures~\ref{conj:negami}.
In our emulator case, we get the following:

\begin{theorem}[\cite{cit:florida}]
\label{thm:our_result}
Let $H$ be a nonprojective planar-emulable graph. 
Then, $H$ is a planar expansion of one of specific $175$ 
internally 4-connected graphs, or $H$ contains a minor isomorphic
to a member of $\{\EE,K_{4,5}-4K_2\}\cup \langle K_7-C_4 \rangle$.
\end{theorem}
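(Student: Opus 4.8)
The plan is to mirror the forbidden-minor methodology of \cite{cit:counterex} for covers, substituting the emulator obstructions established above for the cover obstructions. First I would fix a witnessing forbidden minor: since $H$ is nonprojective it contains one of the $35$ projective forbidden minors, say $F$, and since planar-emulability is minor-closed (Proposition~\ref{prop:clminor}), $F$ is itself planar-emulable. Hence Theorems~\ref{thm:2kgraphs} and~\ref{thm:K35noemul} (the ``two disjoint k-graphs'' obstruction and the $K_{3,5}$ obstruction) rule out every forbidden minor except the planar-emulable ones of Theorem~\ref{thm:main} and the still-undecided $K_{4,4}-e$. If $F$ already lies in the escape set $\{\EE,K_{4,5}-4K_2\}\cup\langle K_7-C_4\rangle$, the second alternative of the statement holds and we are done. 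So from here on I assume $H$ contains no escape minor; this forces $F\in\{\KKK,\ca B_7,\ca C_3,\CC,\ca D_2,K_{4,4}-e\}$, all of which are internally $4$-connected (recall that the only planar-emulable forbidden minors failing this are exactly those of $\langle K_7-C_4\rangle$).

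The core is the trichotomy (i)--(iii) on the $3$-separation structure of $H$, organised as an induction on $|V(H)|$. In case~(i), $H$ is a planar expansion of a strictly smaller $G$; undoing one expansion step contracts a flat planar chunk back to a vertex of degree $\le 3$, so $G$ is a minor of $H$ and therefore still planar-emulable, and a small-disk argument in the projective plane shows $G$ is still nonprojective. Since escape-minor-freeness passes to the minor $G$ and planar expansions compose (substituting into an already-substituted vertex), the inductive hypothesis applied to $G$ lifts back to $H$. In case~(ii), $H$ carries a nonflat $3$-separation; here I would show that the nonflat side supplies a subdivided $K_4$ or $K_{2,3}$ whose complement contracts to a nonplanar graph, and pairing this with the obstruction furnished by the internally $4$-connected $F$ on the other side exhibits two vertex-disjoint k-graphs in $H$. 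By Theorem~\ref{thm:2kgraphs} this contradicts planar-emulability, unless $F$ is itself not internally $4$-connected, i.e.\ $F\in\langle K_7-C_4\rangle$, an escape minor already excluded. Thus, under our assumptions, the induction bottoms out precisely at the reduced, internally $4$-connected graphs of case~(iii), each of which is trivially a planar expansion of itself.

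For case~(iii) I would run the splitter-theorem search of \cite{cit:splitter}: seed it with the non-escape internally $4$-connected planar-emulable forbidden minors $\{\KKK,\ca B_7,\ca C_3,\CC,\ca D_2,K_{4,4}-e\}$ and repeatedly apply the permitted internally-$4$-connected expansion steps, pruning any branch that either acquires an escape minor or is certified non-planar-emulable by the two-disjoint-k-graphs test, the $K_{3,5}$ test, or Euler's bound for $K_7$ and $K_{4,4}$. Every surviving graph is recorded; the assertion of \cite{cit:florida} is that this computerized enumeration halts with exactly the $175$ internally $4$-connected graphs named in the statement, and that every internally $4$-connected nonprojective planar-emulable graph without an escape minor occurs among them. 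Together with the reduction of case~(i), this yields that the original $H$ is a planar expansion of one of these $175$ graphs, completing the first alternative.

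The hard part will be the finiteness and correctness of the case-(iii) search rather than any single routine step. The splitter theorem only guarantees that every internally $4$-connected graph containing a fixed such minor is \emph{reachable} by the expansion steps; bounding the search so that it actually terminates requires proving that every sufficiently large branch either develops an escape minor or triggers one of the non-emulability certificates, and this termination bookkeeping is the delicate, computer-assisted heart of \cite{cit:florida}. A secondary subtlety is case~(ii): one must check that a nonflat $3$-separation, in the presence of an internally $4$-connected forbidden minor, genuinely produces two \emph{vertex-disjoint} k-graphs meeting the exact hypotheses of Theorem~\ref{thm:2kgraphs}, which is precisely where the definition of ``flat'' is used.
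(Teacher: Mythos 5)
Your proposal follows essentially the same route as the paper: Section~\ref{sec:forbminors} gives exactly this outline---fix a planar-emulable projective forbidden minor $F$, apply the trichotomy planar expansion\,/\,nonflat $3$-separation\,/\,internally $4$-connected, dispose of the nonflat case via the two-disjoint-k-graphs obstruction unless $F\in\langle K_7-C_4\rangle$, and run the splitter-theorem search of \cite{cit:splitter} from the internally $4$-connected seeds, with $\mathcal{E}_2$ and $K_{4,5}-4K_2$ placed in the escape set precisely because the generation did not terminate for them. The paper itself proves none of this in detail and defers it (including the termination of the computerized search and the exact case-(ii) argument, which you correctly single out as the delicate points) to \cite{cit:florida,cit:martinbc}, so your sketch is at least as complete as the one given here.
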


Up to this point, we have not been successful in finishing the computations 
for the graphs $F= K_{4,5} - 4K_2$ and $\mathcal{E}_2$,
due to the high complexity of the generated extensions.
Yet, we strongly believe that it is possible to obtain finite results
also for those cases, perhaps with the help of an improved generating procedure.
On the other hand, the cases starting with $F\in\langle K_7-C_4\rangle$
will need an alternative procedure, e.g., using so-called ``separation bridging''.
This is subject to future investigations.

% That is because these graphs can generate lots of internally 4-connected graphs that
% are large enough to consume excessive amount of time during the minor check using exponential algorithm (even in
% high parallel), and still sparse enough not to contain any of the minors and to force the generating process to
% continue. However, we strongly believe that it is possible obtain finite results for these two graphs as well,
% and that there is a similar, only finite, set of internally 4-connected graphs emulable in planar for the graphs
% in $\langle K_7-C_4 \langle$. This our believe is formally summarized by Con.~\ref{con:our_conjecture}.

%%%%%%%%%%%%%%%%%%%%%%%%%%%%%%%%%%%%%%%%%%%%%%%%%%%%%%%%%%%%%%%%%%%%%%%%
\section{Conclusion and Further Questions}
\label{sec:conclus}

While our paper presents new and surprising findings about
planar-emulable graphs, the truth is that these findings are often negative
in the sense that they bring {\em more intriguing questions than answers}.
Of course, the fundamental open question in the area is to find a
characterization of the class of planar-emulable graphs in terms of some
other natural (and preferably topological) graph property.
Even coming up with a plausible conjecture (cf.~Conjecture~\ref{conj:negami})
would be of high interest, but, with
our current knowledge, already this seems to be out of reach yet.

Instead, we suggest to consider the following specific (sub)problems:
\begin{itemize}\parskip2pt
\item
Is there a planar emulator of the graph $K_{4,4}-e$?
We think the answer is {\em no}, but are currently unable to find a proof,
e.g.\ extending the arguments of~\cite{cit:k44-e}.
\item
The emulators shown in Section~\ref{sec:construct} suggest that
we can, in some mysterious way, reflect $\Delta Y$-transformations in
emulator constructions (i.e., the converse direction of
Proposition~\ref{prop:clYDelta}).
Such a claim cannot be true in general since, e.g.,
a $Y\!\Delta$-transformation of the graph $\ca D_4$
(\figurename~\ref{fig:32nonproj}) leads to a strict subgraph of $\ca B_3$,
which therefore has a two-fold planar cover while $\ca D_4$ is not
planar-emulable by Theorem~\ref{thm:2kgraphs}.
% is unlikely to hold in general since, say, $K_{4,4}-e$ is
% obtained by two $\Delta Y$ steps from the graph $K_6$ having a planar
But where is the precise breaking point?
\item
The two smallest projective forbidden minors are on $7$
vertices, $K_7-C_4$ (missing four edges of a cycle) 
and $K_{1,2,2,2}$ (missing three edges of a matching).
Both of them, however, have planar emulators while their common supergraph
$K_7$ does not.
What is a minimal subgraph of $K_7$ not having a planar emulator?
Can we, at least, find a short argument that the graph $K_7-e$ has no planar
emulator?
\item
Finally, Conjecture~\ref{conj:negami} can be reformulated in a way that a
graph has a planar cover iff it has a two-fold planar cover.
The results of \cite{cit:counterex} moreover imply that the minimal
required fold number for planar-covers is bounded by a constant.
Although, in the emulator case, the numbers of representatives for each
vertex of the emulated graph differ, there is still a possibility of a fixed
upper bound on them:
Is there a constant $K$ such that every planar-emulable graph $H$ has a
planar emulator with projection $\psi$ such that $|\psi^{-1}(v)|\leq K$ for
all $v\in V(H)$?
A computerized search as in Section~\ref{sec:forbminors} would be of great
help in this task.
\end{itemize}

% \end{document}
%%
%% now Appendix is simply removed................
%%

\begin{onlynoaccum}
\newpage
\section*{\LARGE Appendix}
\def\thesection{A}

\dotheappendixmagic
\end{onlynoaccum}

\end{document}